\DeclareMathOperator*{\E}{\mathbb{E}}
\DeclareMathOperator{\OPT}{\mathrm{OPT}}
\newcommand{\notshow}[1]{{}}
\newcommand{\AutoAdjust}[3]{{ \mathchoice{ \left#1 #2  \right#3}{#1 #2 #3}{#1 #2 #3}{#1 #2 #3} }}
\newcommand{\Xcomment}[1]{{}}
\newcommand{\InParentheses}[1]{\AutoAdjust{(}{#1}{)}}
\newcommand{\InBrackets}[1]{\AutoAdjust{[}{#1}{]}}
\newcommand{\InAngles}[1]{\AutoAdjust{\langle}{#1}{\rangle}}
\newcommand{\vast}{\bBigg@{4}}
\newcommand{\Vast}{\bBigg@{5}}
\newcommand{\val}{\upsilon}
\newcommand{\uti}{\mu}
\newcommand{\utih}{\hat{\mu}}
\newcommand{\type}{t}
\newcommand{\dist}{D}
\newcommand{\bid}{{b}}
\newcommand{\balloc}{\boldsymbol{X}}
\newcommand{\alloc}{X}
\newcommand{\bpay}{\boldsymbol{p}}
\newcommand{\pay}{p}
\newcommand{\str}{{s}}
\newcommand{\indic}{\mathbbm{1}}
\newcommand{\half}{\frac{1}{2}}
\def\mymathhyphen{{\hbox{-}}}
\newcommand{\rev}{\mathrm{Rev}}
\newcommand{\efrev}{\mathrm{EF\mymathhyphen Rev}}
\newcommand{\rprev}{\mathrm{RPRev}}
\newcommand{\prev}{\mathrm{PRev}}
\newcommand{\core}{\textsc{Core}}
\newcommand{\coreh}{\widehat{\textsc{Core}}}
\newcommand{\tail}{\textsc{Tail}}
\newcommand{\single}{\textsc{Single}}
\newcommand{\M}{M}
\newcommand{\SFA}{\text{S1A}}
\newcommand{\SAP}{\text{SAP}}
\newcommand{\auc}{\mathcal{A}}
\newcommand{\SAprop}{$c$-efficient}
\newcommand{\SApropn}{$c$-efficiency}
\newcommand{\CEtuple}{$\left(\auc, \str,\dist, \{v_i\}_{i\in[n]}\right)$}
\newenvironment{prevproof}[2]{\noindent {\bf {Proof of {#1}~\ref{#2}:}}}{$\hfill \blacksquare$}
\newtheorem{theorem}{Theorem}[section]
\newtheorem{corollary}[theorem]{Corollary}
\newtheorem{lemma}[theorem]{Lemma}
\newtheorem{remark}{Remark}
\newtheorem{example}{Example}
\newtheorem{definition}{Definition}[section]
\title{Simultaneous Auctions are Approximately Revenue-Optimal for Subadditive Bidders}
\date{}
\author{Yang Cai\thanks{Yang Cai is supported by a Sloan Foundation Research Fellowship and the National Science Foundation Award CCF-1942583 (CAREER).} \\Yale University, USA\\yang.cai@yale.edu\\
\and Ziyun Chen\\Tsinghua University, China\\chenziyu20@mails.tsinghua.edu.cn\\
  \and Jinzhao Wu\thanks{Jinzhao Wu is supported by a Research Fellowship from the Center for Algorithms, Data, and Market Design at Yale (CADMY).}\\ Yale University, USA\\ jinzhao.wu@yale.edu}
\begin{document}

\maketitle

\begin{abstract}

We study revenue maximization in multi-item auctions, where bidders have subadditive valuations over independent items~\cite{rubinstein_simple_2015}. Providing a simple mechanism that is approximately revenue-optimal in this setting is a major open problem in mechanism design~\cite{cai_simple_2017}. In this paper, we present the first \emph{simple mechanism} whose revenue is at least a \emph{constant fraction} of the optimal revenue in multi-item auctions with subadditive bidders. 

Our mechanism is a simultaneous auction that incorporates either a personalized entry fee  or a personalized reserve price per item. We prove that for any simultaneous auction that satisfies \SApropn -- a new property we propose, its revenue is at least {an} $O(c)$-approximation to the optimal revenue. We further show that both the \emph{simultaneous first-price} and the \emph{simultaneous all-pay auction} are $1\over 2$-efficient. Providing revenue guarantees for non-truthful simple mechanisms, e.g., simultaneous auctions, in multi-dimensional environments has been recognized by Roughgarden et al.~\cite{roughgarden_price_2017} as an important open question. Prior to our result, the only such revenue guarantees are due to Daskalakis et al.~\cite{daskalakis_multi-item_2022} for bidders who have additive valuations over independent items.  Our result significantly extends the revenue guarantees of these non-truthful simple auctions to settings where bidders have {combinatorial} valuations.
\end{abstract}
     \thispagestyle{empty}
\addtocounter{page}{-1}
\newpage

\section{Introduction}
\label{sec:Intro}

Revenue-maximization in auctions is a central problem in both Economics and Computer Science due to its numerous applications in markets and online platforms. While Myerson's seminal work shows that a simple mechanism achieves the optimal revenue in single-item auctions~\cite{myerson_optimal_1981}, characterizing the revenue-optimal mechanism in multi-item settings has been notoriously difficult both analytically and algorithmically. Indeed, it has been shown that even finding (approximately) optimal multi-item mechanisms can require description complexity that is exponentially in the number of items, even for a single buyer~\cite{dughmi_sampling_2014,daskalakis_strong_2017,hart_selling_2019,babaioff_menu-size_2022}. Similarly, computing the revenue-optimal multi-item mechanism is known to be intractable even for basic settings~\cite{cai_reducing_2013,daskalakis_complexity_2014,chen_complexity_2015}. Furthermore, the revenue-optimal multi-item mechanisms may  exhibit several counter-intuitive properties that do not arise in single-item settings~\cite{briest_pricing_2010,hart_maximal_2015,hart_selling_2019}. To sum up, the optimal mechanism in multi-item settings is highly complex, difficult to characterize, and intractable to find.

Motivated by the highly complex nature of the optimal mechanism in multi-item settings, a recent line of work in algorithmic mechanism design~\cite{chawla_algorithmic_2007,chawla_multi-parameter_2010,alaei_bayesian_2011,hart_approximate_2012,kleinberg_matroid_2012,cai_simple_2013,babaioff_simple_2014,yao_n--1_2015,rubinstein_simple_2015,cai_duality-based_2021,chawla_mechanism_2016,cai_simple_2017, cai_simple_2019-2,dutting_olog_2020,cai_simple_2021, daskalakis_multi-item_2022,cai_computing_2022} investigate the inherent tradeoff between optimality and simplicity. In other words, \emph{can we use simple and practical mechanisms to approximate the optimal revenue in multi-item auctions}? The line of work mentioned above provide a positive answer in surprisingly general settings, under the standard item-independence assumption. In a beautiful work, D\"{u}tting et al.~\cite{dutting_olog_2020} show that a simple mechanism, known as sequential two-part tariff, can extract an $\Omega\left({1\over\log\log m}\right)$ fraction of the revenue when bidders have subadditive valuations, where $m$ is the number of items in the auction. A valuation $v:2^{[m]}\to \mathbb{R}_{\geq 0}$ is subadditive, if $v(S\cup T)\leq v(S)+v(T)$ for all sets of items $S,T\subseteq [m]$. Subadditivity captures the property that the items are not complements to each other, i.e., the items are not more valuable together than they are apart. This is a natural and important property in numerous economic environments. Hence, the following has been recognized as a fundamental open question:
\begin{align*}\label{eq:question1}
&\emph{Can we design simple mechanisms to achieve an \textbf{$O(1)$-approximation} to the optimal revenue}\\ &\quad\quad\quad\quad\quad\emph{when the bidders have \textbf{subadditive valuations} under the item-independence assumption?} \tag{*}
\end{align*}

Aside from question~\eqref{eq:question1}, other gaps remain in our understanding of the tradeoff between optimality and simplicity. In particular, existing results almost exclusively focus on truthful auctions, while many of the practical auctions are simple, but not truthful. For instance, the first-price auction is the most common type of mechanism in practice. In the display-ads market, arguably the most significant application of auctions in modern commerce, first-price auctions are adopted by every major exchange to allocate ad-displaying slots. Revenue guarantees for these simple non-truthful auctions have been scarce. Due to the ubiquity of such auctions, providing revenue guarantees for non-truthful simple mechanisms, especially in multi-item environments, has been recognized by Roughgarden et al.~\cite{roughgarden_price_2017} as an important open question:
\begin{align*}\label{eq:question2}
&\emph{Can we provide revenue guarantees for simple but \textbf{non-truthful} mechanisms in multi-item auctions}\\&\quad\quad\quad\quad\quad\quad\quad\quad\quad\emph{that match the guarantees for simple and truthful mechanisms?} \tag{**}
\end{align*}

\noindent Hartline et al.~\cite{hartline_price_2014} show that the first-price auction with reserve price (or minimum bid) achieves approximately optimal revenue in the single-item setting. Prior to our work, the only revenue guarantee for non-truthful auctions in multi-item settings is due to Daskalakis et al.~\cite{daskalakis_multi-item_2022}. They show that when the bidders have additive valuations, simultaneous auctions with entry fees or reserved prices can extract a constant fraction of the optimal revenue. 

We make significant progress in addressing both questions~\eqref{eq:question1} and~\eqref{eq:question2} in this paper. Our main result shows that the \emph{simultaneous first-price auction} (or the \emph{simultaneous all-pay auction}) with appropriately devised entry fees or reserve prices can achieve a constant fraction of the optimal revenue when bidders have subadditive valuations.

\subsection{Our Contributions}\label{sec:contribution}
We focus on the revenue guarantees of simultaneous auctions in this paper. We assume there are $n$ bidders and $m$ items. A simultaneous auction consists of $m$ parallel single-item auctions $\left\{\auc_j\right\}_{j\in [m]}$, one for each item. We consider two variants of simultaneous auctions:
\vspace{-.1in}
\begin{description}
    \item[Simultaneous auctions with personalized entry fees:] Each bidder $i$ is asked to pay a \emph{fixed entry fee} $\textsc{Ent}_i$ up front. The mechanism then proceeds to run the simultaneous auction, that is, run $m$ parallel single-item auctions. Only the bidders who pay the entry fees can participate in these single-item auctions. See Mechanism~\ref{alg:entryfee}~for details.\vspace{-.1in}

    \item[Simultaneous auctions with personalized reserve prices:] There is a reserve price $r_{ij}$ for each bidder $i$ and each item $j$. The mechanism runs the simultaneous auction. For each item $j$ that bidder $i$ wins, they need to pay the higher between their payment decided by the single-item auction $\auc_j$ and $r_{ij}$. See Mechanism~\ref{alg:reserveprice} for details.
\end{description}
\vspace{-.1in}
We now state our main result.

\medskip\begin{minipage}{0.94\textwidth}\textbf{Main Contribution:}
We identify a crucial property of simultaneous auctions $\auc=\left\{\auc_j\right\}_{j\in [m]}$ that we refer to as \SApropn, where $c$ is a positive real number (\Cref{def:good}). We show that, if the bidders have subadditive valuations over independent items (\Cref{def:subadditive}), for any~\SAprop~simultaneous auction $\auc$, there exists entry fees $\{\textsc{Ent}_i\}_{i\in [n]}$ and reserve prices $\{r_{ij}\}_{i\in [n],j\in[m]}$ such that the better of (i) $\auc$ with personalized entry fees $\{\textsc{Ent}_i\}_{i\in [n]}$ and (ii) $\auc$  with personalized reserve prices $\{r_{ij}\}_{i\in [n],j\in[m]}$ is an $O(c)$-approximation to the optimal revenue (\Cref{thm:main}). Next, we prove that both the simultaneous first-price auction and the simultaneous all-pay auction are $1\over 2$-efficient (\Cref{lem:firstprice,,lem:allpay} ). Hence, by incorporating with entry fees or reserve prices, the simultaneous first-price auction (or the simultaneous all-pay auction) is an $O(1)$-approximation to the optimal revenue (\Cref{cor:firstprice,,cor:allpay}). See \Cref{table:comparison} for comparison with other simple mechanisms.
\end{minipage}
\medskip

A few remarks are in order. Firstly, our benchmark is the optimal revenue achievable by any Bayesian Incentive Compatible mechanism (or equivalently achievable at any Bayes-Nash equilibrium of any mechanism, truthful or not). This is the standard benchmark considered in the simple vs. optimal literature and used in all~previous~results. Secondly, our result makes the standard item-independent assumption that is used in essentially all previous work regarding the tradeoff between simplicity and optimality in multi-item auctions for both truthful and non-truthful mechanisms~\cite{chawla_algorithmic_2007,chawla_multi-parameter_2010,alaei_bayesian_2011,hart_approximate_2012,kleinberg_matroid_2012,cai_simple_2013,babaioff_simple_2014,yao_n--1_2015,rubinstein_simple_2015,cai_duality-based_2021,chawla_mechanism_2016,cai_simple_2017, cai_simple_2019-2,dutting_olog_2020, daskalakis_multi-item_2022,cai_computing_2022}. 
{Without assuming item-independence, \cite{hart_selling_2019} and \cite{briest_pricing_2015} suggest that no mechanism with bounded menu complexity, a basic requirement for simple mechanisms, can offer any finite approximation guarantees, even when selling only two or three correlated items to a single buyer.~\footnote{{Our mechanism becomes either selling the grand bundle or selling the items separately when there is a single buyer, and hence has bounded menu complexity.}}}
 Finally, our approach fails to extend to simultaneous second-price auctions. We present some formal barriers in \Cref{ex:S2A}. See~\Cref{sec:mainthm} for a more detailed discussion. It is an interesting open question to understand whether some variant of the simultaneous second-price auction is approximately revenue-optimal in our setting.

\vspace{-.15in}
\paragraph{Revenue guarantees for a non-truthful auction.} {We provide details on how we evaluate the revenue of simultaneous auctions.} For the simultaneous auction with personalized reserved prices, our result holds even if the revenue is evaluated at the worst Bayes-Nash equilibrium. For the simultaneous auction with personalized entry fees, the answer is more {nuanced}. We show that for any Bayes-Nash equilibrium $\str$ of the original simultaneous auction, there exists a set of entry fees $\{\textsc{Ent}_i\}_{i\in[m]}$ such that (a) the set of Bayes-Nash equilibria remains unchanged 
in the new simultaneous auction with entry fees, 
and (b) our result holds for the revenue generated at {equilibrium} $\str$ in the new simultaneous auction with entry fees. Note that this is the same type of guarantee provided in~\cite{daskalakis_multi-item_2022} but for additive valuations. We believe such a guarantee is desirable in practice. When the original simultaneous auction has a unique Bayes-Nash equilibrium, our new mechanism inherits the uniqueness. When there are multiple equilibria, the auctioneer can first deploy the original simultaneous auction and wait until the bidders have reached an equilibrium $\str$. The auctioneer can now incorporate the set of entry fees {tailored for the equilibrium} $\str$. As our result suggests, the new mechanism 
 still admits $\str$ as a Bayes-Nash equilibrium and can now provide strong revenue guarantees. It seems unreasonable for the bidders to abandon $\str$ and play a different equilibrium in the new mechanism, while they choose to play according to $\str$ in the original one.

\paragraph{Our Techniques.}
Our result is based on a combination of the \SApropn~property for simultaneous auctions and the duality framework developed in~\cite{cai_duality-based_2021,cai_simple_2017}. Roughly speaking, a simultaneous auction is \SAprop, if for any Bayes-Nash equilibrium $\str$, any bidder $i$, and {\emph{any subset of items $S$}}, bidder $i$'s maximum attainable utility from items in $S$ plus the revenue generated from items in $S$ is at least $c$ times $i$'s value for the bundle $S$. It is not hard to see that if a simultaneous auction is \SAprop, then its welfare is at least $c$ times the optimal welfare. What we show is that this desirable property is also useful in producing revenue guarantees. 
Furthermore, we provide a simple but crucial modification for the double-core decomposition in the duality framework, which is  a most critical and challenging step of the entire analysis. This modification  allows us to extend the duality-based analysis to simultaneous auctions and will likely find further applications. 
With these two innovations, we avoid the type-loss tradeoff analysis, which is the major technical hurdle in~\cite{daskalakis_multi-item_2022}, and provide a modular and arguably simpler analysis for the significantly more general setting with subadditive bidders.

\begin{table}[H] 
\centering 
\vspace{-.1in}
\caption{A Summary of Approximation Results for Multi-Dimensional Revenue Maximization\\{\footnotesize S1A = Simultaneous First-Price Auction, S2A = Simultaneous-Second Price Auction, SAP = Simultaneous All-Pay Auction}}
\label{table:comparison}\begin{tabular}{|c|c|c|c|}
\hline
 &
  \begin{tabular}[c]{@{}c@{}}Sequential Two-Part\\ Tariff Mechanism\end{tabular} &
  \begin{tabular}[c]{@{}c@{}}S2A with \\ Entry Fees / Reserve Prices\end{tabular}&
  \begin{tabular}[c]{@{}c@{}}S1A, SAP with \\ Entry Fees / Reserve Prices\end{tabular}\\ \hline
\rule{0pt}{2.2ex} { Additive }    & $O(1)$\cite{chawla_mechanism_2016,cai_simple_2017}         &$O(1)$ {\cite{yao_n--1_2015,cai_duality-based_2021, daskalakis_multi-item_2022}}   & $O(1)$ for regular distributions\cite{daskalakis_multi-item_2022}                     \\ \hline
\rule{0pt}{2.2ex} { XOS }         & $O(1)$\cite{cai_simple_2017}         & $?$   & {\color{red} $O(1)$}(This paper)  \\ \hline
\rule{0pt}{2.2ex} { Subadditive } & $O(\log \log m)$\cite{dutting_olog_2020} & $?$  & {\color{red} $O(1)$}(This paper) \\ \hline
\end{tabular}
\end{table}
\vspace{-.15in}

{\paragraph{Approximate revenue monotonicity.} Building on our constant factor approximation, we establish approximate revenue monotonicity for subadditive bidders. This work generalizes the findings of Yao~\cite{yao_revenue_2017}, who demonstrate approximate revenue monotonicity for XOS bidders. The formal statement of the theorem and the accompanying proof can be found in Appendix~\ref{appendix:revmono}.}

\subsection{Additional Related Work}\label{sec:related work}

\paragraph{Simple vs. Optimal.} As we mentioned earlier, the majority of results in the simple vs. optimal literature focus on {truthful mechanisms}. Indeed, most of the designed mechanisms are \emph{dominant strategy incentive compatible}, providing very strong incentive guarantees for the bidders. However, to provide dominant strategy incentive compatibility, the mechanisms are sequential. As noted in~\cite{akbarpour_credible_2018}, the multi-round nature of these sequential mechanisms can present implementation difficulties that static mechanisms, such as simultaneous auctions, avoid.  Empirical evidence~\cite{athey_comparing_2011} also suggests that  static mechanisms can be conducted rapidly and asynchronously, thus offering several implementation benefits, which may explain the prevalence of static mechanisms in the real world.

\vspace{-.15in}
\paragraph{Algorithms for finding nearly revenue-optimal mechanisms.} There is a line of work focusing on efficient algorithms to find a $(1-\varepsilon)$-approximation of the optimal revenue in multi-item auctions~\cite{cai_optimal_2011,alaei_bayesian_2011,cai_optimal_2012,cai_algorithmic_2012,cai_reducing_2013,cai_understanding_2013,kothari_approximation_2019,cai_efficient_2021}. However, the computed mechanisms may not be simple, and might be too complicated to implement in practice.
\vspace{-.15in}
\paragraph{Welfare guarantees of simultaneous auctions.} A fruitful line of work aim to approximate the welfare in combinatorial auctions using simultaneous auctions. A non-exhaustive list includes~\cite{christodoulou_bayesian_2008,bhawalkar_welfare_2011,hassidim_non-price_2011,feldman_simultaneous_2013,dutting_valuation_2013,jose_correa_constant_2023}. Feldman et al. \cite{feldman_simultaneous_2013} show that, when bidders have subadditive valuations, the Price of Anarchy is $2$ for the simultaneous first-price auction, and $4$ for the simultaneous second-price auction under the no-overbidding assumption. Recently, Correa and Cristi~\cite{jose_correa_constant_2023} show that the Price of Anarchy is $6+\varepsilon$ for {a variant of the simultaneous all-pay auction.} We provide constant factor approximation to the optimal revenue using simultaneous auctions. Our analysis for the \SApropn~ property is inspired by~\cite{feldman_simultaneous_2013}.

\section{Preliminaries}
\label{sec:prelim}

In this paper, we focus on revenue maximization in simultaneous auctions with $n$ bidders and $m$ items. We represent the set of all $n$ bidders using $[n]$ and the set of all $m$ items with $[m]$. 

\vspace{-.15in}
\paragraph{Types and Valuation Functions.}

For each bidder $i$, its type $\type_i = \InAngles{\type_{ij}}_{j=1}^m$ is a $m$-dimensional vector where $\type_{ij}$ is the private information of bidder $i$ about item $j$. Each $\type_{ij}$ is drawn independently from the distribution $\dist_{ij}$. {The support of $\dist_i=\bigtimes_j \dist_{ij}$ and $\dist_{ij}$ are represented by $T_i$ and $T_{ij}$.} When bidder $i$ has a type $\type_i$, {their} valuation for a set of items $S$ is denoted as $\val_i(\type_i, S)$. {We  refer to $v_i(\cdot,\cdot)$ as bidder $i$'s valuation function that takes both $i$'s type and a set of items as input. We refer to $v_i(t_i,\cdot)$ as a valuation of bidder $i$, which only takes a set of items as input.}

{Throughout the paper, we assume that each bidder $i$'s distribution of valuation satisfies \Cref{def:subadditive}. This is colloquially referred to as bidder $i$'s valuation is \emph{subadditive over independent items}.} \Cref{def:subadditive} is proposed in~\cite{rubinstein_simple_2015} and has been adopted in essentially every work that studies revenue guarantees for simple mechanisms with subadditive bidders~\cite{cai_simple_2017, cai_learning_2017,dutting_olog_2020}.

\begin{definition}[Subadditive over independent items \cite{rubinstein_simple_2015}]
\label{def:subadditive}
A bidder $i$'s distribution $\mathcal{V}_i$ of their valuation  $\val_i\InParentheses{\type_i,\cdot}$ is subadditive over independent items if their type $\type_i$ is drawn from a product distribution $\dist_i = \bigtimes_j \dist_{ij}$ and $v_i(\cdot,\cdot)$ satisfies the following properties:
\begin{itemize}
    \item $\val_i\InParentheses{\cdot, \cdot}$ \textbf{has no externalities}. For each type $\type_i$ and any subset of items $S \subseteq [m]$, $\val_i(\type_i, S)$ relies solely on $\InAngles{\type_{ij}}_{j\in S}$. More formally, for any $\type_i, \type_i'$ such that $\type_{ij} = \type'_{ij}$ for all $j\in S$, $\val_i\InParentheses{\type_i, S} = \val_i\InParentheses{\type_i', S}$.
    \item $\val_i\InParentheses{\cdot,\cdot}$ \textbf{is monotone}. For any type $\type_i$ and $U\subseteq V \subseteq [m]$, $\val_i(\type_i, U) \leq \val_i(\type_i, V)$.
    \item $\val_i\InParentheses{\cdot,\cdot}$ \textbf{is subadditive}. For all $\type_i$ and $U, V\subseteq [m]$, $\val_i\InParentheses{\type_i, U\cup V} \leq \val_i\InParentheses{\type_i, U} + \val_i\InParentheses{\type_i, V}$.
\end{itemize}
    Similar to previous work, we use $V_i\InParentheses{\type_{ij}}$ to denote $\val_i\InParentheses{t_i, \AutoAdjust{\{}{j}{\}}}$ since it only depends on $\type_{ij}$.
\end{definition}

We provide an example in Appendix~\ref{appendix:example} to show how \Cref{def:subadditive} captures standard settings with independent items as special cases.

{An important property that we use in the analysis is the \emph{Lipschitzness} of the valuation function.  \begin{definition}\label{def:Lipschitz}
A valuation function $v(\cdot,\cdot)$ is \textbf{$\ell$-Lipschitz} if for any type $t,t'\in T$, and set $X,Y\subseteq [m]$,
$$\left|v(t,X)-v(t',Y)\right|\leq \ell\cdot \left(\left|X\Delta Y\right|+\left|\{j\in X\cap Y:t_j\not=t_j'\}\right|\right),$$ where $X\Delta Y=\left(X\backslash Y\right)\cup \left(Y\backslash X\right)$ is the symmetric difference between $X$ and $Y$.
\end{definition}}

\vspace{-.15in}
\paragraph{Combinatorial Auctions} We consider combinatorial auctions with $n$ bidders and 
$m$ items. In a combinatorial auction, each bidder observes their type $\type_i$ 
and chooses their action (e.g., a bid to submit) according to their type. We allow the bidders to use mixed strategies, that is, bidder $i$'s action $\bid_i$ is drawn from a distribution $\str_i(\type_i)$ that maps $i$'s type $\type_i$ to a distribution over possible actions. 
Given the action profile $\bid = \InParentheses{\bid_1, \bid_2,\cdots, \bid_n}$, the (possibly random) outcome of a combinatorial auction consists of a feasible allocation $\balloc\InParentheses{\bid} = (\alloc_1(\bid), \alloc_2(\bid),\cdots,  \alloc_n(\bid))\in 
\left({2^{[m]}}\right)^n$, where $\alloc_i(\bid)$ is set of items allocated to bidder $i$, and payments $\bpay(\bid) = (\pay_1(\bid), \pay_2(\bid),\cdots,\pay_n(\bid))$ for the bidders.
$u_i\InParentheses{t_i, b} = \E\InBrackets{\val_i\InParentheses{t_i, \alloc_i(\bid)} - \pay_i(\bid)}$ denotes the utility of bidder $i$ in the combinatorial auction when {their} type is $t_i$ under the action profile $\bid$.

\vspace{-.15in}

\paragraph{Simultaneous Auctions} 

A simultaneous auction consists of $m$ parallel single-item auctions $\left\{\auc_j\right\}_{j\in [m]}$. The action $\bid_i$ chosen by bidder $i$ is an $m$-dimensional vector in which the $j$-th coordinate $\bid_i^{(j)}$ represents the bid of bidder $i$ for item $j$. Let $\bid^{(j)} = \InParentheses{\bid_1^{(j)},\bid_2^{(j)}, \cdots, \bid_n^{(j)}}$ represent the collection of bids for item $j$. Each single-item auction $\auc_j$ runs independently to determine the allocation of item $j$ and each bidder's payment in $\auc_j$ according to $\bid^{(j)}$.  We use $\alloc_i^{(j)}(\bid^{(j)}) \subseteq \{j\}$ to denote the item that bidder $i$ gets and $\pay_i^{(j)}\InParentheses{\bid^{(j)}}$ to denote bidder $i$'s payment in the $j$-th auction. {Notice that $\alloc_i^{(j)}$ and $\pay_i^{(j)}$ might be random as the auction $\auc_j$ is allowed to be randomized.} 
 In a simultaneous auction, bidder $i$ receives all items won in each single-item auction $\auc_j$, i.e., $\alloc_i(\bid) =\bigcup_{j\in[m]}\alloc_i^{(j)}(\bid^{(j)})$, and their overall-payment $\pay_i(\bid) = \sum_{j\in[m]}\pay_i^{(j)}(\bid^{(j)})$ amounts to the sum of payments across the $m$ concurrent single-item auctions. {We also provide bidders with an additional action, denoted $\perp$, allowing them to abstain from bidding in a single-item auction.} {Bidding $\perp$ signifies that the bidder withdraws from competing for the item and incurs no payment for it.}

In this paper, we study two simultaneous auctions -- the \emph{simultaneous first-price auction (\SFA)} and the \emph{simultaneous all-pay auction (\SAP)}. {Both auctions} satisfy the \emph{highest bid wins} property, which states that, in each single-item auction, item $j$ is allocated to the bidder who submits the highest bid for $j$. In a \SFA, only the winning bidder for each item pays their bid; in a \SAP, all bidders pay their bids regardless of the outcome.

We formally define the notion of Bayes-Nash equilibrium in \Cref{appendix:example}. Let $\str$ be {a} Bayes-Nash equilibrium of auction $\auc$ w.r.t. distribution $\dist$, the expected revenue at equilibrium $\str$ is defined as
\[\rev^{(\str)}_{\dist}(\auc) = \sum_{i\in [n]}\E_{\type\sim \dist\atop \bid\sim \str(\type)}\InBrackets{\pay_i(\bid)}.\]

If $\auc$ is a simultaneous auction, we  use $\rev^{(\str)}_{\dist}(\auc, S)$ to denote the revenue of $\auc$ collected from items in $S$ at {equilibrium} $\str$:
\[\rev^{(\str)}_{\dist}(\auc, S) = \sum_{\substack{i\in [n]\\ j\in S}}\E_{\substack{\type\sim \dist\\\bid\sim \str(\type)}}\InBrackets{\pay_i^{(j)}(\bid)}\]
Finally, we define $\OPT(\dist)$ as the optimal revenue achievable by any randomized and Bayesian incentive compatible (BIC) mechanisms with respect to type distribution $D$ and valuation functions $\{v_i\}_{i\in[n]}$. Due to the revelation principle, we know that the highest revenue achievable by any auction at an Bayes-Nash equilibrium is {also} $\OPT(D)$.

\section{Our Mechanisms and Main Theorem}
\label{sec:mainthm&mecha}

\subsection{Our Mechanisms}
\label{subsec:ourmech}
We first introduce the two variations of simultaneous auctions that are used in our main theorem.

\vspace{-.15in}
\paragraph{{Simultaneous Auctions} with Entry Fees.}Our version of simultaneous auctions with entry fees is nearly identical to the one proposed by Daskalakis et al.~\cite{daskalakis_multi-item_2022}. For each bidder $i$, there is a personalized entry fee $e_i\in \mathbb{R}_{\geq 0}$, which does not depend on the bids submitted by the other bidders. Note that $e_i$ could depend on other parameters of the problem, e.g., the type distribution $\dist$, the valuation functions $\{v_i\}_{i\in[n]}$, and the equilibrium $\str$ { that we hope the bidders play}. The entry fee is charged with probability $1 - \delta$, and each bidder can decide whether to pay the entry fee to participate in the auction.

\begin{algorithm}[H]
\SetAlgorithmName{Mechanism}
~~\textbf{Input}: A simultaneous auction $\auc = (\alloc, \pay)$\, and $\{e_i\}_{i\in [n]} \in \mathbb{R}_{\geq 0}^n$\;
Each bidder $i$ submits a pair $\left(z_i, \bid_{i}\right)$ where $z_i \in \{0,1\}$ indicates whether bidder $i$ is willing to { accept an entry fee $e_i$ to enter the auction,} and $\bid_i$ is a $m$-dimensional vector representing  bidder $i$'s bid in $\auc$\;
{Independently for each bidder $i$,
the entry fee $\textsc{Ent}_i$ is set to $e_i$ with probability $1-\delta$ and is set of $0$ with probability $\delta$\;}
Run auction $\auc$ according to {the} bid profile $\bid = (\bid_1,\bid_2,\cdots,\bid_n)$\;
{Let $S = \{i: \textsc{Ent}_i = 0\text{ or } z_i = 1\}$} be the set of bidders that enters the auction, (i.e., bidders who agree to pay their entry fee)\;
{Each bidder $i\in S$ receives allocation $\alloc_i(b)$ and has payment $\pay_i(b)$. All other bidders receive nothing and pay nothing.}
\caption{{\sf  Simultaneous auction $\auc$ with personalized entry fee $\{e_i\}_{i\in [n]}$ \big($\auc^{(e)}_{\mathrm{EF}}$\big)}}\label{alg:entryfee}
\end{algorithm}

The probability that we do not charge the entry fee $\delta$ {should be thought of as a }very small positive constant. {In our proof, we choose $\delta$ to be $0.01$ and it suffices to guarantee \Cref{thm:main}.}

\vspace{-.15in}
\paragraph{{Simultaneous Auction} with Reserve Prices.}

The mechanism {first determines reserve prices $r_{ij}$ for each bidder $i$ and item $j$ using only information about the distribution of $V_i(t_{ij})$} (i.e., the distribution of bidder $i$'s value for winning only item $j$). As in standard simultaneous auction{s}, each bidder $i$ submits an $m$-dimensional bid vector $\bid_i$, where the $j$-th coordinate $b_i^{(j)}$ represents $i$'s bid for item $j$. 

Given the bid profile, the allocation is directly determined by the simultaneous auction $\auc$. If $i$ wins item $j$, $i$'s payment for item $j$ is the maximum of the reserve price $r_{ij}$ and $i$'s payment for item $j$ determined by {$\auc_j$}. {For the bidders who do not win item $j$, their payment for that item equals the payment determined by 
$\auc_j$.} The total payment of any bidder is the sum of their payments for all items.

\begin{algorithm}[H]
\SetAlgorithmName{Mechanism}
~~\textbf{Input}: A simultaneous auction $\auc = (\alloc, \pay)$ and a collection of reserved prices $\{r_{ij}\}_{i\in[n],j\in[m]}$\;

Each bidder $i$ submits their bid vector $\bid_{i}$, a $m$-dimensional vector, where $\bid_{i}^{(j)}$ can be $\perp$ for any $j$\;
Run auction $\auc$ with bid profile $\bid = (\bid_1,\bid_2,\cdots,\bid_n)$\;
Each bidder $i$ receives allocation $\alloc_i(\bid)$ and pays $\sum_{j\in \alloc_i(\bid)}\max\left\{p_i^{(j)}\left(\bid^{(j)}\right), r_{ij}\right\} + \sum_{\ell\notin \alloc_i(\bid)}p_i^{(\ell)}\left(b^{(\ell)}\right)$\;

\caption{{\sf $\auc$ with {personalized reserve prices} $\{r_{ij}\}_{i\in [n], j\in [m]}$ ($\auc^{(r)}_{\mathrm{RP}}$)}}\label{alg:reserveprice}
\end{algorithm}

\subsection{Main Theorem}
\label{sec:mainthm}

We introduce our main result in this section. We show that if a simultaneous auction $\auc$ satisfies certain desirable properties at a Bayes-Nash equilibrium $\str$, then the same auction $\auc$ that incorporates  additional entry fees or reserved prices can generate a constant fraction of the optimal revenue $\OPT(D)$ when bidders' valuations are \emph{subadditive over independent items}.

We first formally define the desirable properties :
\begin{definition}[{$c$-efficiency}]
\label{def:good}
Let $\str$ be a Bayes-Nash equilibrium of simultaneous auction $\auc$ w.r.t. type distribution $\dist$ and valuation functions $\{v_i\}_{i\in[n]}$. We define $\uti_i^{(\str)}(t_i, S)$ to be the optimal utility of bidder $i$ when their type is $t_i$, and they are only allowed to participate in the auctions for items in set $S$, while all other bidders bid according to $s_{-i}$. More specifically, 
\[\uti_i^{(\str)}\InParentheses{\type_i, S} = \sup_{{q_i\in \left(\mathbb{R}_{\geq 0}\cup \{\perp\}\right)^m}}\ \E_{\substack{\type_{-i} \sim \dist_{-i}\\ \bid_{-i}\sim \str_{-i}\InParentheses{\type_{-i}}}}\InBrackets{{\val_i\InParentheses{t_i,\alloc_i\InParentheses{q_i, \bid_{-i}} \cap S} -  \sum_{j\in S} \pay_i^{\left(j\right)}\InParentheses{q_i^{\left(j\right)}, \bid_{-i}^{\left(j\right)}}}}.\footnote{When $\alloc_i\InParentheses{q_i, \bid_{-i}}$ is a randomized allocation, $\alloc_i\InParentheses{q_i, \bid_{-i}}\cap S$ should be interpreted as only assigning $i$ the set of items in $W\cap S$, where $W$ is the random set of items that $i$ wins in $\alloc_i\InParentheses{q_i, \bid_{-i}}$.} \]

We say the tuple \CEtuple~is \SAprop~if the following conditions hold:

\begin{itemize}
    \item The payment for any item is non-negative. When a bidder bids {$\perp$} on {an} item, they pay nothing on this item regardless of the outcome.
 
    \item $\auc$ satisfies the highest bid wins property, i.e., for each item $j$, the bidder who has the highest bid wins item $j$.
    
    \item For any bidder $i$, any type $t_i$, and any set of items $S\subseteq [m]$,
            \[\uti_i^{(\str)}\InParentheses{t_i, S} + \rev^{(\str)}_{\dist}\InParentheses{\auc, S } \geq c\cdot \val_i\InParentheses{t_i, S }. \footnote{{Note that our definition is tailored for simultaneous auctions, as it is unclear how to define $\rev^{(\str)}_{\dist}\InParentheses{\auc, S }$ for general auctions. } }\]

\end{itemize}

\end{definition}

Before presenting our main theorem, we first discuss the {definition} of $c$-efficiency {and how it relates to several other important notions in mechanism design}. In \Cref{def:good}, the first and second conditions are easily satisfied by many simultaneous auctions, while the third condition is {crucial and more difficult to meet}. Indeed, any tuple \CEtuple~meeting the third condition implies that the equilibrium $\str$ achieves at least $c$ fraction of the optimal welfare. However, attaining a high welfare does not directly imply the third condition. We show that for the simultaneous second-price auction, there exists an instance $\left(\dist,\{\val_i\}_{i\in [n]}\right)$ with a no-overbidding equilibrium $\str$ {such that the third condition is violated} for any $c > 0$, but high welfare is still achieved at this equilibrium in the simultaneous second-price auction. See~\Cref{ex:S2A} for the complete construction.

{The third condition echoes the $(\lambda,\mu)$ smoothness condition introduced by Syrgkanis et al. \cite{syrgkanis_composable_2013}, albeit with three significant distinctions. First, our condition is specifically designed for simultaneous auctions and pertains to a particular Bayes-Nash equilibrium, in contrast to the  $(\lambda,\mu)$-smoothness which is generally applicable to any mechanism. Second, our condition imposes a lower bound on the utility of a single bidder, unlike the smoothness condition that considers the aggregate utility of all bidders. Lastly, our condition mandates the inequality to hold for every bundle $S$, a requirement absent in smooth mechanisms.}

{The third condition also notably aligns with the balanced prices framework~\cite{kleinberg_matroid_2012, feldman_combinatorial_2015, feldman_online_2016, dutting_olog_2020}, despite significant differences. Let $U$ be a set of items. The balanced prices framework assigns a price $p_i$ to each item $i \in U$ such that for any subset $S\subseteq U$, the buyer's utility from purchasing $S$ (i.e., $\val(S) - \sum_{i \in S} p_i$) combined with the revenue from the \emph{remaining set} (i.e., $\sum_{i\in U\backslash S} p_i$), approximates the total value of $U$. In contrast, our condition mandates that for any subset $S\subseteq U$, the buyer's utility, when bidding only on items in $S$ and acting in best response to other bidders' equilibrium strategies, along with the revenue from the \emph{same set} $S$, must attain a constant fraction of the total value of $U$. Additionally, while the balanced prices framework is limited to posted-price mechanisms, our definition can accommodate simultaneous auctions.}

Hartline et al. \cite{jason} introduce the concepts of competitive efficiency and individual efficiency for the single-dimensional setting. 
The third condition {in} \Cref{def:good} can be viewed as a generalization of these concepts in multi-dimensional settings. More specifically, in the single-item setting, for any mechanism that is $(\eta,\mu)$-individual and competitive efficient, our third condition holds for any equilibrium $\str$ with $c = \eta\mu$.

We now state our main theorem. 
\begin{theorem}
\label{thm:main}
{Let $\auc$ be a simultaneous auction, and $\str$ be a Bayes-Nash equilibrium of $\auc$ w.r.t. type distribution $\dist=\bigtimes_{i\in[n],j\in[m]}D_{ij}$ and valuation functions $\{v_i\}_{i\in[n]}$. If the distribution of bidder $i$'s valuation $v_i(t_i,\cdot)$ is subadditive over independent items (i.e., satisfies \Cref{def:subadditive})} and \CEtuple is \SAprop, then there exists a set of personalized entry fees $\{e_i\}_{i\in [n]}$ and a set of {personalized reserve prices} $\{r_{ij}\}_{i\in [n],j\in [m]}$ so that
\[\OPT(D) \leq \InParentheses{\frac{21}{c}\cdot \rev^{(\str)}_{\dist}\InParentheses{\auc^{(e)}_{\mathrm{EF}}} + \InParentheses{87 + \frac{51}{c}}\cdot \rev^{(\str')}_{\dist}\InParentheses{\auc_{\mathrm{RP}}^{(r)}}}{.}\]
{Here $\auc^{(e)}_{\mathrm{EF}}$ is auction $\auc$ with personalized entry fee $\{e_i\}_{i\in [n]}$. Note that $\auc_{\mathrm{EF}}^{(e)}$ has the same set of Bayes-Nash equilibria as $\auc$, so $s$ is also a Bayes-Nash equilibrium of $\auc_{\mathrm{EF}}^{(e)}$. $\auc^{(r)}_{\mathrm{RP}}$ is auction $\auc$ with reverse price $\{r_{ij}\}_{i\in [n], j\in [m]}$, and $\str'$ is an arbitrary Bayes-Nash equilibrium.}  

\end{theorem}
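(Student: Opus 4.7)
My plan is to use the duality framework of Cai et al.~\cite{cai_simple_2017,cai_duality-based_2021} to upper bound $\OPT(D)$ by the standard decomposition $\single + \tail + \core$, charge $\single$ and $\tail$ to the revenue of $\auc_{\mathrm{RP}}^{(r)}$, and charge $\core$ to the revenue of $\auc_{\mathrm{EF}}^{(e)}$ via the \SApropn\ inequality. First, I would construct the partial-Lagrangian dual of the BIC revenue-maximization LP with the canonical ``favorite-item'' flow of~\cite{cai_simple_2017}: for each bidder $i$ choose thresholds $\{c_{ij}\}$ so that $\Pr[V_i(t_{ij})>c_{ij}]$ is balanced across items, and let bidder $i$'s favorite item (the first $j$ with $V_i(t_{ij})>c_{ij}$) absorb one unit of flow. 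The resulting virtual-welfare bound yields $\OPT(D) \leq \single + \tail + \core$, where $\single$ captures Myerson's revenue of selling each item separately, $\tail$ captures types in which some $V_i(t_{ij})>c_{ij}$, and $\core$ captures the remaining truncated subadditive value of each bidder on a per-type ``core set'' of items.

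To handle $\single$ and $\tail$, I would set $r_{ij}$ to a constant multiple of the Myerson reserve for $V_i(t_{ij})$. Because $\auc$ satisfies ``highest bid wins'' and bidding $\perp$ incurs no payment, a standard single-item argument (cf.~\cite{chawla_multi-parameter_2010,daskalakis_multi-item_2022}) shows that at \emph{any} Bayes-Nash equilibrium $\str'$ of $\auc_{\mathrm{RP}}^{(r)}$ the revenue collected on item $j$ is a constant fraction of the optimal single-item revenue on $j$: a bidder with $V_i(t_{ij})>r_{ij}$ weakly prefers paying $r_{ij}$ to dropping out. The $\tail$ term reduces to a single-item tail sum that is absorbed into $\rev_{\dist}^{(\str')}(\auc_{\mathrm{RP}}^{(r)})$ up to a constant.

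The heart of the argument is the $\core$ term. Writing $S^{(i)}(t_i)$ for the per-type core set, I would: (i) use Lipschitz concentration for subadditive valuations (\Cref{def:Lipschitz}, via the Schechtman-type inequality used in~\cite{rubinstein_simple_2015,cai_simple_2017}) to replace $\val_i(t_i,S^{(i)}(t_i))$ by its expectation up to an additive error, which is then charged back to $\single+\tail$; (ii) apply the \SApropn\ hypothesis at $S=S^{(i)}(t_i)$ to obtain
\[\val_i\bigl(t_i,S^{(i)}(t_i)\bigr)\ \leq\ \tfrac{1}{c}\bigl(\uti_i^{(\str)}(t_i,S^{(i)}(t_i)) + \rev^{(\str)}_{\dist}(\auc,S^{(i)}(t_i))\bigr);\]
(iii) set the personalized entry fee $e_i$ to a constant quantile of $\uti_i^{(\str)}(t_i,[m])$, so that a constant fraction of types accept and $\rev_{\dist}^{(\str)}(\auc_{\mathrm{EF}}^{(e)})$ dominates a constant times $\sum_i \E_{t_i}[\uti_i^{(\str)}(t_i,[m])]$. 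The waiver probability $\delta$ in \Cref{alg:entryfee} guarantees that $\str$ remains a Bayes-Nash equilibrium of $\auc_{\mathrm{EF}}^{(e)}$: in the no-fee branch every bidder best-responds with $\str$, and in the fee branch the accept/reject decision is determined by the same best-response utility, so the equilibrium set is preserved.

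The main technical obstacle is performing the double-core decomposition in a way that is compatible with \SApropn. The standard decomposition of~\cite{cai_simple_2017} is tailored for truthful posted-price extraction, whereas here the ``extraction mechanism'' is the equilibrium of $\auc$ restricted to a subset $S$. The crucial modification is to redefine the core so that $\uti_i^{(\str)}(t_i,S^{(i)}(t_i))$ replaces per-item posted prices as the quantity extracted from each bidder, and to verify that the attendant bookkeeping---the Lipschitz concentration error, the residual tail contribution absorbed into reserve-price revenue, and the final constants $21/c$ and $87+51/c$---closes up. Making this modification consistent simultaneously with the partial-Lagrangian dual variables and with the entry-fee extraction argument is where I expect the bulk of the work.
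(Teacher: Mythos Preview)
Your proposal is correct and follows essentially the same architecture as the paper: the $\single+\tail+\core$ decomposition of~\cite{cai_simple_2017}, with $\single$ and $\tail$ charged to reserve-price revenue and $\core$ charged to entry-fee revenue via \SApropn, and the double-core modification you flag as the main work. One refinement: the paper reverses the order of your steps (i)--(ii), first invoking \SApropn\ pointwise to pass from $\coreh$ to $\sum_i\E_{t_i}[\utih_i(t_i,[m])]$ where $\utih_i(t_i,S)=\uti_i^{(\str)}(t_i,S\cap Y_i(t_i))$, and \emph{then} applying the concentration inequality directly to the truncated utility $\utih_i$ rather than to $\val_i$; this is what makes the median-based entry fee extract the core cleanly, and it requires verifying that $\utih_i$ itself is monotone, subadditive, without externalities, and $\tau_i$-Lipschitz---the modified $\tau_i$ (which drops the $Q_j$ shift from~\cite{cai_simple_2017}) is chosen precisely so that this Lipschitz bound holds while $\sum_i\tau_i$ remains controlled by $\rprev$.
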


\begin{remark}
    Note that the entry fees $\{e_i\}_{i\in[n]}$ are selected based on $\str$. As stated in Lemma~\ref{lem:RandomSameEq}, a strategy profile $\str$ is a Bayes-Nash equilibrium in $\auc_{\mathrm{EF}}^{(e)}$ if and only if $\str$ is also a Bayes-Nash equilibrium in $\auc$. This implies that the introduction of entry fees does not give rise to any new equilibria, and { the same strategy profile} $\str$ continues to be an equilibrium. Therefore, it is reasonable to expect that the {bidders} to play according to the same equilibrium $\str$ after {introducing} the entry fees.
\end{remark}

See Section~\ref{sec:mecha} for {a detailed discussion about additional properties of equilibria in these two mechanisms}. Next, we argue that all equilibria of \SFA~and \SAP~are $1\over2$-efficient when bidders valuations are subadditive.

\begin{lemma}
\label{lem:firstprice}
 For any type distribution $D$, valuation functions $\{v_i\}_{i\in[n]}$, and any Bayes-Nash equilibrium $\str$ of \SFA, {as long as for any bidder $i$ and any $t_i$, $v_i(t_i,\cdot)$ is a subadditive function over $[m]$, $(\SFA, \str, \dist, \{v_i\}_{i\in[n]})$ is $\frac12$-efficient.}
\end{lemma}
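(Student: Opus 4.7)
The first two conditions of \Cref{def:good} are immediate for \SFA: a bidder can bid $\perp$ on any item to pay nothing for it, and the allocation rule of \SFA\ is by construction the highest-bid-wins rule. All the work lies in the third condition, i.e.\ showing that for every bidder $i$, type $t_i$, and bundle $S\subseteq [m]$,
$$\uti_i^{(\str)}(t_i, S) + \rev_\dist^{(\str)}(\SFA, S)\;\geq\; \tfrac12\, \val_i(t_i, S).$$
The plan is to exhibit a randomized deviation for bidder $i$, bidding only on items in $S$, whose expected utility is at least $\tfrac12 \val_i(t_i, S) - \rev_\dist^{(\str)}(\SFA, S)$. Since $\uti_i^{(\str)}(t_i, S)$ is defined as a supremum over pure deviations of an expectation, such a lower bound on the randomized utility transfers directly to $\uti_i^{(\str)}(t_i, S)$ by an averaging argument.

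The deviation is the symmetric bid-shading strategy going back to Feldman et al.~\cite{feldman_simultaneous_2013}. Let $B_j$ denote the random variable representing the highest bid on item $j$ among bidders other than $i$ at equilibrium, with randomness coming from $t_{-i}\sim \dist_{-i}$ and $b_{-i}\sim \str_{-i}(t_{-i})$. Bidder $i$ samples an independent copy $(B'_1,\dots,B'_m)$ of $(B_1,\dots,B_m)$, bids $B'_j + \varepsilon$ on every $j\in S$, and bids $\perp$ on items outside $S$, where $\varepsilon > 0$ is an auxiliary parameter eventually sent to $0$.

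I would then bound the two components of the deviation's utility separately. For the payment, since the first-price rule charges the winner their own bid, the expected total payment is at most $\sum_{j\in S}(\E[B'_j]+\varepsilon)$. Since $B_j \leq \max_k b_k^{(j)}$ pointwise and the equilibrium revenue on item $j$ equals $\E[\max_k b_k^{(j)}]$, this is at most $\rev_\dist^{(\str)}(\SFA, S) + |S|\varepsilon$. For the value, let $W_\varepsilon := \{j\in S : B'_j + \varepsilon > B_j\}$ be the set of won items; by monotonicity, $W_\varepsilon \supseteq W := \{j\in S : B'_j \geq B_j\}$ for every $\varepsilon > 0$, so $\E[\val_i(t_i, W_\varepsilon)] \geq \E[\val_i(t_i, W)]$. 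Setting $\bar W := \{j\in S : B_j \geq B'_j\}$, the exchangeability of $(B_j, B'_j)$ gives $\E[\val_i(t_i, W)] = \E[\val_i(t_i, \bar W)]$, while $W \cup \bar W = S$ combined with subadditivity yields $\val_i(t_i, S) \leq \val_i(t_i, W) + \val_i(t_i, \bar W)$; averaging the two shows $\E[\val_i(t_i, W)] \geq \tfrac12 \val_i(t_i, S)$.

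Combining the two bounds, the deviation attains expected utility at least $\tfrac12 \val_i(t_i, S) - \rev_\dist^{(\str)}(\SFA, S) - |S|\varepsilon$. Sending $\varepsilon\to 0$ and converting the randomized deviation into a pure one via averaging yields the third condition with $c=\tfrac12$. The main subtlety I anticipate is tie-breaking: the ``$+\varepsilon$'' shift is what forces bidder $i$ to strictly outbid any tied competitor, and the closed condition $B'_j\geq B_j$ (rather than the strict version $B'_j > B_j$) is exactly what is needed to make the union $W \cup \bar W$ cover all of $S$ in the symmetry argument.
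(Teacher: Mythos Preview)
Your proposal is correct and follows essentially the same argument as the paper: both use the Feldman--Fu--Gravin--Lucier deviation where bidder $i$ samples an independent copy of the competing highest-bid vector, overbids it by $\varepsilon$ on $S$, and then applies subadditivity plus exchangeability to get the $\tfrac12$ factor on the value side while bounding the payment by the equilibrium revenue on $S$. The only cosmetic difference is that the paper carries the $+\varepsilon$ on both sides of the symmetry step (working with $\{j:q_i^{(j)}+\varepsilon>r_i^{(j)}\}$ and its swap), whereas you first pass to the closed sets $W,\bar W$ via monotonicity before invoking exchangeability; both packagings are fine.
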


\begin{lemma}
\label{lem:allpay}
For any type distribution $D$, valuation functions $\{v_i\}_{i\in[n]}$, and any Bayes-Nash equilibrium $\str$ of \SAP, {as long as for any bidder $i$ and any $t_i$, $v_i(t_i,\cdot)$ is a subadditive function over $[m]$, $(\SAP, \str, \dist, \{v_i\}_{i\in[n]})$ is $\frac12$-efficient.}
\end{lemma}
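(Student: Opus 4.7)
\textbf{Proof proposal for Lemma~\ref{lem:allpay}.} The first two bullets of Definition~\ref{def:good} are immediate for \SAP: everyone pays their bid, so payments are non-negative; a bidder who submits $\perp$ on item $j$ has bid $0$ and pays $0$ on it; and the definition of \SAP~ensures the highest bidder wins each single-item auction. The entire work therefore lies in the third bullet, namely showing
\[
    \uti_i^{(\str)}(t_i,S)\;+\;\rev^{(\str)}_{\dist}(\SAP,S)\;\geq\;\tfrac12\,\val_i(t_i,S)
\]
for every bidder $i$, every type $t_i\in T_i$, and every $S\subseteq[m]$. My plan is to exhibit an explicit (randomized) deviation strategy for bidder $i$ restricted to $S$, following the deviation-by-resampling idea from Feldman--Fu--Gravin--Lucier~\cite{feldman_simultaneous_2013}, and then combine it with subadditivity.

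The deviation is as follows. Bidder $i$ draws an independent ``phantom'' type profile $\type'_{-i}\sim\dist_{-i}$ and a phantom bid profile $\bid'_{-i}\sim \str_{-i}(\type'_{-i})$, independently of the true realization $(\type_{-i},\bid_{-i})$. For every $j\in S$, bidder $i$ bids $M'_j:=\max_{k\neq i}\bid'^{(j)}_k$ (breaking ties in bidder $i$'s favor, or equivalently submitting $M'_j+\varepsilon$ and letting $\varepsilon\to 0$); for $j\notin S$ bidder $i$ submits $\perp$. Write $M_j=\max_{k\neq i}\bid^{(j)}_k$ and let $W=\{j\in S: M'_j\geq M_j\}$ be the random set of items bidder $i$ wins under this deviation. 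Since \SAP~is all-pay, bidder $i$'s expected utility from this deviation, which lower bounds $\uti_i^{(\str)}(t_i,S)$, equals
\[
    \E\bigl[\val_i(t_i,W)\bigr]\;-\;\sum_{j\in S}\E[M'_j].
\]

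Now I use symmetry and subadditivity. Because $(M_j)_{j\in S}$ and $(M'_j)_{j\in S}$ are i.i.d., the set $W'=\{j\in S:M'_j<M_j\}$ has the same distribution as $\{j\in S:M'_j>M_j\}\subseteq W$, so $\E[\val_i(t_i,W')]\leq \E[\val_i(t_i,W)]$. Since $W\cup W'=S$, subadditivity of $\val_i(t_i,\cdot)$ gives
\[
    \val_i(t_i,S)\leq \val_i(t_i,W)+\val_i(t_i,W'),\qquad\text{hence}\qquad \E[\val_i(t_i,W)]\geq \tfrac12\val_i(t_i,S).
\]
For the payment term, in an all-pay auction the revenue collected on item $j$ is $\rev^{(\str)}_{\dist}(\SAP,\{j\})=\sum_k\E[\bid^{(j)}_k]\geq \E[\max_k \bid^{(j)}_k]\geq\E[M_j]=\E[M'_j]$, the last equality again by the i.i.d.\ coupling. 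Summing over $j\in S$ yields $\sum_{j\in S}\E[M'_j]\leq \rev^{(\str)}_{\dist}(\SAP,S)$, and plugging both bounds into the deviation utility finishes the proof.

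The main obstacle I expect is the tie-breaking subtlety at the step where bidder $i$'s phantom bid ties with the true maximum. I will handle it either by invoking a sup over $\varepsilon$-perturbed bids (so that $\uti_i^{(\str)}(t_i,S)$ is taken as a supremum over deterministic-ish deviations parameterized by the phantom sample and $\varepsilon>0$) or by adopting the ``ties favor $i$'' convention when stating the highest-bid-wins property; either path makes the inequality $\Pr[j\in W]\geq\tfrac12$ clean and keeps the subadditivity argument intact. Everything else is a direct application of subadditivity and the coupling of the phantom draw with the true draw.
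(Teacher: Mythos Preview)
Your proposal is correct and follows essentially the same approach as the paper's proof in Appendix~\ref{appendix:proofofgood}: the resampling deviation from~\cite{feldman_simultaneous_2013}, the symmetry-plus-subadditivity argument giving $\E[\val_i(t_i,W)]\geq\tfrac12\val_i(t_i,S)$, and the bound $\rev^{(\str)}_\dist(\SAP,\{j\})\geq\E[M'_j]$. The paper uses the $\varepsilon$-perturbation route throughout (which you list as one option for handling ties), so the two arguments match essentially line for line.
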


\begin{remark}
    Note that Lemma~\ref{lem:firstprice} and \ref{lem:allpay} do not require the bidders' valuations to be subadditive \emph{over independent items}. We only use item-independence in the proof of \Cref{thm:main}.
\end{remark}
The proofs of Lemma~\ref{lem:firstprice} and \ref{lem:allpay} are postponed to Appendix~\ref{appendix:proofofgood}. Combining Theorem~\ref{thm:main} with Lemma~\ref{lem:firstprice} and Lemma~\ref{lem:allpay}, we show that \SFA~and \SAP~with personalized entry fees or reserved prices can extract a constant fraction of the optimal revenue when the valuations are subadditive over independent~items.

\begin{corollary}
\label{cor:firstprice}
{For any type {distribution} $\dist=\bigtimes_{i\in[n],j\in[m]}D_{ij}$ and valuation functions $\{v_i\}_{i\in[n]}$, such that the distribution of bidder $i$'s valuation $v_i(t_i,\cdot)$ is subadditive over independent items (i.e., satisfies \Cref{def:subadditive}), if $\str$ is a Bayes-Nash equilibrium of the simultaneous first-price auction (\SFA),} then 
there exists a set of entry fees $\{e_i\}_{i\in [n]}$ and a set of reserve prices $\{r_{ij}\}_{i\in [n], j\in [m]}$ such that
\[\OPT(D) \leq 42\cdot \rev^{(\str)}_{\dist}\InParentheses{\mathrm{S1A}^{(e)}_{\mathrm{EF}}} + 189\cdot \rev^{(\str')}_{\dist}\InParentheses{\mathrm{S1A}_{\mathrm{RP}}^{(r)}},\]
{where $s$, a Bayes-Nash equilibrium of the original \SFA, remains to be a Bayes-Nash equilibrium for the \SFA~with personalized entry fees, and $s'$ is an arbitrary Bayes-Nash equilibrium of the \SFA~with reserve prices.}
\end{corollary}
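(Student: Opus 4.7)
The plan is to combine Lemma~\ref{lem:firstprice} with Theorem~\ref{thm:main} in a direct manner, since the corollary is essentially an instantiation of the main theorem to the simultaneous first-price auction.

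First, I would verify the hypotheses of Theorem~\ref{thm:main}. The distributional assumption that each bidder's valuation is subadditive over independent items is inherited directly from the corollary statement. For the \SApropn{} condition, I would invoke Lemma~\ref{lem:firstprice}: since the given type distribution $\dist$ and valuations $\{v_i\}_{i\in[n]}$ satisfy \Cref{def:subadditive}, and since subadditivity over independent items implies that for each fixed $t_i$ the function $v_i(t_i,\cdot)$ is subadditive over $[m]$, the lemma yields that $(\SFA, \str, \dist, \{v_i\}_{i\in[n]})$ is $\tfrac{1}{2}$-efficient. (Note that the ``no externalities'' clause in \Cref{def:subadditive} ensures the per-type valuation $v_i(t_i,\cdot)$ is well defined as a set function, which is what Lemma~\ref{lem:firstprice} requires.)

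Next, I would apply Theorem~\ref{thm:main} with $\auc = \SFA$ and $c = \tfrac{1}{2}$. The theorem then guarantees the existence of personalized entry fees $\{e_i\}_{i\in[n]}$ and personalized reserve prices $\{r_{ij}\}_{i\in[n],j\in[m]}$ such that
\[
\OPT(\dist) \leq \frac{21}{c}\cdot \rev^{(\str)}_{\dist}\!\left(\mathrm{S1A}^{(e)}_{\mathrm{EF}}\right) + \left(87 + \frac{51}{c}\right)\cdot \rev^{(\str')}_{\dist}\!\left(\mathrm{S1A}^{(r)}_{\mathrm{RP}}\right).
\]
Substituting $c = \tfrac{1}{2}$ gives $21/c = 42$ and $87 + 51/c = 87 + 102 = 189$, which yields the claimed bound. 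The fact that $\str$ remains a Bayes-Nash equilibrium of the \SFA{} with entry fees, and that $\str'$ ranges over arbitrary Bayes-Nash equilibria of the \SFA{} with reserve prices, is precisely what Theorem~\ref{thm:main} asserts.

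There is essentially no obstacle here beyond bookkeeping: the corollary is a direct specialization, and the only ``work'' is checking that the hypotheses align and substituting $c = 1/2$. All the nontrivial machinery is absorbed into Theorem~\ref{thm:main} and Lemma~\ref{lem:firstprice}.
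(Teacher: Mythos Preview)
Your proposal is correct and matches the paper's approach exactly: the paper presents Corollary~\ref{cor:firstprice} as an immediate consequence of combining Theorem~\ref{thm:main} with Lemma~\ref{lem:firstprice}, and substituting $c=\tfrac12$ yields the constants $42$ and $189$ just as you computed.
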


\begin{corollary}
\label{cor:allpay}
{For any type $\dist=\bigtimes_{i\in[n],j\in[m]}D_{ij}$ and valuation functions $\{v_i\}_{i\in[n]}$, such that the distribution of bidder $i$'s valuation $v_i(t_i,\cdot)$ is subadditive over independent items (i.e., satisfies \Cref{def:subadditive}), if $\str$ is a Bayes-Nash equilibrium of the simultaneous all-pay auction (\SAP),} then there exists a set of entry fees $\{e_i\}_{i\in [n]}$ and a set of reserve prices $\{r_{ij}\}_{i\in [n], j\in [m]}$ such that
\[\OPT(D) \leq 42\cdot \rev^{(\str)}_{\dist}\InParentheses{\mathrm{SAP}^{(e)}_{\mathrm{EF}}} + 189\cdot \rev^{(\str')}_{\dist}\InParentheses{\mathrm{SAP}_{\mathrm{RP}}^{(r)}}{,}\]
{where $s$, a Bayes-Nash equilibrium of the original \SFA, remains to be a Bayes-Nash equilibrium for the \SFA~with personalized entry fees, and $s'$ is an arbitrary Bayes-Nash equilibrium of the \SFA~with reserve prices.}
\end{corollary}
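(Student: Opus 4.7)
The plan is to derive Corollary~\ref{cor:allpay} as an immediate consequence of Lemma~\ref{lem:allpay} and Theorem~\ref{thm:main}; essentially all the work has been front-loaded into those two results, so the corollary should follow by a clean instantiation with $c = 1/2$.

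First, I would invoke Lemma~\ref{lem:allpay} to certify the efficiency hypothesis. Since the distribution of $v_i(t_i,\cdot)$ is subadditive over independent items in the sense of Definition~\ref{def:subadditive}, each realization $v_i(t_i,\cdot)$ is in particular a subadditive set function on $[m]$, which is exactly the hypothesis Lemma~\ref{lem:allpay} requires. Hence for the given Bayes-Nash equilibrium $s$ of \SAP, the tuple $(\SAP, s, D, \{v_i\}_{i\in[n]})$ is $\tfrac12$-efficient in the sense of Definition~\ref{def:good}. Before passing this to the main theorem I would also quickly verify the structural bullets of Definition~\ref{def:good}: payments in \SAP~are bids, which are non-negative; a bidder who plays $\perp$ on an item incurs no payment on that item by definition of the $\perp$ action introduced in the preliminaries; and \SAP~satisfies the highest-bid-wins property as stated in the definition of the mechanism. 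All three structural conditions are therefore free.

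With $c = 1/2$ in hand and the product-distribution, subadditive-over-independent-items hypothesis inherited directly from the corollary's assumptions, I can apply Theorem~\ref{thm:main} to the auction $\auc = \SAP$ and equilibrium $\str$. The theorem produces entry fees $\{e_i\}_{i\in[n]}$ and reserve prices $\{r_{ij}\}_{i\in[n],j\in[m]}$ such that
\[
\OPT(D) \;\leq\; \frac{21}{c}\cdot \rev^{(\str)}_{\dist}\!\InParentheses{\SAP^{(e)}_{\mathrm{EF}}} + \InParentheses{87 + \frac{51}{c}}\cdot \rev^{(\str')}_{\dist}\!\InParentheses{\SAP^{(r)}_{\mathrm{RP}}},
\]
where $\str$ remains a Bayes-Nash equilibrium of $\SAP^{(e)}_{\mathrm{EF}}$ (this is guaranteed by the equilibrium-preservation statement inside Theorem~\ref{thm:main}, referenced there via Lemma~\ref{lem:RandomSameEq}) and $\str'$ is an arbitrary Bayes-Nash equilibrium of $\SAP^{(r)}_{\mathrm{RP}}$. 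Substituting $c = 1/2$ yields the coefficients $21/c = 42$ and $87 + 51/c = 189$, which matches the stated bound exactly.

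There is really no deep obstacle in this step: the heavy lifting is done elsewhere (the duality-based analysis behind Theorem~\ref{thm:main} and the smoothness-style argument behind Lemma~\ref{lem:allpay}). The only place care is needed is in making sure the equilibrium data lines up correctly between the two results, namely that the $\str$ whose $\tfrac12$-efficiency is established by Lemma~\ref{lem:allpay} is the same $\str$ plugged into Theorem~\ref{thm:main}, and that the two different equilibria appearing in the final bound ($\str$ on the entry-fee side, $\str'$ on the reserve-price side) are transcribed faithfully into the corollary statement. Once that bookkeeping is handled, the corollary follows in a single line.
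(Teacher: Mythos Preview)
Your proposal is correct and follows exactly the approach the paper indicates: the corollary is obtained by combining Lemma~\ref{lem:allpay} (which certifies that $(\SAP, s, D, \{v_i\})$ is $\tfrac12$-efficient) with Theorem~\ref{thm:main}, and then substituting $c = \tfrac12$ to read off the constants $42$ and $189$. The paper does not give a standalone proof of the corollary beyond this instantiation, so your write-up is in fact more detailed than what appears in the text.
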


\section{{Equilibria of Our Mechanisms}}
\label{sec:mecha}

In this section, we discuss some properties of the equilibrium in our mechanisms. {Note that a Bayes-Nash} equilibrium may not exist if the type spaces and action spaces are continuous. See Appendix~\ref{subsec:existence} for {a more detailed} discussion.

\subsection{Mechanisms with Entry Fees}
\label{subsec:entrymecha}

Notice that when the entry fee is charged deterministically, the bid vector $\bid_i$ has no impact on bidder $i$'s utility if they choose not to pay the entry fee. In this scenario, the bidder may report an arbitrary $\bid_i$, potentially introducing new {equilibria}. 
    As we show in \Cref{lem:RandomSameEq}, charging the entry fees randomly incentivizes each bidder to keep their bids even when they decide not to enter the auction. 
    Daskalakis et al.~\cite{daskalakis_multi-item_2022} provides an alternative mechanism with ``ghost bidders''. Their mechanism deterministically charges an entry fee and samples a bid from a "ghost bidder" in the execution of $\auc$ whenever a real bidder $i$ declines to pay the entry fee. As discussed in their paper, this mechanism is credible as the mechanism does not use any private randomness, but it may introduce new equilibria. We highlight that if we replace the randomized entry fees with deterministic ones together with ghost bidders, all claims in Theorem~\ref{thm:main} hold, except that {now} we need to evaluate the revenue of $\auc_{\mathrm{EF}}^{(e)}$ at a ``focal equilibrium'' that can be computed based on $\str$.

{Before examining the properties of $\auc_{\mathrm{EF}}^{(e)}$, it is essential to discuss a subtle detail concerning the actions of bidders in $\auc_{\mathrm{EF}}^{(e)}$. The actions available to bidder $i$ in $\auc_{\mathrm{EF}}^{(e)}$ has an additional dimension $z_i \in \{0,1\}$, that decides whether $i$ is willing to pay the entry fee. At any equilibrium $\str$,} it is clear that { bidder $i$} will choose to enter the auction if and only if \(\E_{\substack{\type_{-i}\sim \dist_{-i}\\\bid_{-i}\sim \str_{-i}(\type_{-i})}}\InBrackets{u_i(t_i, (b_i, b_{-i}))}\) exceeds $e_i$. {Therefore, $z_i$ depends exclusively on $b_i$ at any equilibrium. This allows for a liberal use of notation, interpreting the strategies of $\auc_{\mathrm{EF}}^{(e)}$ as a mapping from its type $\type_i$ to an $m$-dimensional bid vector $\bid_i$ (rather than to $(z_i,b_i)$).}

\begin{definition}[Strategy Profile of $\auc_{\mathrm{EF}}^{(e)}$ at Equilibrium $\str$]
\label{def:strprofile}
Suppose $\str$ is a Bayes-Nash equilibrium in auction $\auc_{\mathrm{EF}}^{(e)}$. For each bidder $i$,  its strategy profile $\str_i$ is defined as a mapping from type $\type_i$ to a distribution of {$m$}-dimensional bid {vectors}. Let 
\[u_i(t_i, b_i)  = \E_{\type_{-i}\sim \dist_{-i}}\InBrackets{\E_{\bid_{-i}\sim \str_{-i}(\type_{-i})}\InBrackets{u_i\InParentheses{\type_i,(\bid_i,\bid_{-i})}}}\] be the utility function for bidder $i$ in auction $\auc$ when their type is $\type_i$ and bids are $\bid_i$. When bidder $i$ participates in $\auc_{\mathrm{EF}}^{(e)}$ with type $\type_i$, she first samples a bid vector $\bid_i\sim \str_i\InParentheses{\type_i}$. Let $z_i = \indic\InBrackets{u_i(\type_i,b_i)\geq e_i}$ where $e_i$ is the entry fee for bidder $i$, she then submits $(z_i, b_i)$ as their action. It is clear that every equilibrium $\str$ of $\auc_{\mathrm{EF}}^{(e)}$ could be expressed in this form.
\end{definition}

\noindent The following lemma states that $\auc_{\mathrm{EF}}^{(e)}$ has exactly the same set of Bayes-Nash equilibria as $\auc$ for all $\delta \in (0, 1)$.

\begin{lemma}
\label{lem:RandomSameEq}
For any $\delta\in (0, 1)$, any set of entry fees $\{e_i\}_{i\in [n]}$,  any type distribution $\dist$, and valuation functions $\{v_i\}_{i\in[n]}$, a strategy profile $\str$ is a Bayes-Nash equilibrium in $\auc$ if and only if it is also a Bayes-Nash equilibrium in $\auc_{\mathrm{EF}}^{(e)}$.
\end{lemma}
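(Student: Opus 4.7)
The idea is to show that, although $\auc_{\mathrm{EF}}^{(e)}$ has an extra ``enter/abstain'' coordinate $z_i$, the randomization of the entry fee makes the underlying bid $b_i$ have a strictly positive weight in every contingency, so an optimal bid in $\auc_{\mathrm{EF}}^{(e)}$ must be an optimal bid in $\auc$, and conversely. I will fix bidder $i$ with type $t_i$ and fix the other bidders' strategy profile $\str_{-i}$, and analyze the best-response problem in $\auc_{\mathrm{EF}}^{(e)}$. Writing $u_i(t_i,b_i)=\E_{t_{-i},\,b_{-i}\sim \str_{-i}(t_{-i})}[u_i(t_i,(b_i,b_{-i}))]$ as in Definition~\ref{def:strprofile}, the expected utility of the action $(z_i,b_i)$ in $\auc_{\mathrm{EF}}^{(e)}$ is
\[
\Pi_i(t_i,z_i,b_i)\;=\;\delta\cdot u_i(t_i,b_i)\;+\;(1-\delta)\cdot z_i\cdot\bigl(u_i(t_i,b_i)-e_i\bigr),
\]
since with probability $\delta$ the entry fee is waived and bidder $i$ automatically participates, while with probability $1-\delta$ the bidder participates iff $z_i=1$ and in that case pays the fee $e_i$ in addition to the allocation/payment from $\auc$.

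The first step is to show that at any best response in $\auc_{\mathrm{EF}}^{(e)}$, the coordinate $z_i$ is determined by $b_i$ via $z_i=\mathbbm{1}[u_i(t_i,b_i)\geq e_i]$ (with arbitrary tie-breaking when equality holds, which yields the same payoff). This is immediate from the linear dependence of $\Pi_i$ on $z_i\in\{0,1\}$: the term $(1-\delta)z_i(u_i(t_i,b_i)-e_i)$ is maximized at $z_i=1$ precisely when $u_i(t_i,b_i)\geq e_i$. Substituting this into $\Pi_i$ yields
\[
\Pi_i^\star(t_i,b_i)\;=\;\delta\cdot u_i(t_i,b_i)\;+\;(1-\delta)\cdot\max\{u_i(t_i,b_i)-e_i,\,0\}\;=\;f\bigl(u_i(t_i,b_i)\bigr),
\]
where $f(x)=\delta x$ for $x<e_i$ and $f(x)=x-(1-\delta)e_i$ for $x\geq e_i$. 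The two pieces agree at $x=e_i$ and each has strictly positive slope ($\delta>0$ and $1>0$, both using $\delta\in(0,1)$), so $f$ is strictly increasing on $\R$.

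The second step uses the strict monotonicity of $f$ to conclude that, for every type $t_i$,
\[
\argmax_{b_i}\,\Pi_i^\star(t_i,b_i)\;=\;\argmax_{b_i}\,u_i(t_i,b_i).
\]
Hence a distribution $\str_i(t_i)$ over bid vectors is a best response in $\auc_{\mathrm{EF}}^{(e)}$ against $\str_{-i}$ (after pairing each $b_i$ with the optimal $z_i$ as in Definition~\ref{def:strprofile}) if and only if it is a best response in $\auc$ against the same $\str_{-i}$. Applying this equivalence to every bidder $i$ and every type $t_i$ in the support of $\dist_i$ shows that the set of Bayes--Nash equilibria of $\auc$ coincides with the set of Bayes--Nash equilibria of $\auc_{\mathrm{EF}}^{(e)}$ (under the natural identification of Definition~\ref{def:strprofile}), which is exactly the claim.

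\textbf{Main obstacle.} The proof is conceptually clean, so the only subtlety is purely notational: one must carefully handle the fact that in $\auc_{\mathrm{EF}}^{(e)}$ the action space has an additional coordinate $z_i$, and verify that every BNE of $\auc_{\mathrm{EF}}^{(e)}$ indeed induces a strategy expressible in the form of Definition~\ref{def:strprofile} (i.e., that $z_i$ at equilibrium really is a deterministic function of $b_i$, up to payoff-irrelevant ties when $u_i(t_i,b_i)=e_i$). Apart from this bookkeeping, the argument is a one-line monotone-transformation observation enabled by $\delta\in(0,1)$.
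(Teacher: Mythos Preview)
Your proof is correct and follows essentially the same approach as the paper: both observe that after optimizing over $z_i$, bidder $i$'s interim utility in $\auc_{\mathrm{EF}}^{(e)}$ is $\max\{\delta\,u_i(t_i,b_i),\,u_i(t_i,b_i)-(1-\delta)e_i\}$, a strictly increasing function of $u_i(t_i,b_i)$ when $\delta\in(0,1)$, so the best-response sets over $b_i$ coincide. Your version is slightly more explicit about first reducing the action space via the optimal choice of $z_i$ and about the identification with Definition~\ref{def:strprofile}, but the argument is the same.
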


\notshow{
\begin{remark}
Daskalakis et al.~\cite{daskalakis_multi-item_2022} provides an alternative mechanism with ``ghost bidders
''. This mechanism deterministically charges an entry fee and samples a bid from a "ghost bidder" in the execution of $\auc$ whenever a real bidder $i$ declines to pay the entry fee. As discussed in their paper, this mechanism is more practical but may introduce new equilibria. We point out that if we replace the randomized entry fees using the deterministic entry fees together with ghost bidders, all claims in Theorem~\ref{thm:main} \yangnote{still hold, except that we need to evaluate the revenue of $\auc_{\mathrm{EF}}^{(e)}$ at a ``focal equilibrium'' that can be computed using $s$.}
\end{remark}
}
We now discuss the revenue obtained by our mechanism with entry fees. The revenue consists of two parts: (i) the revenue derived from auction $\auc$, {i.e.,} $\rev_{\dist}^{(\str)}(\auc)$; (ii) the revenue obtained from the entry fees. We hereby provide a formal definition for the revenue generated from entry fees as follows.

\begin{definition}[Entry Fee Revenue]
\label{def:entryfee}
\[\efrev^{(\str)}_{\dist}(\auc) = \sup_{e\in \mathbb{R}^n_{\geq 0}} \sum_{i\in [n]} e_i\cdot \Pr_{\substack{\type_i\sim \dist_i}}\InBrackets{\E_{\substack{\type_{-i}\sim{\dist_{-i}}\\ \bid\sim\str(\type) }}\InBrackets{u_i(t_i, b)} \geq e_i}.\]
\end{definition}
It is important to note that the auction $\auc_{\mathrm{EF}}^{(e)}$ cannot fully obtain the revenue of auction $\auc$, i.e.,  $\rev_{\dist}^{(\str)}(\auc)$, and the revenue derived from entry fees, i.e., $\efrev^{(\str)}_{\dist}(\auc)$, at the same time. This is due to the fact that that when entry fees are imposed, bidders may refuse to enter the auction, which could potentially reduce the revenue generated by the auction $\auc$. Nevertheless, we could choose entry fees in a way to either maximize the revenue collected from the entry fees, thereby obtaining $\efrev^{(\str)}_{\dist}(\auc)$, or to set all entry fees to $0$ and attain $\rev_{\dist}^{(\str)}\InParentheses{\auc}$. In other words, $\rev^{(\str)}_{\dist}\InParentheses{\auc^{(e)}_{\mathrm{EF}}}$ is at least {$ \max\left\{\rev^{(\str)}_{\dist}(\auc), (1 - \delta - \varepsilon)\efrev^{(\str)}_{\dist}(\auc)\right\}$} for any $\varepsilon>0$.

\begin{lemma}
\label{lem:entryrev}
For any $\varepsilon > 0$, there exists a set of entry fees $\{e_i\}_{i\in [n]}$ so that
\[\rev^{(\str)}_{\dist}\InParentheses{\auc^{(e)}_{\mathrm{EF}}} \geq { \max\left\{\rev^{(\str)}_{\dist}(\auc), (1 - \delta - \varepsilon)\efrev^{(\str)}_{\dist}(\auc)\right\}}. \]
\end{lemma}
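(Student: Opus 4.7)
The plan is to construct two candidate entry-fee schedules and argue that whichever of the two terms in the maximum is larger, the corresponding schedule already dominates it; since the lemma only asserts existence of some fee schedule, this adaptive choice suffices.

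First I would take the trivial choice $e_i = 0$ for every $i$. Define $U_i(t_i) := \E_{\type_{-i}\sim\dist_{-i},\,\bid\sim\str(\type)}\InBrackets{u_i(t_i,b)}$, the equilibrium expected utility of bidder $i$ conditional on type $t_i$. At any Bayes-Nash equilibrium of $\auc$, $U_i(t_i)\geq 0$, because playing $\perp$ on every item is always available and yields utility $0$, so $U_i(t_i)\geq e_i$ holds automatically for $e_i=0$. Hence every bidder enters, and by Lemma~\ref{lem:RandomSameEq} the profile $\str$ remains an equilibrium of $\auc^{(0)}_{\mathrm{EF}}$. The mechanism behaves identically to $\auc$ on $\str$, so its revenue equals $\rev^{(\str)}_{\dist}(\auc)$.

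Second I would exploit the supremum in Definition~\ref{def:entryfee}. Assuming $\efrev^{(\str)}_{\dist}(\auc)$ is finite (the infinite case is handled by an analogous divergent sequence), I pick entry fees $\{e_i^*\}_{i\in[n]}$ with
\[
\sum_{i\in[n]} e_i^* \Pr_{\type_i\sim\dist_i}\InBrackets{U_i(\type_i)\geq e_i^*} \;\geq\; (1-\varepsilon)\,\efrev^{(\str)}_{\dist}(\auc).
\]
By Lemma~\ref{lem:RandomSameEq}, $\str$ remains a Bayes-Nash equilibrium of $\auc^{(e^*)}_{\mathrm{EF}}$. According to Definition~\ref{def:strprofile}, a bidder of type $\type_i$ agrees to pay the fee ($z_i=1$) precisely when $U_i(\type_i)\geq e_i^*$, and the fee is actually charged with probability $1-\delta$. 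Hence the expected revenue from entry fees alone is
\[
(1-\delta)\sum_{i\in[n]} e_i^* \Pr_{\type_i}\InBrackets{U_i(\type_i)\geq e_i^*} \;\geq\; (1-\delta)(1-\varepsilon)\,\efrev^{(\str)}_{\dist}(\auc) \;\geq\; (1-\delta-\varepsilon)\,\efrev^{(\str)}_{\dist}(\auc).
\]
Since payments in $\auc$ are non-negative, the auction-side revenue only adds to this, yielding $\rev^{(\str)}_{\dist}(\auc^{(e^*)}_{\mathrm{EF}}) \geq (1-\delta-\varepsilon)\,\efrev^{(\str)}_{\dist}(\auc)$.

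Selecting whichever of $e=0$ and $e=e^*$ produces the higher revenue then dominates $\max\left\{\rev^{(\str)}_{\dist}(\auc),\,(1-\delta-\varepsilon)\efrev^{(\str)}_{\dist}(\auc)\right\}$, as required. The only real subtlety, rather than a genuine obstacle, is verifying that the entry decision at a mixed-strategy equilibrium depends only on $\type_i$: since equilibrium forces every bid in the support of $\str_i(\type_i)$ to yield the same conditional expected utility (namely $U_i(\type_i)$), the indicator $\indic\InBrackets{U_i(\type_i)\geq e_i^*}$ is well defined and matches the probability appearing in Definition~\ref{def:entryfee}, which is exactly what the entry-fee revenue calculation needs.
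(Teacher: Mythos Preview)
Your proposal is correct and follows essentially the same two-case argument as the paper: set all entry fees to zero to recover $\rev^{(\str)}_{\dist}(\auc)$, and otherwise choose fees $\{e_i^*\}$ that $(1-\varepsilon)$-approximate the supremum in Definition~\ref{def:entryfee}, then lower bound the collected entry-fee revenue by $(1-\delta)(1-\varepsilon)\,\efrev^{(\str)}_{\dist}(\auc)\geq (1-\delta-\varepsilon)\,\efrev^{(\str)}_{\dist}(\auc)$. Your additional remarks (nonnegativity of $U_i(t_i)$ via the null bid, nonnegativity of the auction-side payments, and the observation that at equilibrium every bid in the support of $\str_i(\type_i)$ yields the same conditional utility so that $z_i$ is effectively type-determined) are all correct and simply make explicit steps the paper leaves implicit.
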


{The proofs of Lemma~\ref{lem:RandomSameEq} and Lemma~\ref{lem:entryrev} are postponed to Appendix~\ref{subsec:proofofRandomSameEq} and Appendix~\ref{subsec:proofofentryrev}, respectively.}

\subsection{Mechanisms with Reserve Prices}
\label{subsec:reservemecha}

{The following lemma provides a revenue guarantee for $\auc^{(r)}_{\mathrm{RP}}$.} 
{Importantly, this guarantee holds for any Bayes-Nash equilibrium of $\auc^{(r)}_{\mathrm{RP}}$.}

\begin{lemma}
\label{lem:reserveprice}
    For any type distribution $\dist$ and valuation functions $\left\{\val_i\right\}_{i\in[n]}$, if the simultaneous auction $\auc$ satisfies the first and second conditions of \Cref{def:good}, and \(\left\{r_{ij}\right\}_{i\in[n],j\in[m]}\) is a set of reserved prices that meets the following two conditions for some absolute constant \(b\in(0,1)\):
    \begin{itemize}
        \item[(1)] \(\sum_{i\in [n]}\Pr[V_{i}(t_{ij})\ge r_{ij}]\le b\), \(\forall j\in [m]\);
        \item[(2)] $\sum_{j\in[m]}\Pr[V_{i}(t_{ij})\ge r_{ij}]\le \half$, $\forall i\in [n]$,
    \end{itemize} then for any Bayes-Nash equilibrium $\str$ of the simultaneous auction with reserved prices $\auc_{\mathrm{RP}}^{(r)}$, the following revenue guarantee holds:
   \[\frac{2}{1-b}\cdot \rev^{(s)}_{D}\left(\auc^{(r)}_{\mathrm{RP}}\right) \geq \sum_{i,j}r_{ij}\cdot \Pr\left[V_i(t_{ij})\ge r_{ij}\right] .\]
\end{lemma}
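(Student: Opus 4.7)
The plan is to reduce the revenue lower bound to counting the probability of a specific ``isolation event'' per (bidder, item) pair, and then to leverage the two structural conditions on the reserves to control this probability.

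First, since payments are non-negative by the first condition of \Cref{def:good} and since in $\auc^{(r)}_{\mathrm{RP}}$ the winner of item $j$ pays at least $r_{i^*(j),j}$, the revenue satisfies
\[
\rev^{(s)}_{D}\bigl(\auc^{(r)}_{\mathrm{RP}}\bigr) \ge \sum_{i,j} r_{ij}\,\Pr_s[W_{ij}],
\]
where $W_{ij}$ denotes the event that bidder $i$ wins item $j$ in equilibrium $s$. It therefore suffices, for each $(i,j)$, to exhibit a sub-event of $W_{ij}$ of probability at least $\frac{1-b}{2}\,q_{ij}$, where $q_{ij}:=\Pr[V_i(t_{ij})\ge r_{ij}]$.

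Define the \emph{isolation event} $E_{ij}$ as requiring (a) $V_i(t_{ij})\ge r_{ij}$; (b) $V_i(t_{ij'})<r_{ij'}$ for every $j'\ne j$; and (c) $V_{i'}(t_{i'j})<r_{i'j}$ for every $i'\ne i$. Using the independence of types across bidders and items together with the elementary bounds $\prod_{j'\ne j}(1-q_{ij'})\ge 1-\sum_{j'}q_{ij'}\ge \frac{1}{2}$ (from condition (2)) and $\prod_{i'\ne i}(1-q_{i'j})\ge 1-\sum_{i'}q_{i'j}\ge 1-b$ (from condition (1)), we obtain
\[
\Pr[E_{ij}]=q_{ij}\prod_{j'\ne j}(1-q_{ij'})\prod_{i'\ne i}(1-q_{i'j})\;\ge\;\frac{1-b}{2}\,q_{ij}.
\]

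The main obstacle---and the heart of the proof---is to show $E_{ij}\subseteq W_{ij}$ (up to a measure-zero tie), so that in the isolation event every equilibrium of $\auc^{(r)}_{\mathrm{RP}}$ must allocate item $j$ to bidder $i$. I would prove this via two symmetric rationality arguments based on the subadditivity of valuations assumed throughout the paper. First, for any bidder $i'$ with $V_{i'}(t_{i'j})<r_{i'j}$, the marginal value of acquiring item $j$ is at most $V_{i'}(t_{i'j})$ (by subadditivity), which is strictly less than the reserve payment $r_{i'j}$ charged upon winning; hence at any equilibrium $i'$ cannot win item $j$ with positive probability, because replacing their bid on item $j$ with $\perp$ would strictly improve their utility. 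Symmetrically, in the event $E_{ij}$ bidder $i$ has $V_i(t_{ij'})<r_{ij'}$ for every $j'\ne j$, so by the same argument bidder $i$ wins no item other than $j$ and is therefore unsaturated; combined with (c), which guarantees no other bidder seriously contests item $j$, the deviation of bidding at the reserve on item $j$ gives bidder $i$ utility $V_i(t_{ij})-r_{ij}\ge 0$ and wins item $j$ uncontested (using the highest-bid-wins property from the second condition of \Cref{def:good}). Combining these observations with standard tie-breaking yields $\Pr_s[W_{ij}]\ge \Pr[E_{ij}]\ge \frac{1-b}{2}q_{ij}$, which summed with weights $r_{ij}$ and combined with the first display gives $\frac{2}{1-b}\rev^{(s)}_D(\auc^{(r)}_{\mathrm{RP}})\ge \sum_{i,j}r_{ij}q_{ij}$, completing the proof.
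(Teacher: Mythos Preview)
Your approach is essentially the same as the paper's: define the isolation event, bound its probability via conditions (1) and (2), and argue that in this event bidder $i$ must win item $j$ at any equilibrium.

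There is one imprecision worth tightening. You establish only that a bidder $i'$ with $V_{i'}(t_{i'j})<r_{i'j}$ \emph{does not win} item $j$ at equilibrium, and then invoke a deviation by bidder $i$ to $r_{ij}$ that ``wins item $j$ uncontested.'' But ``does not win at equilibrium'' is a statement about the outcome when \emph{all} bidders play equilibrium strategies; it does not by itself guarantee that $i'$'s equilibrium bid on item $j$ is below $r_{ij}$, so after $i$'s deviation the claim that $i$ wins is not yet justified. The paper closes this by proving the stronger statement that any bidder with value below reserve must in fact \emph{bid $\perp$} on that item at equilibrium (otherwise, on the positive-probability event where all bidders' values for that item lie below their reserves, some such bidder would win and could strictly improve by switching to $\perp$). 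With that in hand, in the isolation event all $i'\ne i$ bid $\perp$ on item $j$, and bidder $i$ strictly prefers a small non-$\perp$ bid over $\perp$, so $i$ wins. Your sketch already contains all the ingredients for this fix; just upgrade the conclusion of your first rationality argument from ``cannot win'' to ``must bid $\perp$.''
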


The proof of Lemma~\ref{lem:reserveprice} is postponed to Appendix~\ref{subsec:proofofreserveprice}. 

\section{Proof of Theorem~\ref{thm:main}}
\label{sec:duality}

In this section, we complete the proof of Theorem~\ref{thm:main}. We extend the previous techniques, i.e., the duality framework \cite{cai_duality-based_2021,cai_simple_2017}, to simultaneous auctions by developing a new core-tail analysis. A crucial structure from the preceding approach hinged on the subadditivity and Lipschitzness of the bidders' utility functions. {Fortunately, the structure of simultaneous auctions ensures that the maximum utility a bidder can derive from a set of items (by bidding on them) remains a subadditive function. However, the Lipschitzness of the utility functions introduces additional subtlety. In simultaneous auctions, where bidding strategies form a Bayes-Nash equilibrium, each bidder faces  a distribution of prices, as opposed to a set of static prices, as encountered in posted price mechanisms analyzed in previous work.} This shift introduces a new challenge in controlling the Lipschitz constant of the utility functions, which, in turn, affects the concentration result.

We first introduce some notation. As in \cite{cai_simple_2017}, we assume that the type distributions are discrete{. See their paper for a discussion on how to convert continuous distributions to discrete ones without much revenue loss.} We fix type distribution $\dist$ in this section, and the probability {mass} functions of $\dist_i$ and $\dist_{ij}$ are denoted as $f_i(\cdot)$ and $f_{ij}(\cdot)$, respectively. Furthermore, the support of $\dist_i$ and $\dist_{ij}$ are represented by $T_i$ and $T_{ij}$. Recall that we define $V_i\InParentheses{\type_{ij}}$ as $\val_i\InParentheses{\type_i, \{j\}}$. We denote $F_{ij}$ as the distribution of $V_i\InParentheses{\type_{ij}}$ and let $\tilde{\varphi}_{ij}(x)$ be the Myerson's ironed virtual value~\cite{myerson_optimal_1981} of $x$ with respect to distribution $F_{ij}$. 

For any direct-revelation Bayesian Incentive Compatible mechanism $\M$, the allocation rule of $\M$ is represented by $\sigma$, wherein $\sigma_{iS}(\type_i)$ denotes the probability that bidder $i$ is allocated set $S$ with type $\type_i$. Given a set of parameters $\beta=\left\{\beta_{ij}\right\}_{i\in[n],j\in[m]} \in \mathbb{R}_{\geq 0}^{nm}$, we partition $T_i$ into $m + 1$ regions: (i) $R_{0}^{(\beta_i)}$ contains all types $\type_i$ satisfying $V_i(\type_{ij}) < \beta_{ij}$ for all $j\in [m]$. (ii) $R_j^{(\beta_i)}$ contains all types $t_i$ such that $V_i(\type_{ij}) \geq \beta_{ij}$ and $j$ is the smallest index in $\arg\max_k\left\{V_i(\type_{ik})-\beta_{ik}\right\}$. Intuitively, $R_j^{(\beta_i)}$ contains all types $\type_i$ for which item $j$ becomes the preferred item of bidder $i$ when the price for item $j$ is $\beta_{ij}$.

For each bidder $i$, define 
\[c_i = \inf\left\{x\geq 0:\sum_j\Pr_{\type_{ij}}\InBrackets{V_i\InParentheses{\type_{ij}}\geq \beta_{ij} + x} \leq \frac12\right\}.\]

For each $t_i\in T_i$, let $\mathcal{T}_i(\type_i) = \left\{j: V_i\InParentheses{\type_{ij}}\geq \beta_{ij} + c_i\right\}$ be the set of items that is above the price and $\mathcal{C}_i\InParentheses{\type_i} = [m]\backslash \mathcal{T}_i\InParentheses{\type_i}$ be its complement. Namely, if we set the reserve price (or posted price) of item $j$ for bidder $i$ at $\beta_{ij} + c_i$, it is very likely that bidder $i$ {will} buy at most one item. Thus, we could expect that {the contribution to revenue from $\mathcal{T}_i$ can be approximated by $\auc$ when incorporating reserve prices.} We now formally define the three components used to upper bound the optimal revenue below.

\begin{definition}
 For any feasible interim allocation {rule} $\sigma$ and any $\beta$, denote 
 \begin{align*}
     \single\InParentheses{\sigma, \beta} & = \sum_i \sum_{\type_i\in T_i} f_i(\type_i) \sum_{j\in [m]} \indic\InBrackets{\type_i\in R_j^{(\beta_i)}}\cdot \pi_{ij}(\type_i)\cdot \tilde{\varphi}_{ij}(\type_{ij}),\\
     \tail\InParentheses{\beta} &= \sum_i\sum_j\sum_{\type_{ij}:V_i(\type_{ij})\geq \beta_{ij} + c_i} f_{ij}(\type_{ij})\cdot V_i(\type_{ij})\sum_{k\neq j}\Pr_{\type_{ik}}\InBrackets{V_i\InParentheses{\type_{ik}} - \beta_{ik} \geq V_i\InParentheses{\type_{ij}}-\beta_{ij}},\\
     \core\InParentheses{\sigma,\beta} &= \sum_{i}\sum_{\type_i\in T_i}f_i(\type_i) \sum_{S\subseteq [m]} \sigma_{iS}\InParentheses{\type_i} \cdot \val_i\InParentheses{\type_i, S\cap \mathcal{C}_i\InParentheses{t_i}},
 \end{align*} where $\pi_{ij}\InParentheses{\type_i} = \sum_{S:j\in S}\sigma_{iS}\InParentheses{\type_i}$ is the probability that item $j$ is alloctaed to bidder $i$ with type $\type_i$.
\end{definition}

Let $\rev_{\dist}(\M)$ be the revenue of mechanism $\M$ while the bidders' types are drawn from the distribution $D$. Cai and Zhao~\cite{cai_simple_2017} show that the optimal revenue could be upper bounded by $\single$, $\tail$ and $\core$.

{
\begin{lemma}[\cite{cai_simple_2017}]
\label{lem:RevUpper}
For any BIC mechanism $\M$ 
and given any set of parameters $\beta=\{\beta_{ij}\}_{i\in[n],j\in[m]}\in \mathbb{R}^{nm}_{\geq 0}$, there exists a feasible interim allocation $\sigma^{(\beta)}$ so that 
\begin{align*}
\begin{split}
    \rev_{\dist}(\M) \leq 2\cdot \single\InParentheses{\sigma^{(\beta)}, \beta} + 4\cdot \tail\InParentheses{\beta} + 4\cdot \core\InParentheses{\sigma^{(\beta)}, \beta}.
\end{split}
\end{align*}

Additionally, for any constant $b\in (0,1)$ and any mechanism $\M$, there exists a set of parameters $\beta$ such that $\sigma^{(\beta)}$ satisfies the following two properties:
\begin{align}
     &\sum_{i} \Pr_{\type_{ij}}\InBrackets{V_i(\type_{ij}) \geq \beta_{ij}}\leq b, \quad \forall j\in [m]\label{eq:beta1} \\
     &\sum_{\type_i\in T_i} f_i(\type_i)\cdot \pi_{ij}^{(\beta)}\InParentheses{\type_i} \leq \Pr_{\type_{ij}}\InBrackets{V_i(\type_{ij})\geq \beta_{ij}} / b, \ \forall i\in [n],j\in [m],\text{ where }\pi^{(\beta)}_{ij}(\type_i) = \sum_{S:j\in S}\sigma_{iS}^{(\beta)}\InParentheses{\type_i}. \label{eq:beta2}
\end{align}
\end{lemma}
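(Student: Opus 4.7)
Since the lemma is attributed to \cite{cai_simple_2017}, my plan adapts their duality-based argument. The starting point is the Cai--Devanur--Weinberg duality framework, which upper bounds the revenue of any BIC mechanism $\M$ by a ``virtual welfare'' parameterized by a flow $\lambda$ on the type space. I would construct a canonical flow tailored to $\beta$: for each $\type_i\in R_j^{(\beta_i)}$ the flow only redistributes mass along the $j$-th coordinate (the \emph{preferred item} $j^*(\type_i)=j$), while for $\type_i\in R_0^{(\beta_i)}$ the flow is trivial. A routine computation then shows the resulting virtual value satisfies
\[
\Phi_i^{(\lambda)}(\type_i,S) \;\leq\; \val_i\InParentheses{\type_i,S\setminus\{j^*(\type_i)\}} + \tilde{\varphi}_{i,j^*(\type_i)}\InParentheses{\type_{i,j^*(\type_i)}}\cdot \indic\InBrackets{j^*(\type_i)\in S},
\]
with the Myerson term interpreted as $0$ when $\type_i\in R_0^{(\beta_i)}$.

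The next step is to split the virtual welfare into the three named pieces. Summing the Myerson term over the partition $\{R_j^{(\beta_i)}\}_{j}$ produces $\single(\sigma,\beta)$ up to a factor $2$, where the factor arises from restricting the ironed virtual value to its positive part via the standard Myerson surplus argument. For the remainder $\val_i(\type_i,S\setminus\{j^*\})$, subadditivity gives
\[
\val_i\InParentheses{\type_i,S\setminus\{j^*\}} \;\leq\; \val_i\InParentheses{\type_i,\InParentheses{S\setminus\{j^*\}}\cap \mathcal{T}_i(\type_i)} + \val_i\InParentheses{\type_i,S\cap \mathcal{C}_i(\type_i)},
\]
and the second summand is bounded directly by $\core(\sigma,\beta)$. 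The first summand yields $\tail(\beta)$: any $k\in \mathcal{T}_i(\type_i)\setminus\{j^*\}$ has $V_i(\type_{ik})\geq \beta_{ik}+c_i$ yet is dominated by some $j\neq k$ with $V_i(\type_{ij})-\beta_{ij}\geq V_i(\type_{ik})-\beta_{ik}$, which produces exactly the double sum in the definition of $\tail$. The coefficient $4$ on $\tail$ and $\core$ absorbs the concentration loss coming from the median-quantile $c_i$, which ensures that with probability at least $1/2$ at most one item exceeds its threshold.

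For the additional conditions~\eqref{eq:beta1} and~\eqref{eq:beta2}, I would invoke the balanced-$\beta$ lemma of \cite{cai_simple_2017}. The idea is to initialize $\beta_{ij}$ as a quantile of $F_{ij}$ so that $\sum_i \Pr[V_i(\type_{ij})\geq \beta_{ij}]\leq b$ for every~$j\in[m]$, which immediately gives~\eqref{eq:beta1}. Property~\eqref{eq:beta2} is then a consequence of Border's theorem: the feasibility of $\sigma^{(\beta)}$ forces the interim allocation $\pi_{ij}^{(\beta)}$ to satisfy a per-bidder bound, which after rescaling by the constant $b$ matches the stated inequality. The main obstacle will be the clean construction of the canonical flow so that the virtual value collapses to the displayed bound with the tight constants; once this is established, the decomposition into $\single$, $\tail$, and $\core$ is largely mechanical, and the existence of a balanced~$\beta$ is the remaining technical subtlety.
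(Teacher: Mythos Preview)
Your sketch of the duality-based argument is essentially the approach of \cite{cai_simple_2017}, which is exactly what the paper does: it omits the proof and cites Theorem~2, Lemma~14, and Lemma~5 of \cite{cai_simple_2019}. So for the first part there is nothing to compare---you are reconstructing the cited argument, and your outline of the canonical flow, the virtual-value bound, and the $\single/\tail/\core$ split is accurate.

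There is one imprecision worth flagging in your treatment of \eqref{eq:beta1}--\eqref{eq:beta2}. Property~\eqref{eq:beta2} is not obtained from Border's theorem. The actual construction sets $\beta_{ij}$ so that $\Pr[V_i(\type_{ij})\geq \beta_{ij}] = b\cdot \sum_{\type_i} f_i(\type_i)\,\pi_{ij}(\type_i)$, where $\pi_{ij}$ is the interim allocation of the \emph{original} mechanism $\M$. Inequality~\eqref{eq:beta1} then follows from the basic feasibility constraint $\sum_i \E[\pi_{ij}]\leq 1$, and~\eqref{eq:beta2} follows because the modified allocation $\sigma^{(\beta)}$ produced in \cite{cai_simple_2019} satisfies $\E[\pi_{ij}^{(\beta)}]\leq \E[\pi_{ij}]$; substituting the defining equality for $\beta_{ij}$ gives the bound directly. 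Your description (``initialize $\beta_{ij}$ as a quantile of $F_{ij}$'' and then invoke Border) misses this coupling between $\beta$ and the allocation of $\M$, which is precisely what makes \eqref{eq:beta2} hold with the factor~$1/b$.
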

The first part of Lemma~\ref{lem:RevUpper}, namely the revenue guarantee, is derived by combining Theorem 2 and Lemma 14 from~\cite{cai_simple_2019} (the full version of~\cite{cai_simple_2017}), and hence, the proof is omitted here. In the second part of Lemma~\ref{lem:RevUpper}, we assert that parameters $\beta$ can be chosen such that the corresponding interim allocation $\sigma^{(\beta)}$
  satisfies two useful properties. This lemma is nearly identical to Lemma~5 in \cite{cai_simple_2019}, albeit with a minor alteration. It can be readily verified that the proof for Lemma 5 suffices to demonstrate this variation.
  }
\notshow{
\begin{lemma}[\cite{cai_simple_2017}]
\label{lem:RevUpper}
For any BIC mechanism $\M$ with interim allocation $\sigma$ and parameters $\beta=\{\beta_{ij}\}_{i\in[n],j\in[m]}\in \mathbb{R}^{nm}_{\geq 0}$, there exists another BIC mechanism $\M^{(\beta)}$ with interim allocation $\sigma^{(\beta)}$ so that 
\begin{align*}
\begin{split}
    \rev_{\dist}(\M) \leq 2\cdot \single\InParentheses{\sigma^{(\beta)}, \beta} + 4\cdot \tail\InParentheses{\beta} + 4\cdot \core\InParentheses{\sigma^{(\beta)}, \beta}.
\end{split}
\end{align*}
\end{lemma}

Lemma~\ref{lem:RevUpper} is simply a combination of Theorem~2 and Lemma~14 in~\cite{cai_simple_2019} (the full version of~\cite{cai_simple_2017}), and we omit the proof here. It is important to note that the parameters $\beta$ may be selected arbitrarily. In our following analysis, we choose parameters $\beta$ that satisfy the conditions stated in the subsequent lemma. This lemma is nearly identical to Lemma~5 in \cite{cai_simple_2019} with a minor adaption and its proof is postponed to Appendix~\ref{subsec:goodbetaappendix}.

\begin{lemma}
\label{lem:goodbeta}
For any constant $b\in (0,1)$ and any mechanism $\M$, there exists a set of parameters $\beta$ such that for the mechanism $\M^{(\beta)}$ constructed in Lemma~\ref{lem:RevUpper} by $\beta$, the following holds:
\begin{align}
     &\sum_{i} \Pr_{\type_{ij}}\InBrackets{V_i(\type_{ij}) \geq \beta_{ij}}\leq b, \quad \forall j\in [m]\label{eq:beta1} \\
     &\sum_{\type_i\in T_i} f_i(\type_i)\cdot \pi_{ij}^{(\beta)}\InParentheses{\type_i} \leq \Pr_{\type_{ij}}\InBrackets{V_i(\type_{ij})\geq \beta_{ij}} / b, \ \forall i\in [n],j\in [m],\text{ where }\pi^{(\beta)}_{ij}(\type_i) = \sum_{S:j\in S}\sigma_{iS}^{(\beta)}\InParentheses{\type_i}. \label{eq:beta2}
\end{align}
\end{lemma}
}

Suppose the simultaneous auction $\auc$ admits an equilibrium $\str$ under type distribution $\dist$ and valuation functions $\{\val_i\}_{i\in [n]}$ so that \CEtuple~is \SAprop~as defined in Definition~\ref{def:good}.  

We proceed to define the maximum revenue that can be achieved by simultaneous auction $\auc$ with reserve prices.
\begin{definition}
\label{def:rprev}
Define \(\rprev\) as the revenue obtainable by a simultaneous auction with optimal reserve prices $r_{ij}$'s, such that the revenue at its worst equilibrium $\str$ is maximized:

\[\rprev := \sup_{r} \inf_{\str\text{ is BNE}} \rev^{(\str)}_{\dist}\Big(\auc_{\mathrm{RP}}^{(r)}\Big) \]
\end{definition}

Given that \( \rprev \) is finite, the subsequent corollary directly follows.
\begin{lemma}
\label{coro:rprev}
For any $\varepsilon > 0$, there exists a set of reserve prices $\{r_{ij}\}_{i\in [n], j\in [m]}$ so that for any equilibrium $\str$ of $\auc_{\mathrm{RP}}^{(r)}$, its revenue at $\str$ achieves \((1 - \varepsilon)\rprev\).
\end{lemma}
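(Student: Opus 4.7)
The plan is to derive this corollary as a direct consequence of the definition of $\rprev$ as a supremum over reserve prices, together with the finiteness of $\rprev$ (which is assumed in the statement preceding the corollary). The key observation is that by the definition of supremum, any real number strictly below $\rprev$ is attained (or exceeded) by the infimum-over-equilibria revenue of some concrete reserve price vector.

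First I would handle the trivial case $\rprev = 0$ separately: in this case, since revenue is non-negative, any choice of reserve prices (e.g., $r_{ij} = 0$ for all $i,j$) vacuously satisfies $\rev^{(\str)}_{\dist}(\auc_{\mathrm{RP}}^{(r)}) \geq 0 = (1-\varepsilon)\rprev$ at every equilibrium $\str$. For the main case $\rprev > 0$, I would set $\varepsilon' := \varepsilon \cdot \rprev > 0$ and appeal to the definition of the supremum to extract a particular reserve price vector $\{r_{ij}\}_{i\in[n],j\in[m]}$ with
\[
\inf_{\str \text{ is BNE}} \rev^{(\str)}_{\dist}\!\left(\auc_{\mathrm{RP}}^{(r)}\right) \;\geq\; \rprev - \varepsilon' \;=\; (1-\varepsilon)\,\rprev.
\]
Then for any particular Bayes-Nash equilibrium $\str$ of $\auc_{\mathrm{RP}}^{(r)}$, the value $\rev^{(\str)}_{\dist}(\auc_{\mathrm{RP}}^{(r)})$ is at least the infimum over all such equilibria, and the desired bound follows immediately.

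I do not anticipate any real obstacle here: the argument is essentially the ``$\varepsilon$-almost supremum'' trick, rescaled multiplicatively using $\rprev < \infty$. One small subtlety I would flag is ensuring that $\inf_{\str}$ is over a non-empty set (i.e., that $\auc_{\mathrm{RP}}^{(r)}$ admits at least one Bayes-Nash equilibrium for the chosen $r$), so that taking an arbitrary equilibrium $\str$ in the conclusion is meaningful; existence of equilibria is discussed earlier in the paper (Appendix~\ref{subsec:existence}) and can be invoked here. Beyond that, the result is essentially a restatement of the supremum definition and does not require any auction-theoretic input.
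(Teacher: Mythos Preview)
Your proposal is correct and matches the paper's treatment: the paper does not give a proof at all, merely remarking that the corollary ``directly follows'' from the finiteness of $\rprev$, which is exactly the $\varepsilon$-almost-supremum argument you spell out. The case split on $\rprev = 0$ and the equilibrium-existence caveat are reasonable additions but not strictly needed for what the paper asserts.
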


In the following proof, we respectively approximate $\single, \tail$ and $\core$. Figure \ref{fig:proofroadmap} below offers a comprehensive overview of how we organize our proof.

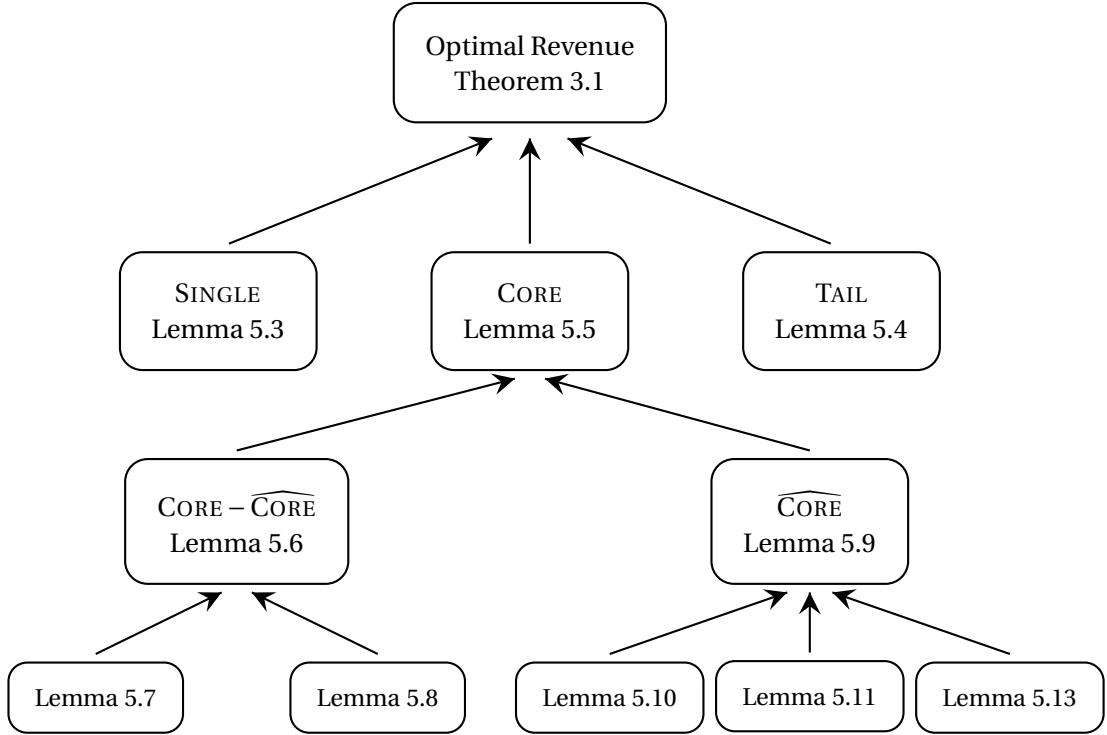
\begin{figure}[ht]
\centering
\begin{tikzpicture}[thick,amat/.style={matrix of nodes,nodes in empty cells,
  fsnode/.style={draw,solid,circle,execute at begin node={$u_{\the\pgfmatrixcurrentrow}$}},
  ssnode/.style={draw,solid,circle,execute at begin node={$v_{\the\pgfmatrixcurrentrow}$}}}]
  
 \node (ThmMain) [draw, rounded corners = 3mm,
                                 align=flush center, 
                                 inner sep=12 pt]
    {Optimal Revenue\\\Cref{thm:main}};

\coordinate[below = 0.2cm of ThmMain] (L1Mid) ;
\coordinate[left = 0.5cm of L1Mid] (L1Left) ;
\coordinate[right = 0.5cm of L1Mid] (L1Right) ;

\node (Core) [draw, rounded corners = 3mm,
                                 below = 1.5cm of L1Mid,    
                                 align=flush center, 
                                 inner sep=12 pt]
    {$\core$\\\Cref{lem:approxcore}};

\coordinate[above = 0.1cm of Core] (CoreLinkPos) ;

\node (Single) [draw, rounded corners = 3mm,
                                 left = 1.5cm of Core,    
                                 align=flush center, 
                                 inner sep=12 pt]
    {$\single$\\\Cref{lem:approxsingle}};

\coordinate[left = 4cm of CoreLinkPos] (SingleLinkPos) ;

\node (Tail) [draw, rounded corners = 3mm,
                                 right = 1.5cm of Core,    
                                 align=flush center, 
                                 inner sep=12 pt]
    {$\tail$\\\Cref{lem:approxtail}};

\coordinate[right = 4cm of CoreLinkPos] (TailLinkPos) ;

{\draw[decoration={markings,mark=at position 1 with
    {\arrow[scale=2,>=stealth]{>}}},postaction={decorate}] (SingleLinkPos) -- (L1Left) ;}

{\draw[decoration={markings,mark=at position 1 with
    {\arrow[scale=2,>=stealth]{>}}},postaction={decorate}] (CoreLinkPos) -- (L1Mid) ;}

{\draw[decoration={markings,mark=at position 1 with
    {\arrow[scale=2,>=stealth]{>}}},postaction={decorate}] (TailLinkPos) -- (L1Right) ;}

\coordinate[below = 2cm of Core] (Level2Mid) ;

\coordinate[below = 0.1cm of Core] (CoreDownLinkPosMid) ;

\coordinate[left = 0.2cm of CoreDownLinkPosMid] (CoreDownLinkLeft) ;

\coordinate[right = 0.2cm of CoreDownLinkPosMid] (CoreDownLinkRight) ;

\node (CoreH) [draw, rounded corners = 3mm,
                                 right = 2.4cm of Level2Mid,    
                                 align=flush center, 
                                 inner sep=12 pt]
    {$\coreh$\\\Cref{lem:coreh}};

\coordinate[above = 0.1cm of CoreH] (CoreHUpLinkPos) ;

{\draw[decoration={markings,mark=at position 1 with
    {\arrow[scale=2,>=stealth]{>}}},postaction={decorate}] (CoreHUpLinkPos) -- (CoreDownLinkRight) ;}

\node (CoreGap) [draw, rounded corners = 3mm,
                                 left = 2.4cm of Level2Mid,    
                                 align=flush center, 
                                 inner sep=12 pt]
    {$\core-\coreh$\\\Cref{lem:coregap}};
    
\coordinate[above = 0.1cm of CoreGap] (CoreGapUpLinkPos) ;

{\draw[decoration={markings,mark=at position 1 with
    {\arrow[scale=2,>=stealth]{>}}},postaction={decorate}] (CoreGapUpLinkPos) -- (CoreDownLinkLeft) ;}

\coordinate[below = 0.1cm of CoreGap] (CoreGapDownMid) ;
\coordinate[left = 0.2cm of CoreGapDownMid] (CoreGapDownLeft) ;
\coordinate[right = 0.2cm of CoreGapDownMid] (CoreGapDownRight) ;

\coordinate[below = 1.5cm of CoreGap] (Level3LeftMid) ;

\node (Tailtau) [draw, rounded corners = 2.5mm,
                                 left = 0.7cm of Level3LeftMid,    
                                 align=flush center, 
                                 inner sep=10 pt]
    {\small Lemma~\ref{lem:tailtau}};
    
\coordinate[above = 0.1cm of Tailtau] (TailtauPos) ;

{\draw[decoration={markings,mark=at position 1 with
    {\arrow[scale=2,>=stealth]{>}}},postaction={decorate}] (TailtauPos) -- (CoreGapDownLeft) ;}

\node (Sumofci)[draw, rounded corners = 2.5mm,
                                 right = 0.7cm of Level3LeftMid,    
                                 align=flush center, 
                                 inner sep=10 pt]
    {\small Lemma~\ref{lem:sumofci}};

\coordinate[above = 0.1cm of Sumofci] (SumofciPos) ;

{\draw[decoration={markings,mark=at position 1 with
    {\arrow[scale=2,>=stealth]{>}}},postaction={decorate}] (SumofciPos) -- (CoreGapDownRight) ;}

\coordinate[below = 0.1cm of CoreH] (CoreHDownMid) ;
\coordinate[left = 0.3cm of CoreHDownMid] (CoreHDownLeft) ;
\coordinate[right = 0.3cm of CoreHDownMid] (CoreHDownRight) ;

\coordinate[below = 1.5cm of CoreH] (Level3RightMid) ;

\node (utih)[draw, rounded corners = 2.5mm,
                                 left = 1.4cm of Level3RightMid,    
                                 align=flush center, 
                                 inner sep=10 pt]
    {\small Lemma~\ref{lem:utih}};

\node (EntryFee)[draw, rounded corners = 2.5mm,
                                 below = -0.5 cm of Level3RightMid,    
                                 align=flush center, 
                                 inner sep=10 pt]
    {\small Lemma~\ref{lem:entryfee}};

\node (SumofTau)[draw, rounded corners = 2.5mm,
                                 right = 1.4cm of Level3RightMid,    
                                 align=flush center, 
                                 inner sep=10 pt]
    {\small Lemma~\ref{lem:sumtau}};

\coordinate[above = 0.1cm of utih] (utihPos) ;
\coordinate[above = 0.1cm of EntryFee] (EntryFeePos) ;
\coordinate[above = 0.1cm of SumofTau] (SumofTauPos) ;

{\draw[decoration={markings,mark=at position 1 with
    {\arrow[scale=2,>=stealth]{>}}},postaction={decorate}] (utihPos) -- (CoreHDownLeft) ;}

{\draw[decoration={markings,mark=at position 1 with
    {\arrow[scale=2,>=stealth]{>}}},postaction={decorate}] (EntryFeePos) -- (CoreHDownMid) ;}

{\draw[decoration={markings,mark=at position 1 with
    {\arrow[scale=2,>=stealth]{>}}},postaction={decorate}] (SumofTauPos) -- (CoreHDownRight) ;}

\end{tikzpicture}
\caption{Relationships among lemmas and their roles in establishing the main result}
\label{fig:proofroadmap}
\end{figure}	

 {We first show that under  parameters $\beta$ and $\sigma^{(\beta)}$ that satisfy ~\eqref{eq:beta1} and~\eqref{eq:beta2} of \Cref{lem:RevUpper},} $\single$ and $\tail$ could be easily approximated by $\auc_{\mathrm{RP}}^{(r)}$ with appropriately selected reserve prices.

\begin{lemma}
\label{lem:approxsingle}    
For any $\sigma$ and $\beta$ that satisfy~\eqref{eq:beta1} and~\eqref{eq:beta2} {as stipulated in \Cref{lem:RevUpper},~\footnote{{Here it means that  $\sigma$ satisfies ~\eqref{eq:beta2} when $\sigma^{(\beta)}$ is replaced by $\sigma$.}}} $\single\InParentheses{\sigma,\beta}$ could be upper bounded by the revenue of a simultaneous auction with {personalized reserve prices }(Mechanism~\ref{alg:reserveprice}). That is to say, 
\[\single\InParentheses{\sigma, \beta} \leq 8 \cdot \rprev.\]
\end{lemma}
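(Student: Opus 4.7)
The plan to establish $\single(\sigma,\beta) \le 8\,\rprev$ proceeds in three main steps.

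\textbf{Step 1 (Rewrite $\single$ as per-item virtual welfares).} By swapping the order of summation and marginalizing out $t_{i,-j}$,
\[\single(\sigma,\beta) \;=\; \sum_{j\in [m]}\sum_{i\in [n]} \E_{t_{ij}}\!\bigl[\tilde\varphi_{ij}(t_{ij})\cdot Q_{ij}(t_{ij})\bigr],\]
where $Q_{ij}(t_{ij}) := \E_{t_{i,-j}}\!\bigl[\mathbbm{1}[t_i\in R_j^{(\beta_i)}]\cdot \pi_{ij}(t_i)\bigr]$. Two properties of $Q_{ij}$ will drive the argument: (i) $Q_{ij}(t_{ij})=0$ for $V_i(t_{ij})<\beta_{ij}$, directly from the definition of $R_j^{(\beta_i)}$; and (ii) $\sum_i \E[Q_{ij}(t_{ij})]\le \sum_i \Pr[V_i(t_{ij})\ge\beta_{ij}]/b \le 1$ by combining \eqref{eq:beta2} with \eqref{eq:beta1}, so $(Q_{ij})_i$ is a feasible ex-ante single-item allocation for each $j$.

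\textbf{Step 2 (Bound per-item virtual welfare by posted-price revenue).} Using $Q_{ij}\le \mathbbm{1}[V_i(t_{ij})\ge \beta_{ij}]$, passing to the positive part of $\tilde\varphi_{ij}$, and invoking the standard Myerson identity for ironed virtual values, $\E\!\bigl[\tilde\varphi_{ij}^+(V_i)\mathbbm{1}[V_i\ge r]\bigr] = \max(r,r_{ij}^*)\cdot \Pr[V_i\ge \max(r,r_{ij}^*)]$ (with $r_{ij}^*$ the monopoly reserve of $F_{ij}$), I obtain
\[\single(\sigma,\beta) \;\le\; \sum_{i,j}\tilde r_{ij}\cdot\Pr\!\bigl[V_i(t_{ij})\ge \tilde r_{ij}\bigr],\qquad \tilde r_{ij} := \max(\beta_{ij},r_{ij}^*).\]

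\textbf{Step 3 (Convert posted-price revenue to $\rprev$ via \Cref{lem:reserveprice}).} I invoke the second part of \Cref{lem:RevUpper} with constant $b=\tfrac12$, so that \eqref{eq:beta1} gives $\sum_i \Pr[V_i(t_{ij})\ge \beta_{ij}]\le \tfrac12$, which immediately implies $\sum_i \Pr[V_i\ge \tilde r_{ij}]\le \tfrac12$ (condition~(1) of \Cref{lem:reserveprice}). To also satisfy condition~(2), I lift the reserve to $r_{ij}:=\max(\tilde r_{ij},\beta_{ij}+c_i)$; by the very definition of $c_i$, $\sum_j \Pr[V_i\ge r_{ij}]\le \sum_j \Pr[V_i\ge \beta_{ij}+c_i]\le \tfrac12$. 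Applying \Cref{lem:reserveprice} with $b=\tfrac12$ then yields $\sum_{i,j} r_{ij}\cdot\Pr[V_i(t_{ij})\ge r_{ij}] \;\le\; 4\,\rprev$.

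\textbf{Main obstacle.} The only missing link is to compare the quantity $\sum_{i,j}\tilde r_{ij}\Pr[V_i\ge \tilde r_{ij}]$ from Step~2 with $\sum_{i,j} r_{ij}\Pr[V_i\ge r_{ij}]$ from Step~3; the two sums agree except on the set $\{(i,j):\tilde r_{ij}<\beta_{ij}+c_i\}$, where the reserve was lifted by $c_i$ and the posted-price revenue may shrink. I plan to handle this loss by splitting the sum for each bidder $i$ according to whether $\tilde r_{ij}\ge \beta_{ij}+c_i$, and using the defining inequality $\sum_j \Pr[V_i\ge \beta_{ij}+c_i]\le \tfrac12$ together with $\tilde r_{ij}\le \beta_{ij}+c_i$ on the problematic set to absorb the lifted-reserve contributions at a factor of at most~$2$. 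Combining this factor of~$2$ with the factor~$4$ from \Cref{lem:reserveprice} yields $\single(\sigma,\beta)\le 8\,\rprev$, as claimed.
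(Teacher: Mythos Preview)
Your Step 2 contains a genuine error. The claimed ``Myerson identity'' $\E\bigl[\tilde\varphi_{ij}^+(V_i)\,\mathbbm{1}[V_i\ge \beta_{ij}]\bigr]=\tilde r_{ij}\Pr[V_i\ge \tilde r_{ij}]$ is not an identity for irregular distributions. Writing $q_{\tilde r}=\Pr[V_i\ge \tilde r_{ij}]$, the left side equals the ironed revenue $\tilde R_{ij}(q_{\tilde r})$, while the right side is the unironed revenue $R_{ij}(q_{\tilde r})$; since $\tilde R_{ij}$ is the concave hull of $R_{ij}$, one always has $\tilde R_{ij}(q)\ge R_{ij}(q)$, with strict inequality whenever $q$ lies in the interior of an ironed interval. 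Thus your Step~2 replaces an upper bound on $\single$ by something \emph{smaller}, which is not a valid upper bound. The paper handles this correctly by first passing to $\mathrm{OPT}^{\textsc{Copies-UD}}$ (which dominates $\single$ for any $\sigma,\beta$), and then expressing each $\tilde R_{ij}(q_{ij})$ as a convex combination $x_{ij}R_{ij}(q_{ij}')+y_{ij}R_{ij}(q_{ij}'')$ at the endpoints of the ironed interval; together with quantile halving this yields \emph{randomized} reserve prices and the factor of~$2$, and then a randomized version of \Cref{lem:reserveprice} supplies the remaining factor~$4$.

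Your ``main obstacle'' in Step~3 is also not resolved by the sketched absorption. Lifting from $\tilde r_{ij}$ to $r_{ij}=\max(\tilde r_{ij},\beta_{ij}+c_i)$ can only \emph{decrease} $r\Pr[V\ge r]$ (you are moving further above the monopoly reserve), so you need $\sum_{i,j}\tilde r_{ij}\Pr[V\ge \tilde r_{ij}]\le C\sum_{i,j}r_{ij}\Pr[V\ge r_{ij}]$, and the ingredients you cite (the bound $\tilde r_{ij}\le\beta_{ij}+c_i$ and $\sum_j\Pr[V_i\ge\beta_{ij}+c_i]\le\tfrac12$) do not give this: take a single bidder and item with $V=1$ w.p.\ $0.99$ and $V=2$ w.p.\ $0.01$, $\beta=1$, $c=1$; then $\tilde r=r^*=1$ with revenue $1$, while $r=2$ with revenue $0.02$, so no constant absorption is possible. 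The paper sidesteps this entirely: by routing through the copies setting, the ex-ante probabilities $q_{ij}$ already satisfy $\sum_j q_{ij}\le 1$ (from the matching constraint), so after halving one gets $\sum_j\Pr[V_i\ge p_{ij}]\le\tfrac12$ directly, with no need to invoke $c_i$ or lift any reserves. Finally, note that you cannot ``invoke \Cref{lem:RevUpper} with $b=\tfrac12$'' inside the proof of \Cref{lem:approxsingle}: the parameter $b$ is part of the hypothesis on the given $\beta$, not yours to choose here.
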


\begin{lemma}
\label{lem:approxtail}    
For any $\beta$ satisfying~\eqref{eq:beta1}, there exists a simultaneous auction with {personalized reserve prices} (Mechanism~\ref{alg:reserveprice}), whose revenue is at least $\frac{1-b}{2}\cdot\tail\InParentheses{\beta}$, i.e.,
\[\tail\InParentheses{\beta} \leq \frac{2}{1 - b} \cdot \rprev.\] 
\end{lemma}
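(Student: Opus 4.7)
The plan is to construct specific reserve prices $\{r_{ij}\}$ and invoke Lemma~\ref{lem:reserveprice}. For each bidder $i$ and item $j$, I would set
\[ r_{ij} \in \arg\max_{r \geq \beta_{ij}+c_i} r \cdot \Pr_{\type_{ij}}\InBrackets{V_i(\type_{ij}) \geq r}, \qquad R_{ij} := r_{ij} \cdot \Pr_{\type_{ij}}\InBrackets{V_i(\type_{ij}) \geq r_{ij}}. \]
Since $r_{ij} \geq \beta_{ij}+c_i \geq \beta_{ij}$, the first condition in Lemma~\ref{lem:reserveprice} follows from~\eqref{eq:beta1}, and the second follows directly from the definition of $c_i$. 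Hence Lemma~\ref{lem:reserveprice} implies $\rprev \geq \frac{1-b}{2}\sum_{i,j} R_{ij}$, so it suffices to prove $\tail(\beta) \leq \sum_{i,j} R_{ij}$.

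Writing $X_{ij} := V_i(\type_{ij}) - \beta_{ij}$ and $p_{ij} := \Pr[X_{ij} \geq c_i]$, and using the independence of $\type_{ij}$ and $\type_{ik}$, the definition of $\tail$ becomes
\[ \tail(\beta) = \sum_{i}\sum_{j}\sum_{k\neq j} \E\InBrackets{V_i(\type_{ij})\cdot \mathbf{1}[c_i \leq X_{ij} \leq X_{ik}]}. \]
I would split $V_i(\type_{ij}) = \beta_{ij} + X_{ij}$ and handle the two halves separately. For the $\beta_{ij}$ part, dropping $X_{ij}\le X_{ik}$ and using independence gives $\Pr[c_i \leq X_{ij} \leq X_{ik}] \leq p_{ij}p_{ik}$; summing over $k\neq j$ with $\sum_k p_{ik} \leq 1/2$ bounds this contribution by $\frac{1}{2}\sum_{i,j}\beta_{ij} p_{ij}$, and since $\beta_{ij}p_{ij}\leq (\beta_{ij}+c_i) p_{ij}\leq R_{ij}$ (take $r=\beta_{ij}+c_i$ in the sup defining $R_{ij}$), this is at most $\frac{1}{2}\sum_{i,j} R_{ij}$.

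For the $X_{ij}$ part, independence lets me write the inner expectation as $\E_{\type_{ij}}\InBrackets{X_{ij}\cdot \mathbf{1}[X_{ij} \geq c_i]\cdot G_{ik}(X_{ij})}$ where $G_{ik}(y) := \Pr[X_{ik}\geq y]$. The crucial observation is that for any $y \geq c_i$,
\[ y\cdot G_{ik}(y) = (r - \beta_{ik})\cdot \Pr_{\type_{ik}}\InBrackets{V_i(\type_{ik})\geq r} \leq r\cdot \Pr\InBrackets{V_i(\type_{ik})\geq r} \leq R_{ik}, \quad r := y+\beta_{ik} \geq \beta_{ik}+c_i. \]
Hence $X_{ij}\cdot G_{ik}(X_{ij}) \leq R_{ik}$ whenever $X_{ij}\geq c_i$, and the expectation is at most $R_{ik}\cdot p_{ij}$. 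Summing over $j\neq k$ with $\sum_{j\neq k} p_{ij}\leq 1/2$, then over $k$, bounds this part by $\frac{1}{2}\sum_{i,k} R_{ik}$. Combining the two halves yields $\tail(\beta) \leq \sum_{i,j} R_{ij}$, which together with Lemma~\ref{lem:reserveprice} completes the proof.

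The main obstacle is the $X_{ij}$-part bound: one cannot simply use $\sum_{k\neq j}\Pr[X_{ik}\geq X_{ij}] \leq 1/2$ to bound $\tail(\beta)$ by $\frac{1}{2}\sum_{i,j}\E\InBrackets{V_i(\type_{ij})\mathbf{1}[V_i(\type_{ij})\geq \beta_{ij}+c_i]}$, since this tail expectation can be infinite (e.g., for equal-revenue distributions) while the single-item revenue $R_{ij}$ is finite. The clever decomposition $V_i(\type_{ij}) = \beta_{ij} + X_{ij}$ together with the Myerson-style pointwise inequality $X_{ij}\cdot G_{ik}(X_{ij}) \leq \sup_{y\ge c_i} y G_{ik}(y)\le R_{ik}$ is what converts a joint-distribution quantity into per-item Myerson revenues, bypassing this gap.
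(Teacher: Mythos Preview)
Your proof is correct and is essentially the same as the paper's. Your $R_{ij}=\max_{r\ge\beta_{ij}+c_i}r\cdot\Pr[V_i(t_{ij})\ge r]$ coincides with the quantity the paper calls $r_{ij}$ in its proof, and both arguments split $V_i(t_{ij})$ into $\beta_{ij}$ (the paper uses the slightly looser $\beta_{ij}+c_i$) plus $X_{ij}=V_i(t_{ij})-\beta_{ij}$, bound the first piece via $\sum_k p_{ik}\le 1/2$, and bound the second via the pointwise Myerson-type inequality $X_{ij}\cdot\Pr[X_{ik}\ge X_{ij}]\le R_{ik}$ before invoking Lemma~\ref{lem:reserveprice}.
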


The proofs of Lemma~\ref{lem:approxsingle} and Lemma~\ref{lem:approxtail} are {postponed to} Appendix~\ref{subsec:singleappendix} and \ref{subsec:tailappendix}. 

\subsection{The Analysis of the \(\core\)}\label{sec:core}

We now proceed to show that $\core$ could also be approximated by simultaneous auctions with entry fees or reserve prices.  
\begin{lemma}
\label{lem:approxcore}    For any $\sigma$ and $\beta$ that satisfy~\eqref{eq:beta1} and \eqref{eq:beta2} {as specified in \Cref{lem:RevUpper}},  {and tuple \CEtuple~that is \SAprop,}

\[\core(\sigma,\beta) \leq  \frac{4}{c}\cdot\efrev^{(\str)}_{\dist}\InParentheses{\auc} + \frac{1}{c}\cdot \rev^{(\str)}_{\dist}\InParentheses{\auc} + \InParentheses{\frac{2b + 2}{b(1-b)} + \frac{10}{c(1-b)}}\cdot \rprev,\] {where $\efrev^{(\str)}_{\dist}\InParentheses{\auc}$ is defined in \Cref{def:entryfee}.}
\end{lemma}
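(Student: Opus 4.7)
The plan is to follow the double-core decomposition strategy of the duality framework, but adapted to the simultaneous-auction setting. First I would introduce a truncated version $\coreh(\sigma,\beta)$ in which, for each bidder $i$ and each realized type $t_i$, the quantity $\val_i(t_i, S\cap\mathcal{C}_i(t_i))$ is capped at a carefully chosen threshold $\tau_i$ depending on the equilibrium price distribution induced by $\str_{-i}$. Writing $\core = \coreh + (\core-\coreh)$ splits the quantity to be bounded into two pieces that can be analyzed independently: the gap is tail-like and will be charged to $\rprev$, while $\coreh$ (now a bounded, well-concentrated random variable) will be charged to the entry-fee revenue together with $\rev^{(\str)}_{\dist}(\auc)$.

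For the gap $\core-\coreh$, I would argue that whenever $\val_i(t_i, S\cap \mathcal{C}_i(t_i))$ exceeds $\tau_i$, subadditivity forces some item in $\mathcal{C}_i(t_i)$ to contribute unusually large value; since every such item already has $V_i(t_{ij}) < \beta_{ij}+c_i$, the event is rare in a quantifiable sense. Aggregating these rare events yields two error terms: one controlled by a $\tau$-indexed tail expression (Lemma~\ref{lem:tailtau}) and hence by $\rprev$ via the same reserve-price argument used for $\tail(\beta)$, and another controlled by $\sum_i c_i\cdot \Pr[\text{bidder } i \text{ wins something in } \mathcal{C}_i]$, bounded by $\rprev$ through Lemma~\ref{lem:sumofci} (each $c_i$ is by definition a value below which the probability of any item exceeding its $\beta_{ij}$-plus-$c_i$ price is at most $1/2$, which can be converted into a posted/reserve-price revenue comparison).

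For $\coreh$ itself, I would invoke the \SApropn~property applied to the subset $S\cap \mathcal{C}_i(t_i)$ to obtain
\[
\val_i(t_i, S\cap\mathcal{C}_i(t_i)) \leq \tfrac{1}{c}\left(\uti_i^{(\str)}(t_i, S\cap\mathcal{C}_i(t_i)) + \rev^{(\str)}_{\dist}(\auc, S\cap\mathcal{C}_i(t_i))\right).
\]
Summing the revenue term over $i$, $t_i$, and $S$ collapses into at most $\tfrac{1}{c}\rev^{(\str)}_{\dist}(\auc)$. For the utility term, after truncation at $\tau_i$ I would work with a bounded version $\utih_i^{(\str)}(t_i,\cdot)$ and use an entry-fee argument: a bidder is willing to pay an entry fee approximately equal to their expected truncated utility, so by a concentration (median/quantile) argument in the spirit of Cai--Zhao, the aggregate expected truncated utility is bounded by a constant times $\efrev^{(\str)}_{\dist}(\auc)$, modulo an additive slack absorbed into $\rprev$ via Lemma~\ref{lem:sumtau}. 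Combining the three sub-bounds with the coefficients arising from the truncation gap produces exactly the constants $\tfrac{4}{c}$, $\tfrac{1}{c}$, $\tfrac{2b+2}{b(1-b)}$, and $\tfrac{10}{c(1-b)}$ in the statement.

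The main obstacle will be establishing the Lipschitzness of the utility-from-subset $\uti_i^{(\str)}(t_i,S)$, which is exactly the point at which the simultaneous-auction setting diverges from posted-price analyses: each bidder faces a \emph{distribution} of opposing bids rather than fixed prices, so a naive Lipschitz bound in terms of $\val_i$'s Lipschitz constant need not hold. I expect Lemma~\ref{lem:utih} to provide precisely this control by exploiting the highest-bid-wins property and the boundedness introduced by the truncation, so that a McDiarmid-style concentration can be applied to $\utih_i^{(\str)}$. Once that Lipschitzness is in hand, the rest of the argument is a bookkeeping exercise, and the modular split into $\coreh$ and $\core-\coreh$ lets us avoid the type-loss tradeoff analysis of Daskalakis et al.~\cite{daskalakis_multi-item_2022}.
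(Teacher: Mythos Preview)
Your proposal follows essentially the same architecture as the paper's proof: split $\core=\coreh+(\core-\coreh)$, charge the gap to $\rprev$ via Lemmas~\ref{lem:tailtau} and~\ref{lem:sumofci}, and bound $\coreh$ by applying \SApropn, collapsing the revenue term to $\tfrac{1}{c}\rev^{(\str)}_{\dist}(\auc)$, and handling the utility term through a median-based entry-fee argument plus Lemma~\ref{lem:sumtau}. The overall logic and the constants you anticipate are correct.

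Two imprecisions are worth flagging before you execute the details. First, $\tau_i$ must \emph{not} depend on the equilibrium $\str$; in the paper it is defined purely from $\beta$ and the single-item value distributions as $\tau_i=\inf\{x\ge 0:\sum_j\Pr[V_i(t_{ij})\ge\max\{\beta_{ij},x\}]\le\tfrac12\}$. This is precisely the modification to the Cai--Zhao decomposition highlighted in Remark~\ref{rem:corehdiffer}, and it is what makes $\utih_i$ provably $\tau_i$-Lipschitz and subadditive over independent items (Lemma~\ref{lem:subadditiveofmu}); if you let $\tau_i$ depend on $\str_{-i}$ you will not get these structural properties and the concentration step breaks. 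Second, $\coreh$ is not a \emph{capped} version of $\val_i(t_i,S\cap\mathcal{C}_i(t_i))$ but rather $\val_i(t_i,S\cap Y_i(t_i))$ with $Y_i(t_i)=\{j:V_i(t_{ij})<\tau_i\}$, and the \SApropn\ inequality is applied with $S\cap Y_i(t_i)$, not $S\cap\mathcal{C}_i(t_i)$. Finally, Lemma~\ref{lem:utih} in the paper is the \SApropn\ step, not the Lipschitzness control; the latter is supplied by Lemma~\ref{lem:subadditiveofmu}, and it does not use the highest-bid-wins property at all---only the first condition of \Cref{def:good} (non-negative payments and a null action).
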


To prove Lemma~\ref{lem:approxcore}, we first introduce the double-core decomposition $\coreh$.

\begin{definition}[Double-core decomposition]
\label{def:tau}
    Let
    \[\tau_i := \inf\left\{ x\ge 0: \sum_j\Pr_{t_{ij}}\left[V_i(t_{ij})\ge \max\left\{\beta_{ij}, x\right\}\right]\leq \frac12 \right\}. \]
    and define $A_i$ to be {the set} $\left\{j: \beta_{ij}\le \tau_i\right\}$. Define $\coreh$ as 
    \[\coreh(\sigma,\beta) = \sum_i\sum_{t_i\in T_i}\sum_{S\subseteq [m]}f_i(t_i)\sigma_{iS}(t_i)\val_i\InParentheses{\type_i,S\cap Y_i(\type_i)} \]where $Y_i(t_i) = \{j: V_i(t_{ij}) < \tau_i\}$.
\end{definition}

\begin{remark}
\label{rem:corehdiffer}
We provide an alternative double-core decomposition compared to \cite{cai_simple_2017}. The main difference between these two decompositions is that the $\tau_i$ defined in our paper is different and could potentially be larger than theirs. As $\coreh$ defined in \cite{cai_simple_2017} is designed for posted-price mechanisms, they assign a price $Q_j$ for each item $j$ and replace $\max\left\{\beta_{ij},x\right\}$ by $\max\left\{\beta_{ij},x + Q_j\right\}$ in the definition of $\tau_i$. We show that the use of $Q_j$  is unnecessary. By Lemma~\ref{lem:sumtau}, {$\sum_i \tau_i$} in our paper can still be approximated by simple mechanisms. This is crucial for our analysis, as our proof highly replies on the $\tau_i$-{Lipschitzness} of $\uti_i$, and it can fail to be $\tau_i$-{Lipschitz} if we use $\tau_i$ defined in~\cite{cai_simple_2017}. 
\end{remark}
It suffices to demonstrate that our simultaneous auctions with either entry fees or reserve prices provide an upper bound for both the $\coreh$ and the difference between $\core$ and $\coreh$, i.e., $\core - \coreh$. We first show that the gap between these two cores could be approximated by the revenue of {simultaneous auction} with reserved prices.

\begin{lemma}
\label{lem:coregap}
For any $\sigma$, $\beta$ that satisfies~\eqref{eq:beta1} and \eqref{eq:beta2} {in \Cref{lem:RevUpper}},
    \[\core\InParentheses{\sigma,\beta} - \coreh\InParentheses{\sigma,\beta} \le \frac{2(b+1)}{b(1-b)}\cdot \rprev.\]
\end{lemma}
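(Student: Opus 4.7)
The plan is to reduce $\core-\coreh$ to a sum over items in the ``gap'' $\mathcal{C}_i\setminus Y_i$ via subadditivity, and then charge this sum to revenue from simultaneous auctions with two different reserve profiles. First, by monotonicity and subadditivity of $v_i(t_i,\cdot)$, for every $i, t_i, S$,
\[
v_i\bigl(t_i,\,S\cap \mathcal{C}_i(t_i)\bigr) \;\leq\; v_i\bigl(t_i,\,S\cap Y_i(t_i)\bigr) + \sum_{j\in S\cap (\mathcal{C}_i(t_i)\setminus Y_i(t_i))} V_i(t_{ij}),
\]
and for any $j\in \mathcal{C}_i(t_i)\setminus Y_i(t_i)$ the definitions yield $\tau_i\leq V_i(t_{ij})<\beta_{ij}+c_i$. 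Summing over $i, t_i, S$ gives the working inequality
\[
\core(\sigma,\beta)-\coreh(\sigma,\beta) \;\leq\; \sum_{i,j}(\beta_{ij}+c_i)\,E_{ij},\quad E_{ij}:=\mathbb{E}_{t_i\sim D_i}\!\Bigl[\pi_{ij}(t_i)\,\mathbf{1}\bigl[V_i(t_{ij})\in [\tau_i,\,\beta_{ij}+c_i)\bigr]\Bigr].
\]

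Next, I will split the right-hand side into a $\beta_{ij}$-part and a $c_i$-part and bound each via Lemma~\ref{lem:reserveprice} with different reserve profiles. For the $\beta_{ij}$-part, I partition items by whether $j\in A_i:=\{j:\beta_{ij}\leq \tau_i\}$. On $A_i$, I use $E_{ij}\leq \Pr[V_i(t_{ij})\geq \tau_i]$ (dropping $\pi_{ij}\leq 1$) together with $\beta_{ij}\leq \tau_i$; off $A_i$, I apply~\eqref{eq:beta2} to obtain $E_{ij}\leq \Pr[V_i(t_{ij})\geq \beta_{ij}]/b$. Choosing $r_{ij}=\max\{\beta_{ij},\tau_i\}$ in Lemma~\ref{lem:reserveprice}---condition~(1) follows from~\eqref{eq:beta1} and condition~(2) is exactly the defining inequality of $\tau_i$---yields
\[
\sum_i\Bigl(\sum_{j\in A_i}\tau_i\Pr[V_i(t_{ij})\geq \tau_i] + \sum_{j\notin A_i}\beta_{ij}\Pr[V_i(t_{ij})\geq \beta_{ij}]\Bigr) \;\leq\; \frac{2}{1-b}\rprev,
\]
which, after accounting for the extra $1/b$ factor on the second sum, controls the $\beta_{ij}$-part by $\frac{2}{b(1-b)}\rprev$. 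For the $c_i$-part, I will factor $\sum_{i,j}c_i E_{ij}=\sum_i c_i\cdot \sum_j E_{ij}$: the inner sum is bounded by the same split-by-$A_i$ estimates (giving a constant multiple of $1/b$), and $\sum_i c_i$ is handled by Lemma~\ref{lem:sumofci}, which I plan to prove by applying Lemma~\ref{lem:reserveprice} to $r_{ij}=\beta_{ij}+c_i$ (both conditions follow from~\eqref{eq:beta1} and the defining property of $c_i$) and combining with $\sum_j\Pr[V_i(t_{ij})\geq \beta_{ij}+c_i]\approx \tfrac{1}{2}$ to conclude $\sum_i c_i=O(\tfrac{1}{1-b})\rprev$.

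The hardest part will be the asymmetric handling of $A_i$ versus $[m]\setminus A_i$: the quantity $\sum_j\Pr[V_i(t_{ij})\geq \tau_i]$ is not bounded uniformly, so for items outside $A_i$ I must fall back on~\eqref{eq:beta2}, which introduces a factor of $1/b$. Tracking these $1/b$ factors carefully across the two parts is what produces the final coefficient $\frac{2(b+1)}{b(1-b)}$, and it requires careful bookkeeping between the two reserve-price applications to keep them from blowing up into looser constants.
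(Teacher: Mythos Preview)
Your overall architecture matches the paper's proof: reduce to items in $\mathcal{C}_i\setminus Y_i$ by subadditivity, split into a $\beta_{ij}$-part handled via the $A_i$ vs.\ $[m]\setminus A_i$ dichotomy and Lemma~\ref{lem:reserveprice} with $r_{ij}=\max\{\beta_{ij},\tau_i\}$, and a $c_i$-part handled via Lemma~\ref{lem:sumofci}. Your treatment of the $\beta_{ij}$-part and your plan for Lemma~\ref{lem:sumofci} are both correct and essentially identical to the paper.

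There is, however, a genuine gap in the $c_i$-part that prevents you from reaching the stated constant. Your working inequality bounds $V_i(t_{ij})$ uniformly by $\beta_{ij}+c_i$ on the whole event $\{V_i(t_{ij})\in[\tau_i,\beta_{ij}+c_i)\}$, so the $c_i$-term carries the indicator $\mathbf{1}[V_i(t_{ij})\geq\tau_i]$. For $j\notin A_i$ (where $\beta_{ij}>\tau_i$), dropping $\pi_{ij}$ gives only $\Pr[V_i(t_{ij})\geq\tau_i]$, which is \emph{not} summable to $\tfrac12$ by the definition of $\tau_i$; you are then forced to invoke~\eqref{eq:beta2}, picking up an extra $1/b$. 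Consequently $\sum_j E_{ij}\leq \tfrac{1}{2b}$, and the $c_i$-part becomes $\tfrac{2}{b(1-b)}\rprev$ rather than $\tfrac{2}{1-b}\rprev$. The total is then $\tfrac{4}{b(1-b)}\rprev$, strictly larger than the claimed $\tfrac{2(b+1)}{b(1-b)}\rprev$.

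The fix (and what the paper does) is to sharpen the pointwise bound before summing: on $\{\tau_i\leq V_i(t_{ij})\leq \beta_{ij}+c_i\}$, use
\[
V_i(t_{ij})\;\leq\;\beta_{ij}\cdot\mathbf{1}[V_i(t_{ij})\geq\tau_i]\;+\;c_i\cdot\mathbf{1}\bigl[V_i(t_{ij})\geq\max\{\beta_{ij},\tau_i\}\bigr],
\]
which holds because $V_i\leq\beta_{ij}$ whenever $V_i\leq\beta_{ij}$, and $V_i\leq\beta_{ij}+c_i$ otherwise. With this tighter indicator on the $c_i$-term, you can drop $\pi_{ij}\leq 1$ everywhere and apply $\sum_j\Pr[V_i(t_{ij})\geq\max\{\beta_{ij},\tau_i\}]\leq\tfrac12$ directly from the definition of $\tau_i$, with no $1/b$ factor. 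That yields $\tfrac12\sum_i c_i\leq\tfrac{2}{1-b}\rprev$ for the $c_i$-part, and the sum $\tfrac{2}{b(1-b)}+\tfrac{2}{1-b}=\tfrac{2(b+1)}{b(1-b)}$ matches the lemma.
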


To prove Lemma~\ref{lem:coregap}, we first introduce {a} technical lemma that will be used in our proof.
\begin{lemma}
\label{lem:tailtau}
{For any $\beta$ that satisfies~\eqref{eq:beta1} in \Cref{lem:RevUpper}, }
\[ \sum_{i,j}\max\left\{\beta_{ij}, \tau_i\right\}\Pr_{t_{ij}}\left[V_i(t_{ij}) >\max\left\{\beta_{ij}, \tau_i\right\}\right] \le \frac{2}{1-b}\cdot \rprev\]
\end{lemma}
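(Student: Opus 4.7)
The plan is to apply Lemma~\ref{lem:reserveprice} directly with the choice of reserve prices $r_{ij} := \max\{\beta_{ij}, \tau_i\}$ (or, more precisely, values infinitesimally above this, to convert strict inequalities to weak ones in the presence of atoms). The bound we want is exactly the revenue guarantee produced by Lemma~\ref{lem:reserveprice} under this choice of reserve prices, so the entire argument reduces to (a) checking its two hypotheses, and (b) passing from its conclusion to $\rprev$ via \Cref{def:rprev}.

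First I would verify the two conditions of Lemma~\ref{lem:reserveprice}. For condition (1), since $r_{ij} \geq \beta_{ij}$, monotonicity gives $\Pr[V_i(t_{ij}) \geq r_{ij}] \leq \Pr[V_i(t_{ij}) \geq \beta_{ij}]$, so summing over $i$ and invoking \eqref{eq:beta1} of \Cref{lem:RevUpper} yields a bound of $b$. For condition (2), the definition of $\tau_i$ as an infimum guarantees that for every $x > \tau_i$, $\sum_j \Pr_{t_{ij}}[V_i(t_{ij}) \geq \max\{\beta_{ij}, x\}] \leq \tfrac{1}{2}$; letting $x \downarrow \tau_i$ yields $\sum_j \Pr_{t_{ij}}[V_i(t_{ij}) > \max\{\beta_{ij}, \tau_i\}] \leq \tfrac{1}{2}$. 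Because the type distributions are discrete, I can perturb each $r_{ij}$ upward by an arbitrarily small amount so that $\Pr[V_i(t_{ij}) \geq r_{ij}] = \Pr[V_i(t_{ij}) > \max\{\beta_{ij},\tau_i\}]$ while preserving $r_{ij} > \max\{\beta_{ij},\tau_i\}$; then the second hypothesis holds.

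Given these hypotheses, Lemma~\ref{lem:reserveprice} produces, for \emph{any} Bayes--Nash equilibrium $s$ of $\auc^{(r)}_{\mathrm{RP}}$,
\[
\frac{2}{1-b}\cdot \rev^{(s)}_D\!\left(\auc^{(r)}_{\mathrm{RP}}\right) \;\geq\; \sum_{i,j} r_{ij}\,\Pr_{t_{ij}}[V_i(t_{ij}) \geq r_{ij}] \;\geq\; \sum_{i,j} \max\{\beta_{ij},\tau_i\}\,\Pr_{t_{ij}}[V_i(t_{ij}) > \max\{\beta_{ij},\tau_i\}],
\]
where the final inequality uses $r_{ij} \geq \max\{\beta_{ij},\tau_i\}$ and the equality of probabilities arranged above. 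Since this holds at every BNE (including the worst one), taking the infimum over $s$ and then invoking $\inf_s \rev^{(s)}_D\!\left(\auc^{(r)}_{\mathrm{RP}}\right) \leq \sup_{r'} \inf_s \rev^{(s)}_D\!\left(\auc^{(r')}_{\mathrm{RP}}\right) = \rprev$ from \Cref{def:rprev} finishes the proof.

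There is no deep obstacle here: the entire content of the lemma lies in recognizing that $r_{ij} := \max\{\beta_{ij},\tau_i\}$ is precisely the reserve-price vector for which \eqref{eq:beta1} and the defining property of $\tau_i$ translate into the two hypotheses of Lemma~\ref{lem:reserveprice}. The only technical point requiring care is the atom/infimum issue described above, which is why the lemma is stated with strict inequality $V_i(t_{ij}) > \max\{\beta_{ij},\tau_i\}$ rather than $\geq$.
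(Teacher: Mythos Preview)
Your proposal is correct and follows essentially the same route as the paper: verify the two hypotheses of Lemma~\ref{lem:reserveprice} for the reserve prices $r_{ij}=\max\{\beta_{ij},\tau_i\}$ using \eqref{eq:beta1} and the defining property of $\tau_i$, then apply that lemma and pass to $\rprev$. You are in fact more careful than the paper about the strict-versus-weak inequality issue (the paper simply states the two bounds with ``$>$'' and invokes Lemma~\ref{lem:reserveprice} without spelling out the perturbation), so your handling of atoms via an infinitesimal upward shift is a welcome clarification rather than a deviation.
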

\begin{proof}

According to the definition of $\tau_i$, for every buyer $i$, $\sum_j\Pr_{t_{ij}}\left[V_i(t_{ij}) >\max\left\{\beta_{ij}, \tau_i\right\}\right]\le \half$. For each item $j$, {since} $\beta$ satisfies~\eqref{eq:beta1}, it holds that \[\sum_{i} \Pr_{t_{ij}}\left[V_i(t_{ij}) >\max\left\{\beta_{ij}, \tau_i\right\}\right] \leq \sum_{i} \Pr_{t_{ij}}\left[V_i(t_{ij}) >\beta_{ij} \right] \leq b.\] Applying lemma \ref{lem:reserveprice}, we then complete our proof.
\end{proof}

Before proving \Cref{lem:coregap}, we need one more lemma about $\sum_i c_i$.
\begin{lemma}
\label{lem:sumofci}

{For any $\beta$ that satisfies~\eqref{eq:beta1} in \Cref{lem:RevUpper}},
\[\sum_i c_i \le \frac{4}{1-b} \rprev.\]
\end{lemma}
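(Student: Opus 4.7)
\textbf{Proof plan for Lemma~\ref{lem:sumofci}.} The plan is to invoke Lemma~\ref{lem:reserveprice} with reserve prices $r_{ij} = \beta_{ij} + c_i$ (with a vanishing perturbation to handle discreteness) and then read off $\sum_i c_i$ from the resulting inequality. The intuition is that the ``$+c_i$'' translation is, by definition, the smallest shift that drops the expected number of items bidder~$i$ wins below $\tfrac12$, so this choice is precisely matched to condition~(2) of Lemma~\ref{lem:reserveprice} and lets the revenue lower bound ``feel'' the $c_i$'s directly.

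The key steps are as follows. First, I verify the two hypotheses of Lemma~\ref{lem:reserveprice} with $r_{ij}=\beta_{ij}+c_i$. Condition~(1) is immediate: $\sum_i \Pr[V_i(t_{ij}) \ge \beta_{ij}+c_i] \le \sum_i \Pr[V_i(t_{ij}) \ge \beta_{ij}] \le b$ by \eqref{eq:beta1}. Condition~(2), $\sum_j \Pr[V_i(t_{ij}) \ge r_{ij}] \le \tfrac12$, follows from the infimum definition of $c_i$ via a right-limit; concretely, for every $x>c_i$ the sum $\sum_j \Pr[V_i(t_{ij}) \ge \beta_{ij}+x] \le \tfrac12$, so applying Lemma~\ref{lem:reserveprice} with $r_{ij}=\beta_{ij}+c_i+\epsilon$ and letting $\epsilon\to 0^+$ (exactly the strict-inequality trick used in the proof of Lemma~\ref{lem:tailtau}) recovers the bound at $c_i$.

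Second, with the lemma applicable, I get
\[
\frac{2}{1-b}\cdot \rprev \;\ge\; \sum_{i,j}(\beta_{ij}+c_i)\Pr[V_i(t_{ij})\ge \beta_{ij}+c_i] \;\ge\; \sum_i c_i \sum_j \Pr[V_i(t_{ij})\ge \beta_{ij}+c_i].
\]
It then suffices to show that $\sum_j \Pr[V_i(t_{ij})\ge \beta_{ij}+c_i] \ge \tfrac12$ whenever $c_i>0$ (and the $c_i=0$ terms contribute nothing). This is the complementary side of the infimum: for every $x<c_i$, the sum exceeds $\tfrac12$, and because the survival function $x\mapsto \Pr[V_i(t_{ij})\ge \beta_{ij}+x]$ is left-continuous and nonincreasing, taking $x\to c_i^-$ transfers this bound to $x=c_i$. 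Plugging in yields $\tfrac12 \sum_i c_i \le \tfrac{2}{1-b}\rprev$, which is the claim.

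The main obstacle is the \emph{boundary behavior at the infimum} in the discrete setting: the survival function can jump exactly at $\beta_{ij}+c_i$, so $\sum_j \Pr[V_i(t_{ij})\ge \beta_{ij}+c_i]$ may strictly exceed $\tfrac12$ while its right-limit is below $\tfrac12$. This creates tension between needing ``$\le \tfrac12$'' to satisfy Lemma~\ref{lem:reserveprice}'s condition~(2) and needing ``$\ge \tfrac12$'' to extract $\sum_i c_i$ in the final lower bound. The resolution is the asymmetric treatment described above---a vanishing right-perturbation $\beta_{ij}+c_i+\epsilon$ to verify~(2), and the left-limit at $c_i$ (where the sum is $\ge\tfrac12$) to extract the bound---both of which are justified by the monotonicity and left-continuity of the survival function and mirror the handling already performed in Lemma~\ref{lem:tailtau}.
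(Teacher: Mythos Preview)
Your approach is the same as the paper's: apply Lemma~\ref{lem:reserveprice} with reserve prices $r_{ij}=\beta_{ij}+c_i$, check conditions~(1) and~(2), drop the $\beta_{ij}$'s from the conclusion, and use $\sum_j\Pr[V_i(t_{ij})\ge\beta_{ij}+c_i]\ge\tfrac12$ for $c_i>0$ to extract $\tfrac12\sum_i c_i$. The paper does exactly this, simply asserting that $\sum_j\Pr[V_i(t_{ij})\ge\beta_{ij}+c_i]\le\tfrac12$ always and equals $\tfrac12$ when $c_i>0$.

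One caveat on your boundary handling: the asymmetric treatment you propose does not actually close the gap. If you apply Lemma~\ref{lem:reserveprice} with $r_{ij}=\beta_{ij}+c_i+\epsilon$ and let $\epsilon\to 0^+$, the limit of the right-hand side is $\sum_{i,j}(\beta_{ij}+c_i)\Pr[V_i(t_{ij})>\beta_{ij}+c_i]$ with a \emph{strict} inequality inside the probability, because the survival function is left-continuous, not right-continuous. To extract $\sum_i c_i$ you would then need $\sum_j\Pr[V_i(t_{ij})>\beta_{ij}+c_i]\ge\tfrac12$, but the infimum definition only guarantees this for the non-strict version; the strict-inequality sum is the right limit, which is $\le\tfrac12$. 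So the two halves of your ``asymmetric'' fix do not meet in the middle. The paper does not attempt to resolve this and instead implicitly assumes no atom sits exactly at the threshold $\beta_{ij}+c_i$ (in which case both inequalities coincide and the infimum is attained with equality); your proof plan is fine modulo the same tacit assumption.
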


\begin{proof}

Recall that \(c_i\) is defined as follows: \[c_i := \inf\left\{x\geq 0:\sum_j\Pr_{\type_{ij}}\InBrackets{V_i\InParentheses{\type_{ij}}\geq \beta_{ij} + x} \leq \frac12\right\}.\]

From the definition of \(c_i\), it directly follows that 
\[\sum_j\Pr_{\type_{ij}}\InBrackets{V_i\InParentheses{\type_{ij}}\geq \beta_{ij} + c_i} \leq \frac12\]  {for all} \(i\in [n]\). {Moreover, as the $\beta_{ij}$'s satisfy \eqref{eq:beta1}, the following is clearly true.
\[\sum_i\Pr_{\type_{ij}}\InBrackets{V_i\InParentheses{\type_{ij}}\geq \beta_{ij} + c_i} \leq \sum_i\Pr_{\type_{ij}}\InBrackets{V_i\InParentheses{\type_{ij}}\geq \beta_{ij}} \leq b, \ \forall j\in [m].\]}

Consequently, the set \(\left\{\beta_{ij}+ c_i\right\}_{i\in[n], j\in[m]}\) meets all conditions in Lemma~\ref{lem:reserveprice}. This leads to the implication that:
\[\sum_{i}\sum_{j}\InParentheses{\beta_{ij} + c_{i}} \cdot \Pr_{\type_i}\InBrackets{V_i\InParentheses{\type_{ij}} \ge \beta_{ij} + c_{i}} \le \frac{2}{1-b} \cdot \rprev.\]

On the other hand, 
\[\sum_{i}\sum_{j}\InParentheses{\beta_{ij} + c_{i}} \cdot \Pr_{\type_i}\InBrackets{V_i\InParentheses{\type_{ij}} \ge \beta_{ij} + c_{i}} \ge \sum_i\sum_j c_i \cdot \Pr_{\type_i}\InBrackets{V_i\InParentheses{\type_{ij}} \ge \beta_{ij} + c_{i}} \ge \half \cdot \sum_i c_i.\]
The last inequality arises since, when $c_i>0$, $\sum_j\Pr_{\type_i}\InBrackets{V_i\InParentheses{\type_{ij}} \ge \beta_{ij} + c_{i}} = \half$. Combining the two inequalities above, we know {that} $\sum_i c_i/2 \le \frac{2}{1-b} \cdot \rprev$.
\end{proof}

Now we are ready to prove {\Cref{lem:coregap}}.

\begin{proof}[Proof of Lemma~\ref{lem:coregap}]
 {Recall that 
    \begin{align*}
        \core\InParentheses{\sigma,\beta} &= \sum_{i}\sum_{\type_i\in T_i}f_i(\type_i) \sum_{S\subseteq [m]} \sigma_{iS}\InParentheses{\type_i}  \val_i\InParentheses{\type_i, S\cap \mathcal{C}_i\InParentheses{t_i}} \\
 \coreh(\sigma,\beta) &= \sum_i\sum_{t_i\in T_i}f_i(t_i)\sum_{S\subseteq [m]}\sigma_{iS}(t_i)\val_i\InParentheses{\type_i,S\cap Y_i(\type_i)}
    \end{align*}
    where \(\mathcal{C}_i \left(t_i\right) = \{j:V_i(\type_{ij}) < \beta_{ij} + c_i\}, Y_i(t_i) = \{j: V_i(t_{ij}) < \tau_i\}\).}
    
    Firstly, notice that
    \begin{align}
        \begin{split}
        \val_i\InParentheses{t_i,S\cap \mathcal{C}_i\InParentheses{t_i}} - \val_i\InParentheses{t_i,S\cap Y_i(t_i)}& \leq  \val_i\InParentheses{t_i, S\cap(\mathcal{C}_i(t_i) \backslash Y_i(t_i))}\\
        & \leq  \sum_{j\in S\cap\left(\mathcal{C}_i(t_i) \backslash Y_i(t_i)\right)} V_i\left(t_{ij}\right)\\
        & \leq \sum_{j\in S}V_i(t_{ij})\cdot \indic\left[\tau_i \le V_i(t_{ij}) \le \beta_{ij}+c_i\right]\\
         & \leq \sum_{j\in S} \InParentheses{\beta_{ij} \cdot \indic\InBrackets{V_i(\type_{ij}) \ge \tau_i} + c_i \cdot \indic\left[V_i(t_{ij}) \ge \max \left\{\beta_{ij},\tau_i\right\}\right]}\\
        \end{split}
    \end{align}
    {The last} inequality is because when $\tau_i \le V_i(t_{ij}) \le \beta_{ij}+c_i$, $V_i(t_{ij})$ is upper bounded by $\beta_{ij}$ when $V_i(t_{ij})\le \beta_{ij}$ and upper bounded by $\beta_{ij}+c_i$ when $V_i(t_{ij})\ge \beta_{ij}$. Hence
    \begin{align}\label{eq:coregap1}
        \begin{split}
        & \core - \coreh\\
        & \le  \sum_i\sum_{t_i}\sum_{S\subseteq [m]} \sum_{j\in S} f_i(t_i)\sigma_{iS}(t_i) \cdot \left(\beta_{ij} \cdot \indic[V_i(t_{ij}) \ge \tau_i] + c_i \cdot \indic\left[V_i(t_{ij}) \ge \max \{\beta_{ij},\tau_i\}\right]\right)\\
        & =  \sum_i\sum_j\sum_{t_i} f_i(t_i)\cdot \pi_{ij}(t_i)\cdot \left(\beta_{ij} \cdot \indic\left[V_i(t_{ij}) \ge \tau_i\right] + c_i \cdot \indic\left[V_i(t_{ij}) \ge \max \left\{\beta_{ij},\tau_i\right\}\right]\right).\\
        \end{split}
    \end{align}

    First we bound $\sum_i\sum_j\sum_{t_i}f_i(t_i)\cdot \pi_{ij}(t_i)\cdot \beta_{ij} \cdot \indic\left[V_i(t_{ij}) \ge \tau_i\right]$.
    \begin{align}\label{eq:coregap2}
        \begin{split}
        & \sum_i\sum_j\sum_{t_i}f_i(t_i)\pi_{ij}(t_i)\cdot \beta_{ij} \cdot \indic\left[V_i(t_{ij}) \ge \tau_i\right]\\
        \le & {\sum_i}\sum_{j\in A_i} \beta_{ij}\cdot \sum_{t_i}f_i(t_i)\cdot \indic\left[V_i(t_{ij})\ge \tau_i\right] + \sum_i\sum_{j\notin A_i}\beta_{ij}\cdot \sum_{t_i}f_i(t_i)\pi_{ij}(t_i)\\
        \le & {\sum_i} \sum_{j\in A_i}\beta_{ij}\cdot \Pr_{t_{ij}}\left[V_i(t_{ij})\ge \tau_i\right] + \sum_i\sum_{j\notin A_i}\beta_{ij}\cdot \Pr_{t_{ij}}\left[V_i(t_{ij})\ge \beta_{ij}\right]/b\\
        \le & \frac{1}{b} \cdot \sum_{i,j}\max\left\{\beta_{ij},\tau_i\right\} \cdot \Pr_{t_{ij}}\left[V_i(t_{ij}) \ge \max\left\{\beta_{ij},\tau_i\right\}\right]\\
        \le & \frac{2}{b(1-b)}\cdot \rprev\\
        \end{split}
    \end{align} {The set \(A_i\) is defined as \(\{j:\beta_{ij} \leq \tau_i\}\)} in \Cref{def:tau}. {The parameters \( \beta_{ij} \)'s satisfy Inequality~\eqref{eq:beta2}, as presented in the statement of the lemma, which substantiates the second inequality.}
 The third inequality is due to the definition of \(A_i\) and the last inequality follows from lemma \ref{lem:tailtau}. 

   {Secondly,} we bound $\sum_i\sum_j\sum_{t_i}f_i(t_i)\pi_{ij}(t_i)\cdot c_i \cdot \indic\left[V_i(t_{ij}) \ge \max\left\{\beta_{ij}, \tau_i\right\}\right]$.
    \begin{align}\label{eq:coregap3}
        \begin{split}
        & \sum_i\sum_j\sum_{t_i}f_i(t_i)\pi_{ij}(t_i)\cdot c_i \cdot \indic\left[V_i(t_{ij}) \ge \max\left\{\beta_{ij}, \tau_i\right\}\right]\\
        \le & \sum_i c_i \sum_j\sum_{t_i}f_i(t_i)\cdot \indic\left[V_i(t_{ij}) \ge \max\left\{\beta_{ij}, \tau_i\right\}\right]\\
        = & \sum_i c_i \sum_j \Pr\left[V_{ij}(t_i) \ge \max\left\{\beta_{ij}, \tau_i\right\}\right]\\
        \le & \sum_i c_i/2\\
        \le & \frac{2}{1-b}\cdot \rprev\\
        \end{split}
    \end{align} where the second inequality {is due to} the definition of $\tau_i$ \big(\Cref{def:tau}\big) and the last inequality is due to {\Cref{lem:sumofci}}.    Combining~\eqref{eq:coregap1}, \eqref{eq:coregap2} and \eqref{eq:coregap3}, we complete our proof.
\end{proof}

Next, we argue that $\coreh$ could be approximated by auction $\auc$ with either entry fees or reserve prices. 
\begin{lemma}
\label{lem:coreh} For any $\sigma$ and $\beta$ that satisfy~\eqref{eq:beta1} and \eqref{eq:beta2} {in \Cref{lem:RevUpper}}, {and tuple \CEtuple~that is \SAprop,}, it holds that
\[\coreh\InParentheses{\sigma, \beta} \le \frac{1}{c}\InParentheses{4\cdot\efrev^{(\str)}_{\dist}\InParentheses{\auc} + \rev^{(\str)}_{\dist}\InParentheses{\auc} + \frac{10}{1-b}\cdot \rprev},\]  where $\efrev^{(\str)}_{\dist}\InParentheses{\auc}$ denotes the revenue derived from entry fees, as defined in \Cref{def:entryfee}.
\end{lemma}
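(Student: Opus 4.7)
The strategy is to apply $c$-efficiency to convert $\coreh$ into a utility aggregate and a revenue aggregate, then bound each separately: the revenue aggregate by feasibility of $\sigma$, and the utility aggregate by a Schechtman-style concentration argument that peels off a median-based entry fee plus a tail controlled by $\sum_i\tau_i$. Concretely, for every bidder $i$, type $t_i$, and set $S\subseteq[m]$, the $c$-efficiency condition yields
\[
c\cdot v_i(t_i,\,S\cap Y_i(t_i)) \leq \uti_i^{(\str)}(t_i,\,S\cap Y_i(t_i)) + \rev^{(\str)}_{\dist}\bigl(\auc,\,S\cap Y_i(t_i)\bigr).
\]
Multiplying by $f_i(t_i)\,\sigma_{iS}(t_i)$ and summing over $i$, $t_i$, and $S$ bounds $c\cdot\coreh(\sigma,\beta)$ by the sum of a utility aggregate and a revenue aggregate, which we handle separately.

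For the revenue aggregate, realize $\sigma$ as an ex-post feasible allocation producing disjoint random sets $S_1(t),\dots,S_n(t)\subseteq[m]$ whose interim marginals are $\sigma_{iS}(t_i)$. Letting $R_j$ denote the expected per-item revenue from item $j$ at equilibrium $\str$ (so that $\sum_j R_j=\rev^{(\str)}_{\dist}(\auc)$),
\[
\sum_i\E\bigl[\rev^{(\str)}_{\dist}(\auc,\,S_i(t)\cap Y_i(t_i))\bigr]
=\E\Bigl[\sum_j R_j\cdot\sum_i\indic\bigl[j\in S_i(t)\cap Y_i(t_i)\bigr]\Bigr]
\leq \sum_j R_j=\rev^{(\str)}_{\dist}(\auc),
\]
where the inequality uses $\sum_i\indic[j\in S_i(t)]\leq 1$ by feasibility. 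This recovers the $\rev^{(\str)}_{\dist}(\auc)$ term in the target bound.

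For the utility aggregate, monotonicity of $\uti_i^{(\str)}(t_i,\cdot)$ (a bidder can always bid $\perp$ on extra items) gives
\[
\sum_i\E_{t_i}\sum_S\sigma_{iS}(t_i)\,\uti_i^{(\str)}(t_i,\,S\cap Y_i(t_i))\leq \sum_i\E_{t_i}\bigl[\utih_i(t_i)\bigr],\quad \utih_i(t_i):=\uti_i^{(\str)}(t_i,Y_i(t_i)).
\]
By Lemma~\ref{lem:utih}, $\utih_i$ is $O(\tau_i)$-Lipschitz and subadditive in $t_i$, so Schechtman-type concentration places its mass near the median $M_i$; splitting $\E[\utih_i]\leq 2M_i+\E\bigl[\utih_i\cdot\indic[\utih_i>2M_i]\bigr]$, the truncated piece aggregates to at most $4\,\efrev^{(\str)}_{\dist}(\auc)$ via Lemma~\ref{lem:entryfee} (setting entry fees $e_i\approx M_i$ collects at least $M_i/2$ in expectation from bidder $i$, since by monotonicity $\uti_i^{(\str)}(t_i,[m])\geq\utih_i(t_i)\geq M_i$ with probability at least $1/2$, yielding $\sum_i M_i\leq 2\,\efrev^{(\str)}_{\dist}(\auc)$), while the concentration tail contributes $O(\tau_i)$ per bidder, aggregating to at most $\tfrac{10}{1-b}\,\rprev$ by Lemma~\ref{lem:sumtau}. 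The main obstacle is precisely the Lipschitz argument for $\utih_i$: unlike the posted-price setting of \cite{cai_simple_2017} in which each bidder faces fixed per-item prices, here each bidder faces a \emph{distribution} of prices induced by $\str$, so tracking how $\utih_i$ changes under a single-coordinate perturbation of $t_i$ is subtler. This is precisely why we adopt the modified threshold $\tau_i$ of \Cref{def:tau} (see \Cref{rem:corehdiffer}), and why Lemma~\ref{lem:utih} requires a fresh argument rather than a direct appeal to the analysis in~\cite{cai_simple_2017}. Combining the three pieces gives the claimed bound on $\coreh(\sigma,\beta)$.
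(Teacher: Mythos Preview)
Your approach is essentially the paper's: apply $c$-efficiency to split $c\cdot\coreh$ into a revenue aggregate (bounded by feasibility of $\sigma$) and a utility aggregate $\sum_i\E_{t_i}[\utih_i(t_i,[m])]$, then control the latter via the subadditive concentration inequality to extract $4\cdot\efrev$ plus a tail of $\tfrac{5}{2}\sum_i\tau_i\leq\tfrac{10}{1-b}\rprev$. One labeling slip: the Lipschitzness and subadditivity of $\utih_i$ are \emph{not} the content of Lemma~\ref{lem:utih} (which is precisely the lower bound $\sum_i\E[\utih_i]\geq c\cdot\coreh-\rev^{(\str)}_{\dist}(\auc)$ you are re-deriving); those structural properties live in Lemma~\ref{lem:subadditiveofmu}.
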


Recall that in Definition~\ref{def:good}, we define $\uti_i^{(\str)}(t_i, S)$ as the optimal utility that  bidder $i$ can attain when only {the bundle} $S$ is {available}. We further define $\utih_i\InParentheses{\type_i,S}$ as $\uti_i^{(\str)}\InParentheses{t_i, S\cap Y_i(\type_i)}$ where $Y_i(\type_i) = \left\{j: V_i\InParentheses{\type_{ij}} < \tau_i\right\}$.  Lemma~\ref{lem:subadditiveofmu} demonstrates that $\utih\InParentheses{t_i,\cdot }$ satisfies monotonicity, subadditivity, no externalities and $\tau_i$-Lipschitzness. {Our proof of Lemma~\ref{lem:coreh} can be divided into the following three steps. The first step, summarized in \Cref{lem:utih}, argues that the ``truncated'' utility, represented as $\sum_i \E_{\type_i \sim \dist_i}\InBrackets{\utih_i\InParentheses{\type_i,[m]}}$,  together with the revenue of the auction \(\auc\) serves as a $c$-approximation to \(\coreh\) by employing the third property in the definition of \SApropn. The second step, i.e., \Cref{lem:entryfee}, shows how to extract revenue from the ``truncated'' utility by setting a entry fee at the median of the utility function. We demonstrate that the corresponding revenue is high enough using a concentration inequality for subadditive functions.  The last step, i.e.,~\Cref{lem:sumtau} shows that the difference between the revenue from the entry fees and the truncated utilities can be approximated by the revenue from another simultaneous auction with reserved prices.}

\begin{lemma}
\label{lem:utih}
{For any $\sigma$, $\beta$ that satisfies~\eqref{eq:beta1} and \eqref{eq:beta2} in \Cref{lem:RevUpper},}
\[\sum_i \E_{\type_i \sim \dist_i}\InBrackets{\utih_i\InParentheses{\type_i,[m]}} \ge c\cdot \coreh\InParentheses{\sigma, \beta} - \rev_{\dist}^{(\str)}\InParentheses{\auc}.\]
\end{lemma}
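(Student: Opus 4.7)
The plan is to apply the third condition of \SApropn{} (the defining inequality for \CEtuple) separately to each bidder $i$, each type $t_i$, and each set $S$ drawn from $\sigma_i(t_i)$, with the particular set $S \cap Y_i(t_i)$, and then unfold the resulting expectations so that the utility term collapses to $\sum_i \E_{t_i}[\utih_i(t_i,[m])]$ while the revenue term produces only a \emph{single} copy of $\rev^{(\str)}_\dist(\auc)$ rather than $n$ copies.

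First, for any $i$, $t_i$, and $S \subseteq [m]$, applying $c$-efficiency to the set $S \cap Y_i(t_i)$ gives
\[
c \cdot v_i(t_i, S \cap Y_i(t_i)) \le \uti_i^{(\str)}(t_i, S \cap Y_i(t_i)) + \rev^{(\str)}_\dist(\auc, S \cap Y_i(t_i)).
\]
The first condition of Definition~\ref{def:good} ensures that bidding $\perp$ on any item costs nothing, so $\uti_i^{(\str)}(t_i, \cdot)$ is monotone under set inclusion; consequently the utility term on the right is bounded by $\uti_i^{(\str)}(t_i, Y_i(t_i)) = \utih_i(t_i, [m])$. Now I would take the expectation with respect to $S$ drawn from $\sigma_i(t_i)$, then with respect to $t_i \sim \dist_i$, and finally sum over $i \in [n]$. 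The left-hand side becomes exactly $c \cdot \coreh(\sigma, \beta)$ by the definition of $\coreh$, and the utility contribution on the right becomes $\sum_i \E_{t_i}[\utih_i(t_i, [m])]$.

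The main step is to show that the remaining revenue contribution is at most $\rev^{(\str)}_\dist(\auc)$. I would expand $\rev^{(\str)}_\dist(\auc, S \cap Y_i(t_i)) = \sum_{j \in S \cap Y_i(t_i)} \sum_{k \in [n]} \E[p_k^{(j)}(b)]$ and swap the order of summation, so that the total revenue contribution becomes
\[
\sum_{j \in [m]} \sum_{k \in [n]} \E[p_k^{(j)}(b)] \cdot \sum_{i \in [n]} \E_{t_i}\bigl[\pi_{ij}(t_i) \cdot \indic[j \in Y_i(t_i)]\bigr].
\]
The key observation here is that $\sum_i \E_{t_i}[\pi_{ij}(t_i)]$ is the ex-ante probability that item $j$ is allocated by $\sigma$, which is at most $1$ by feasibility of the interim allocation; dropping the indicator can only increase this quantity, so the inner sum over $i$ is at most $1$. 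Hence the total revenue contribution is at most $\sum_j \sum_k \E[p_k^{(j)}] = \rev^{(\str)}_\dist(\auc)$, and rearranging yields the lemma. The subtlety — and the reason a naive approach fails — is that the direct bound $\rev^{(\str)}_\dist(\auc, S \cap Y_i(t_i)) \le \rev^{(\str)}_\dist(\auc)$ would give a useless factor of $n$ after summing over bidders; exploiting the item-by-item decomposition of revenue together with the feasibility constraint on $\sigma$ is what avoids this blowup.
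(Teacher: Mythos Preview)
Your proposal is correct and follows essentially the same approach as the paper: apply the $c$-efficiency inequality to the set $S\cap Y_i(t_i)$, bound the utility term via monotonicity of $\uti_i^{(\str)}(t_i,\cdot)$, and control the revenue term by decomposing $\rev^{(\str)}_\dist(\auc,\cdot)$ item-by-item together with the feasibility bound $\sum_i \E_{t_i}[\pi_{ij}(t_i)]\le 1$. The only cosmetic difference is that the paper drops the restriction to $Y_i(t_i)$ in the revenue term first (via $\rev^{(\str)}_\dist(\auc,S\cap Y_i(t_i))\le \rev^{(\str)}_\dist(\auc,S)$) whereas you carry the indicator $\indic[j\in Y_i(t_i)]$ through and drop it after swapping sums, but this makes no substantive difference.
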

\begin{proof}

The third property of \SApropn~\big(\Cref{def:good}\big) states that for any $S\subseteq [m]$, 
\[\uti_i^{\InParentheses{\str}}\InParentheses{\type_i,S\cap Y_i \InParentheses{\type_i}} \geq c\cdot \val_i\InParentheses{\type_i, S\cap Y_i\InParentheses{\type_i}} - {\rev^{(\str)}_D\InParentheses{\auc,S\cap Y_i\InParentheses{\type_i}}}.\]

By the definition of \(\utih_i\) and the monotonicity of \(\uti_i^{\InParentheses{\str}}\InParentheses{\type_i,\cdot}\), it follows that
\begin{align}
\label{eq:utih1}
\begin{split}
    &\sum_i \E_{\type_i \sim \dist_i}\InBrackets{\utih_i\InParentheses{\type_i,[m]}} \\
    & \ge \sum_i \E_{\type_i \sim \dist_i} \InBrackets{\sum_{S\subseteq [m]} \sigma_{iS} \InParentheses{\type_i}\uti_i^{\InParentheses{\str}}\InParentheses{\type_i,S\cap Y_i \InParentheses{\type_i}}}\\
    & \ge \sum_i \E_{\type_i \sim \dist_i}\InBrackets{\sum_{S\subseteq [m]} \sigma_{iS}\InParentheses{\type_i} \InParentheses{c\cdot \val_i\InParentheses{\type_i, S\cap Y_i\InParentheses{\type_i}} - {\rev^{(\str)}_D\InParentheses{\auc,S\cap Y_i\InParentheses{\type_i}}}}}\\
    & = c\cdot \sum_i\sum_{\type_i \in T_i}\sum_{S\subseteq [m]} f_i\InParentheses{\type_i}\sigma_{iS}\InParentheses{\type_i}\val_i\InParentheses{\type_i, S\cap Y_i\InParentheses{\type_i}} - \sum_i\sum_{\type_i \in T_i}\sum_{S\subseteq [m]}f_i\InParentheses{\type_i}\sigma_{iS}\InParentheses{\type_i}\rev_{\dist}^{(\str)}\InParentheses{{\auc,}S \cap Y_i(\type_i)}\\
\end{split}
\end{align}

The first term here is { exactly $c\cdot \coreh$. Recall the definition of $\coreh$:}
\begin{align}\label{eq:utih2}
    \begin{split}
       \coreh(\sigma,\beta) = \sum_i\sum_{t_i\in T_i}\sum_{S\subseteq [m]}f_i(t_i)\sigma_{iS}(t_i)\val_i\InParentheses{\type_i,S\cap Y_i(\type_i)}.
    \end{split}
\end{align}

We are only left to upper bound the second term. Recall that $\pi_{ij}\left(\type_i\right)$ represents the probability that item $j$ is allocated to bidder $i$, meaning that \(\sum_i\sum_{t_i}f_i\InParentheses{\type_i}\pi_{ij}(t_i) \le 1\) for all $j\in[m]$. Consequently, 
\begin{align}\label{eq:utih3}
 \begin{split}
   \sum_i\sum_{\type_i \in T_i}\sum_{S\subseteq [m]}f_i\InParentheses{\type_i}\sigma_{iS}\InParentheses{\type_i}\rev_{\dist}^{(\str)}\InParentheses{{\auc,}S \cap Y_i(\type_i)} & \leq \sum_i\sum_{\type_i \in T_i}\sum_{S\subseteq [m]}f_i\InParentheses{\type_i}\sigma_{iS}\InParentheses{\type_i}\rev_{\dist}^{(\str)}\InParentheses{{\auc,}S}\\
   & = \sum_j \rev_{\dist}^{(\str)}\InParentheses{\auc,\left\{j\right\}}\sum_i\sum_{t_i}f_i\InParentheses{\type_i}\sum_{S:j\in S}\sigma_{iS}\InParentheses{S}\\
    & = \sum_j \rev_{\dist}^{(\str)}\InParentheses{\auc,\left\{j\right\}}\sum_i\sum_{t_i}f_i\InParentheses{\type_i}\pi_{ij}\InParentheses{\type_i}\\
    & \le  \rev_{\dist}^{(\str)}\InParentheses{\auc, [m]}.
    \end{split}
\end{align}
{The} first inequality employs the monotonicity of $\rev_{\dist}^{(\str)}$, and the first equation is because that $\auc$ is a simultaneous auction, thereby making its revenue additive across items.

Putting~\eqref{eq:utih1},\eqref{eq:utih2} and \eqref{eq:utih3} together, we then finish our proof.

\end{proof}

\notshow{
{\color{blue}
\begin{remark}
\label{rem:poa}
This lemma states that the truncated utility plus the original revenue is a constant approximation to the  truncated welfare when bidders are playing the original Bayes-Nash equilibrium. Although simultaneous second price auction has a $1/4$ PoA bound w.r.t. the welfare~\cite{feldman_simultaneous_2013}, it does not imply the lemma above
\end{remark}}
}

Finally, notice that $\utih_i(\cdot, \cdot)$ is a subadditive function that is $\tau_i$-Lipschitz. To approximate $\sum_i \E_{\type_i \sim \dist_i}\InBrackets{\utih_i\InParentheses{\type_i,[m]}}$, the concentration {inequality} for subadditive functions {tells} us that we can extract the revenue from the bidder's utility by setting {an entry fee} at its median.

\begin{lemma}
\label{lem:entryfee}
There exists bidder-specific entry-fees $\{e_i\}_{i\in [n]}$, such that
\[\sum_i \E_{\type_i \sim \dist_i}\InBrackets{\utih_i\InParentheses{\type_i,[m]}} \le 4\cdot\efrev^{(\str)}_{\dist}\InParentheses{\auc} + \frac52\sum_i\tau_i.\]
\end{lemma}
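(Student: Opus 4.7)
The strategy is the standard ``fee-at-the-median'' technique. For each bidder $i$, let $M_i$ denote the median of the random variable $\utih_i(\type_i,[m])$ with $\type_i\sim D_i$, and set $e_i := M_i$. The argument then has two ingredients: (i) a lower bound $\efrev^{(\str)}_{\dist}(\auc)\ge \tfrac{1}{2}\sum_i M_i$ obtained by showing that each bidder accepts the fee with probability at least $\tfrac12$, and (ii) an upper bound $\E[\utih_i(\type_i,[m])]\le 2M_i+\tfrac{5}{2}\tau_i$ obtained from a Talagrand/Schechtman-style concentration inequality for subadditive, Lipschitz functions over product distributions.

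For (i), observe that $\uti_i^{(\str)}(\type_i,\cdot)$ is monotone in its set argument, since a bidder can always bid $\perp$ on items outside the allowed set. Hence
\[\utih_i(\type_i,[m])=\uti_i^{(\str)}\InParentheses{\type_i,Y_i(\type_i)}\le \uti_i^{(\str)}(\type_i,[m])=\E_{\type_{-i},\bid\sim \str(\type)}\InBrackets{u_i\InParentheses{\type_i,\bid}},\]
where the last equality uses that $s_i(\type_i)$ is a best response at the Bayes-Nash equilibrium $\str$. Consequently the event $\{\utih_i(\type_i,[m])\ge M_i\}$, which has probability at least $\tfrac{1}{2}$ by definition of the median, implies the acceptance event $\{\E[u_i(\type_i,\bid)]\ge e_i\}$ that appears in the definition of $\efrev^{(\str)}_{\dist}(\auc)$. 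Plugging the fees $\{e_i=M_i\}_{i\in [n]}$ into that supremum gives $\efrev^{(\str)}_{\dist}(\auc)\ge \tfrac{1}{2}\sum_i M_i$.

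For (ii), by Lemma~\ref{lem:subadditiveofmu} the map $\type_i\mapsto \utih_i(\type_i,[m])$ is monotone, subadditive, has no externalities, and is $\tau_i$-Lipschitz as a function of the product type drawn from $\bigtimes_j D_{ij}$. A concentration inequality for such functions (in the spirit of Rubinstein--Weinberg and Cai--Zhao) then yields $\E[\utih_i(\type_i,[m])]\le 2M_i+\tfrac{5}{2}\tau_i$. Summing over $i$ and combining with (i) gives
\[\sum_i \E\InBrackets{\utih_i(\type_i,[m])}\le 2\sum_i M_i+\tfrac{5}{2}\sum_i \tau_i\le 4\cdot \efrev^{(\str)}_{\dist}(\auc)+\tfrac{5}{2}\sum_i \tau_i,\]
which is the claimed bound.

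The hard part is the concentration step with the precise constant $\tfrac{5}{2}$. The delicacy is that $\utih_i(\type_i,[m])=\uti_i^{(\str)}(\type_i,Y_i(\type_i))$ involves a truncation set $Y_i(\type_i)$ that itself depends on $\type_i$, so a naive coordinate-swap argument for Lipschitzness must simultaneously account for the change in the type coordinate and the possible addition or removal of one item from the allowed set---this is precisely why the authors package those structural facts into Lemma~\ref{lem:subadditiveofmu} (and also why the new definition of $\tau_i$ in Remark~\ref{rem:corehdiffer} is tailored so that the relevant Lipschitz constant is truly $\tau_i$). Once subadditivity, monotonicity and $\tau_i$-Lipschitzness are in hand, the bound $\E[f]\le 2M+\tfrac{5}{2}c$ should follow by integrating a Talagrand-type tail estimate of the form $\Pr[f\ge 2M+tc]$ in $t$, rather than by a purely black-box invocation of Talagrand's inequality.
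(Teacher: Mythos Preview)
Your proposal is correct and follows essentially the same approach as the paper: set $e_i$ to the median of $\utih_i(\type_i,[m])$, use monotonicity of $\uti_i^{(\str)}$ to argue acceptance with probability at least $\tfrac12$ (giving $\efrev^{(\str)}_{\dist}(\auc)\ge \tfrac12\sum_i e_i$), and apply the subadditive concentration inequality (stated in the paper as Lemma~\ref{lem:concentrate}, i.e., $\E[g]\le 2a+\tfrac{5}{2}c$) via the structural properties from Lemma~\ref{lem:subadditiveofmu}. The paper invokes the concentration bound as a black-box lemma from \cite{cai_simple_2019} rather than re-deriving it, but otherwise the arguments coincide.
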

\begin{proof}
We first introduce a concentration inequality for subadditive function from Corollary~1 in \cite{cai_simple_2019}.

\begin{lemma}[\cite{cai_simple_2017}]
\label{lem:concentrate}
    Let $g(t, \cdot)$ with $t\sim \dist = \bigtimes_j \dist_j$ be a function drawn from a distribution that is subadditive over independent items of ground set $I$. Assume that the function \(g( \cdot, \cdot )\) exhibits \(c\)-Lipschitzness. Let \(a\) represent the median of {the random variable} \(g(t, I)\), that is, \(a = \inf \left\{ x \geq 0 : {\Pr_t[g(t, I) \leq x]} \geq \frac{1}{2} \right\}\).{Therefore,}\[\E_t[g(t,I)]\le 2a + \frac{5c}{2}.\]
\end{lemma}
Notice that $\utih(\type_i, [m])$ is a random variable in which the randomness comes from its random type $\type_i$. Let $e_i$ be the median of $\utih_i(t_i, [m])$. Since $\utih\InParentheses{\cdot, \cdot}$ is subadditive over independent items and  $\tau_i$-Lipschitz {by Lemma~\ref{lem:subadditiveofmu}}, {Lemma~\ref{lem:concentrate} implies the following}
\begin{equation}\label{eq:entryfee1}
\E_{\type_i \sim \dist_i}\InBrackets{\utih\InParentheses{\type_i, [m]}}\le 2e_i + \frac52  \tau_i.    
\end{equation}

The monotonicity of $\uti_i$ implies that $\uti_i\InParentheses{t_i, [m]} \ge \utih_i\InParentheses{t_i, [m]}$. Therefore, if we set the entry fee as $e_i$, i.e.,  the median of $\utih_i(t_i, [m])$, the probability that bidder $i$ pays the entry fee is at least $1/2$. Thus
\begin{equation}\label{eq:entryfee2}
\efrev^{(\str)}_{\dist}\InParentheses{\auc} \geq \sum_i e_i\Pr_{\type_i \sim \dist_i}\left[\uti_i\InParentheses{\type_i, [m]} \ge e_i\right] \ge \half \sum_i e_i.
\end{equation}

Combining \eqref{eq:entryfee1} and \eqref{eq:entryfee2},we then get
\begin{align*}
    \sum_i \E_{\type_i \sim \dist_i}\InBrackets{\utih_i\InParentheses{\type_i,[m]}} & \le 2\sum_i e_i + \frac52\sum_i\tau_i\\
    & \le 4\cdot\efrev^{(\str)}_{\dist}\InParentheses{\auc} + \frac52\sum_i\tau_i.
\end{align*}
\end{proof}

As the last step, we show that the sum of the Lipschitz constant $\sum_{i}\tau_i$ can be approximated by  $\rprev$.

\begin{lemma}
\label{lem:sumtau}
{For any $\beta$ that satisfies~\eqref{eq:beta1} in \Cref{lem:RevUpper}, }
    \[\sum_i\tau_i \le \frac{4}{1-b}\cdot \rprev.\]
\end{lemma}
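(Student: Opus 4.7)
The plan is to mirror the argument of Lemma~\ref{lem:sumofci}, replacing the role of $\beta_{ij}+c_i$ with $\max\{\beta_{ij},\tau_i\}$, and then invoke Lemma~\ref{lem:reserveprice} as the main tool. Concretely, I would set the reserve prices to $r_{ij}:=\max\{\beta_{ij},\tau_i\}$ for each bidder $i$ and item $j$, and verify that these reserve prices satisfy both hypotheses of Lemma~\ref{lem:reserveprice}. Condition~(1) is obtained by monotonicity together with \eqref{eq:beta1}: since $r_{ij}\ge \beta_{ij}$, we have
\[\sum_{i}\Pr_{t_{ij}}\InBrackets{V_i(t_{ij})\ge r_{ij}}\;\le\;\sum_{i}\Pr_{t_{ij}}\InBrackets{V_i(t_{ij})\ge \beta_{ij}}\;\le\; b.\]
Condition~(2) follows directly from the definition of $\tau_i$ as an infimum: for each $i$, $\sum_{j}\Pr_{t_{ij}}[V_i(t_{ij})\ge \max\{\beta_{ij},\tau_i\}]\le \tfrac12$.

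Invoking Lemma~\ref{lem:reserveprice} then gives
\[\sum_{i,j}\max\{\beta_{ij},\tau_i\}\cdot \Pr_{t_{ij}}\InBrackets{V_i(t_{ij})\ge \max\{\beta_{ij},\tau_i\}}\;\le\;\frac{2}{1-b}\cdot \rprev.\]
Next I would lower bound the left hand side by dropping $\beta_{ij}$ from the maximum, obtaining $\sum_i \tau_i\cdot\sum_j \Pr_{t_{ij}}[V_i(t_{ij})\ge \max\{\beta_{ij},\tau_i\}]$. For each $i$ with $\tau_i>0$, the infimum definition of $\tau_i$ forces the inner sum to equal (at least) $\tfrac12$: for every $x<\tau_i$ the inner sum at $x$ exceeds $\tfrac12$, and passing to the limit $x\uparrow \tau_i$ (using the left-continuity of $y\mapsto \Pr[V_i(t_{ij})\ge y]$) preserves this lower bound at $\tau_i$. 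For $i$ with $\tau_i=0$ the corresponding summand is zero. Thus the left hand side is at least $\tfrac12\sum_i \tau_i$, and we conclude $\sum_i \tau_i\le \tfrac{4}{1-b}\cdot \rprev$.

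The only subtlety in this plan is the edge case where the defining infimum of $\tau_i$ is not achieved exactly with equality $=\tfrac12$; this is precisely the same issue handled implicitly in Lemma~\ref{lem:sumofci}, and can be resolved by an $\varepsilon$-perturbation argument (replace $\tau_i$ by $\tau_i+\varepsilon$ so condition~(2) holds strictly, lower bound using $\tau_i$, then send $\varepsilon\to 0$). I do not anticipate any other obstacle.
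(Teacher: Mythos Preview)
Your proposal is correct and follows essentially the same approach as the paper: the paper invokes Lemma~\ref{lem:tailtau} (which is itself proved by applying Lemma~\ref{lem:reserveprice} with $r_{ij}=\max\{\beta_{ij},\tau_i\}$), then lower-bounds the left-hand side exactly as you do, using that $\sum_j\Pr[V_i(t_{ij})\ge\max\{\beta_{ij},\tau_i\}]\ge\tfrac12$ whenever $\tau_i>0$. You have simply inlined Lemma~\ref{lem:tailtau}, and your remark about the $\varepsilon$-perturbation for the infimum edge case is an appropriate acknowledgment of a subtlety the paper also glosses over.
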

\begin{proof}
Notice that

\begin{equation}
\label{eq:tau2}
    \sum_{i,j}\max\left\{\beta_{ij}, \tau_i\right\}\Pr_{t_{ij}}\left[V_i(t_{ij}) >\max\left\{\beta_{ij}, \tau_i\right\}\right]\ge \sum_{i,j} \tau_i\Pr_{t_{ij}}\left[V_i(t_{ij}) >\max\left\{\beta_{ij}, \tau_i\right\}\right].
\end{equation}
    According to the definition of $\tau_i$, when $\tau_i > 0$,
    \begin{equation}\label{eq:tau3}
    \sum_j\Pr_{t_{ij}}\left[V_i(t_{ij}) >\max\left\{\beta_{ij}, \tau_i\right\}\right] = \half.        
    \end{equation}

    Combining~\eqref{eq:tau2}, \eqref{eq:tau3} and Lemma~\ref{lem:tailtau}, we then get $\sum_i\tau_i \le \frac{4}{1-b}\cdot \rprev$.
\end{proof}

{It is evident that Lemma~\ref{lem:coreh} is a direct consequence of the amalgamation of Lemma~\ref{lem:utih}, Lemma~\ref{lem:entryfee}, and Lemma~\ref{lem:sumtau}. Analogously, by combining Lemma~\ref{lem:coregap} and Lemma~\ref{lem:coreh}, we subsequently obtain Lemma~\ref{lem:approxcore}.}

Finally, we are now ready to prove our main theorem, {i.e.,} Theorem~\ref{thm:main}.
\begin{proof}[Proof of Theorem~\ref{thm:main}]
From the statement of Lemma~\ref{lem:approxsingle}, Lemma~\ref{lem:approxtail} and Lemma~\ref{lem:approxcore}, we get that 
\begin{align}\label{eq:mainproof1}
\begin{split}    
\single\InParentheses{\sigma^{(\beta)}, \beta} &\leq 8 \cdot \rprev\\
\tail\InParentheses{\beta} &\leq \frac{2}{1 - b} \cdot \rprev\\
\core(\sigma^{(\beta)},\beta) &\leq  \frac{4}{c}\cdot\efrev^{(\str)}_{\dist}\InParentheses{\auc} + \frac{1}{c}\cdot \rev^{(\str)}_{\dist}\InParentheses{\auc} + \InParentheses{\frac{2b + 2}{b(1-b)} + \frac{10}{c(1-b)}}\cdot \rprev
\end{split}
\end{align}

Lemma~\ref{lem:RevUpper} demonstrates that 
\begin{align}\label{eq:mainproof2}
\begin{split}
    \rev_{\dist}(\M) \leq 2\cdot \single\InParentheses{\sigma^{(\beta)}, \beta} + 4\cdot \tail\InParentheses{\beta} + 4\cdot \core\InParentheses{\sigma^{(\beta)}, \beta}.
\end{split}
\end{align}

Combining~\eqref{eq:mainproof1} and \eqref{eq:mainproof2}, we then get 
\[\rev_{\dist}(\M)\leq  \frac{16}{c}\cdot\efrev^{(\str)}_{\dist}\InParentheses{\auc} + \frac{4}{c}\cdot \rev^{(\str)}_{\dist}\InParentheses{\auc} + \left(\frac{16b + 8}{b(1-b)} + \frac{40}{c(1-b)} + { 16}\right)\cdot \rprev.\]

By \Cref{coro:rprev} and Lemma~\ref{lem:entryrev}, we then know that there exists a set of entry fees $\{e_i\}_{i\in [n]}$ and a set of reserve prices $\{r_{ij}\}_{i\in [n],j\in [m]}$ so that for any equilibrium $\str$ of auction $\auc$ with reserve price $r$, i.e., $\auc_{\mathrm{RP}}^{(r)}$, and any $\varepsilon_1,\varepsilon_2, \delta\in (0, 1)$, it holds that
\[\rev_{\dist}(\M)\leq \frac{20}{c\cdot (1 - \delta -\varepsilon_1)}\cdot \rev^{(\str)}_{\dist}\InParentheses{\auc^{(e)}_{\mathrm{EF}}} + \InParentheses{1 - \varepsilon_2}^{-1}\left(\frac{16b + 8}{b(1-b)} + \frac{40}{c(1-b)} + { 16}\right)\cdot \rev^{(\str')}_{\dist}\InParentheses{\auc_{\mathrm{RP}}^{(r)}}.\]

Taking \(\delta = \varepsilon_1 = \varepsilon_2 = 0.01\) and $b = \frac{1}{5}$, we get that 
\[\rev_{\dist}(\M)\leq \frac{21}{c}\cdot \rev^{(\str)}_{\dist}{ \InParentheses{\auc^{(e)}_{\mathrm{EF}}}} + \left(87 + \frac{51}{c}\right)\cdot \rev^{(\str')}_{\dist}\InParentheses{\auc_{\mathrm{RP}}^{(r)}}.\]

{Since this inequality holds for any BIC mechanism $M$, we have proved our claim.}
\end{proof}

\bibliographystyle{plain}
\bibliography{bibliography,references_yang, ref_tmp}

\appendix

\section{Additional Preliminaries}
\label{appendix:example}
\paragraph{Bayes-Nash Equilibrium} A strategy profile $\str = \InParentheses{\str_1, \str_2, \cdots, \str_n}$ is a Bayes-Nash equilibrium (BNE) with respect to type distribution $\dist$ and valuation functions $\{v_i\}_{i\in[n]}$ if and only if for any bidder $i$, any type $t_i$, and any {strategy $\tilde{\str}_i:T_i\rightarrow \mathbb{R}^m_{\geq 0}$}, the following inequality holds
\begin{large}
\begin{align*}
    \E_{\type_{-i} \sim \dist_{-i}}\InBrackets{\E_{\bid\sim \left(\str_i(\type_i), \str_{-i}(\type_{-i})\right)} \InBrackets{u_i\InParentheses{t_i, b}}} \geq  \E_{\type_{-i} \sim \dist_{-i}}\InBrackets{\E_{\substack{\tilde{\bid}_i \sim {\tilde{\str}_i(t_i)}\\ \bid_{-i}\sim s_{-i}\InParentheses{\type_{-i}} }}\InBrackets{u_i\InParentheses{t_i,\InParentheses{\tilde{\bid}_i, \bid_{-i}}}}}.
\end{align*}
\end{large}

\paragraph{Examples of~Valuations} Suppose $t = \InAngles{t_j}_{j\in [m]}$ where $\type_j$ is drawn independently from $\dist_j$. We {show} how subadditive functions over independent items {capture} various families of valuation functions.

\begin{itemize}
    \item Additive: $\type_j$ is the value of item $j$, and $\val\InParentheses{\type, S} = \sum_{j\in S} \type_j$.
    \item Unit-demand: $\type_j$ is the value of item $j$, and $\val\InParentheses{\type, S} = \max_{j\in S} \type_j$.
    \item Constrained Additive: $\type_j$ is the value of item $j$, and suppose $\mathcal{I}$ is {a} family of feasible sets. $\val(\type, S) = \max_{{Y\subseteq S, Y\in \mathcal{I}}} \sum_{i\in Y} \type_i$.
    \item XOS/Fractionally Subadditive: let $\type_j = \left\{\type_j^{(k)}\right\}_{k\in [K]}$ be the collection of values of item $j$ for each {of the $K$ additive functions}, and $\val(\type, S) = \max_{k\in [K]} \sum_{j\in S} \type_j^{(k)}$.
\end{itemize} 
\section{Tie-breaking and the Existence of Equilibrium}
\label{appendix:tiebreaking}

\subsection{Tie-breaking}

For distribution $D$ with point masses, the following reduction will convert it to {a} continuous one. We will overload the notation of $D$ and think of it as a bivariate distribution with the first coordinate drawn from the previous single-variate distribution $D$ and the second tie-breaker coordinate drawn independently and uniformly from $[0,1]$. And $(X_1, t_1) > (X_2, t_2)$ if and only if either $X_1 > X_2$, or $X_1 = X_2$ and $t_1 > t_2$. Since the tie-breaker coordinate is continuous, the probability of having $(X_1, t_1) = (X_2, t_2)$ for any two values during a run of any mechanism is zero. 

Remind the second coordinate is only used to break ties, and it does not affect the calculation of payment. Note that when we {run} a mechanism with entry fees $\{e_i\}$, the second coordinate does not affect whether bidder $i$ chooses to pay the entry fee or not. It is only used to {break ties} in the execution of $\auc$. This means that we can even remove the second coordinate when implementing the mechanism with entry fees and still use the same ways to {break ties as in $\auc$}. Therefore, by adding the second tie-breaker coordinate, we get a continuous distribution, and do not change the structure of equilibrium of mechanisms with entry fees.

\subsection{The Existence of Equilibrium}
\label{subsec:existence}

Our result applies to every equilibrium in simultaneous auctions that satisfies $c$-efficient. However, equilibria may not exist when the type spaces and and strategy spaces are both continuous. To fix this, we can restrict the strategy spaces to be discrete and bounded, {e.g.,} $\varepsilon$-grid in $[0,H]$, and assume the type spaces to be finite. Consequently, this transforms the {game into a finite one}, and thus an equilibrium must inherently exist.

We refer readers to \cite{feldman_simultaneous_2013} {for} a detailed discussion of existence of equilibrium in simultaneous auctions.

\section{Proof of Lemma~\ref{lem:firstprice} and \ref{lem:allpay}}
\label{appendix:proofofgood}

The proof here is inspired by \cite{feldman_simultaneous_2013}.

The first and second condition is obviously true for simultaneous first-price {auctions} and simultaneous all-pay {auctions}. Now we argue that the third condition with $c = \half$ is satisfied by simultaneous first-price {auctions} and simultaneous all-pay {auctions}. Consider any bidder $i$ with type $t_i$ and a set of items $S\subseteq [m]$.

We let $P_{-i}$ be the distribution of $m$-dimensional vector $\max_{i\ne i}\bid_{i'} = \left(\max_{i'\ne i}\bid_{i'}^{(j)}\right)_{j\in [m]}$ where the randomness is from both $t_{-i}$ and $\str_{-i}(t_{-i})$. {Let \( q_i \) be a random variable sampled from the distribution \( P_{-i} \). Consider the random bid of bidder \( i \), which is \( q_i \) plus a small constant \( \varepsilon > 0 \) added to each component, with the entire vector constrained to the set \( S \).}
 For ease of notation, we denote this vector by $(q_i+\varepsilon)|_S$, whose $j$-th coordinate is $q_i^{(j)}+\varepsilon$ when $j\in S$, and equals to $0$ otherwise.

\begin{align}
    \begin{split}
        & \E_{\substack{\type_{-i} \sim \dist_{-i}\\ q_i\sim P_{-i},~ \bid_{-i}\sim \str_{-i}\InParentheses{\type_{-i}}}}\InBrackets{\val_i\InParentheses{t_i,\alloc_i\InParentheses{(q_i+\varepsilon)|_S, \bid_{-i}\InParentheses{\type_{-i}}} \cap S}}\\
        & \ge \E_{\substack{\type_{-i} \sim \dist_{-i}\\ q_i\sim P_{-i},~ \bid_{-i}\sim \str_{-i}\InParentheses{\type_{-i}}}}\InBrackets{\val_i\InParentheses{t_i, \left\{j: q_i^{(j)} + \varepsilon> \max_{i'\ne i}b_{i'}^{(j)}\right\}\cap S} }\\
        & = \E_{\substack{q_i\sim P_{-i}\\r_i\sim P_{-i}}}\InBrackets{\val_i\InParentheses{t_i, \left\{j: q_i^{(j)} + \varepsilon > r_i^{(j)}\right\}\cap S} }\\
        & = \half \cdot \E_{\substack{q_i\sim P_{-i}\\r_i\sim P_{-i}}}\InBrackets{\val_i\InParentheses{t_i, \left\{j: q_i^{(j)} + \varepsilon > r_i^{(j)}\right\}\cap S}  + \val_i\InParentheses{t_i, \left\{j: r_i^{(j)} + \varepsilon > q_i^{(j)}\right\}\cap S} }\\
        & \ge \half \val_i\InParentheses{t_i, S}.
    \end{split}
\end{align}

The last inequality is because the union of $\left\{j: q_i^{(j)} + \varepsilon > r_i^{(j)}\right\}$ and $\left\{j: r_i^{(j)} + \varepsilon > q_i^{(j)}\right\}$ is $[m]${, and $v_i(t_i,\cdot)$ is a subadditive function}. Also notice that in simultaneous first-price or all-pay auction{s}, the payment on a single item {does not exceed} the bid on {the} item, so the total payment of a bidder {does} not exceed the sum of their bid{s}.
\begin{align}
    \begin{split}
        \uti_i\InParentheses{\type_i, S} & \ge \E_{\substack{\type_{-i} \sim \dist_{-i}\\ q_i\sim P_{-i},~ \bid_{-i}\sim \str_{-i}\InParentheses{\type_{-i}}}}\InBrackets{{\val_i\InParentheses{t_i,\alloc_i\InParentheses{q_i|_S, \bid_{-i}} \cap S}} - {\sum_{j\in S}\pay_i^{(j)}\InParentheses{q_i^{(j)}, \bid_{-i}^{(j)}}}}\\
        & \ge \half \val_i\InParentheses{\type_i, S} - \sum_{j\in S}\E_{q_i \sim P_{-i}}{\InBrackets{q_i^{(j)}}} - \left|S\right| \cdot \varepsilon\\
        & = \half \val_i\InParentheses{\type_i, S} - \sum_{j\in S}\E_{\substack{\type_{-i} \sim \dist_{-i}\\ \bid_{-i}\sim \str_{-i}\InParentheses{\type_{-i}}}}\InBrackets{\max_{i'\ne i}{\bid_{i'}^{(j)}}} - \left|S\right| \cdot \varepsilon.\\
    \end{split}
\end{align}

At the end, since in first-price or all-pay auction the revenue from a item is at least the maximum of bid on this item, so
\[\rev^{(\bid)}\InParentheses{\auc, \left\{j\right\}} \ge \E_{\substack{\type \sim \dist\\ \bid \sim \str\InParentheses{\type}}}\InBrackets{{\max_i\bid_i^{(j)}}}.\]

Therefore,
\begin{align*}
    \uti_i\InParentheses{\type_i, S} & \ge \half \val_i\InParentheses{\type_i, S} - \sum_{j\in S}\E_{\substack{\type_{-i} \sim \dist_{-i}\\ \bid_{-i}\sim \str_{-i}\InParentheses{\type_{-i}}}}\InBrackets{\max_{i'\ne i}{\bid_{i'}^{(j)}}} - \left|S\right| \cdot \varepsilon\\
    & \ge \half \val_i\InParentheses{\type_i, S} - \sum_{j\in S}\E_{\substack{\type \sim \dist\\ \bid \sim \str\InParentheses{\type}}}\InBrackets{\max_{i}{\bid_i^{(j)}}} - \left|S\right| \cdot \varepsilon\\
    & \ge \half \val_i\InParentheses{\type_i, S} - \rev_{\dist}^{(\bid)}\InParentheses{\auc, S} - \left|S\right| \cdot \varepsilon.
\end{align*}

Taking $\varepsilon \rightarrow 0$, by definition of $\uti_i\InParentheses{\type_i, S}$, we know
\[\uti_i\InParentheses{\type_i, S} \ge \half \val_i\InParentheses{\type_i, S} - \rev_{\dist}^{(\bid)}\InParentheses{\auc, S}.\] 
\section{Missing Proofs in Section~\ref{sec:mecha}}
\label{appendix:mecha}

\subsection{Proof of Lemma~\ref{lem:RandomSameEq}}
\label{subsec:proofofRandomSameEq}
For any strategy profile $s$ with respect to a prior distribution of types $\dist$ in auction {$\auc$}, {we slightly abuse notation and }let $u_i^{(\str)}(t_i)$ be the interim utility of bidder $i$ with {type} $t_i$. Namely, 
\[u_i^{(\str)}\InParentheses{t_i} = \E_{\type_{-i} \sim \dist_{-i}}\InBrackets{\E_{\bid\sim \str(\type_i, \type_{-i})}\InBrackets{u_i\InParentheses{t_i, b}}}.\]

Then by definition a strategy profile $\str$ is a {Bayes-Nash} equilibrium in {$\auc$} iff for any bidder $i$, type $\type_i$ and a mixed strategy $\str_i'$, $u_i^{(\str)}\InParentheses{t_i}\ge u_i^{(\str_i', \str_{-i})}\InParentheses{t_i}$.

Given a strategy profile $\str$ in auction {$\auc_{\mathrm{EF}}^{(e)}$}, for the bidder $i$ with type $\type_i$, {$i$ receives $\delta$ times their interim utility $u_i^{(\str)}(t_i)$ in auction $\auc$ by reporting $z_i = 0$} If {$i$ reports} $z_i = 1$, the interim utility is $u_i^{(\str)}(t_i)$ minus {$(1-\delta)e_i$}. Hence, in auction {$\auc_{\mathrm{EF}}^{(e)}$} the interim utility of bidder $i$ with type $\type_i$ {is} 
\[\tilde{u}_i^{(\str)}\InParentheses{t_i} := \max\left\{
\delta\cdot u_i^{(\str)}(t_i), u_i^{(\str)}(t_i) - (1 - \delta) e_i\right\}\]

Notice that $\max\left\{\delta\cdot x, x - (1 - \delta) e_i\right\}$ is a strictly increasing function with respect to $x$ for $\delta\in (0, 1)$, which means that $\tilde{u}_i^{(\str)}\InParentheses{t_i}$ is a strictly increasing function with respect to $u_i^{(\str)}\InParentheses{t_i}$. Thus, $\tilde{u}_i^{(\str)}\InParentheses{t_i}\ge \tilde{u}_i^{(\str_i', \str_{-i})}\InParentheses{t_i}$ is equivalent to $u_i^{(\str)}\InParentheses{t_i}\ge u_i^{(\str_i', \str_{-i})}\InParentheses{t_i}$. As a result, we know that a strategy profile $\str$ is a equilibrium in {$\auc$} if and only if it is a equilibrium in {$\auc_{\mathrm{EF}}^{(e)}$}.

\subsection{Proof of Lemma~\ref{lem:entryrev}}
\label{subsec:proofofentryrev}
We use the same notation $u_i^{(\str)}(\type_i)$ to denote the interim utility of bidder $i$ with type $\type_i$ in auction $\auc${, when all bidders bid according to strategy profile $s$}.

Taking $e_i=0$ for all $i\in [n]$, we know $\rev^{(\str)}_{\dist}\InParentheses{\auc^{(e)}_{\mathrm{EF}}} = \rev^{(\str)}_{\dist}(\auc)$,

If $\efrev^{(\str)}_{\dist}(\auc) = 0$, we have already finished the proof.

When $\efrev^{(\str)}_{\dist}(\auc) > 0$, we only need to prove for any $\varepsilon > 0$, there exists a set of entry fees $\{e_i\}_{i\in [n]}$ so that
\[\rev^{(\str)}_{\dist}\InParentheses{\auc^{(e)}_{\mathrm{EF}}} \geq \left(1 - \delta - \varepsilon\right)\efrev^{(\str)}_{\dist}(\auc). \]

Now consider any $\varepsilon > 0$, by definition of $\efrev^{(\str)}_{\dist}(\auc)$, there exists a set of $e_i$ such that 
\[\sum_{i} e_i\cdot \Pr_{\substack{\type_i\sim \dist_i}}\InBrackets{u_i^{(\str)}(t_i) \geq e_i} \ge \left(1 - \varepsilon\right)\efrev^{(\str)}_{\dist}(\auc)\]

Now simply consider the mechanism $\auc$ with entry fee $\{e_i\}_{i\in [n]}$, {i.e.,} $\auc_{\mathrm{EF}}^{(e)}$. It's clear that bidder $i$ will pay entry fee iff  $u_i^{(\str)}(t_i) \ge e_i$. The revenue of $\auc^{(e)}_{\mathrm{EF}}$ is at least its revenue from entry fees, so
\begin{align*}
    \rev_{\dist}^{(\str)}\InParentheses{\auc_{\mathrm{EF}}^{(e)}} & \ge (1-\delta)\sum_{i} e_i\cdot \Pr_{\substack{\type_i\sim \dist_i}}\InBrackets{u_i^{(\str)}(t_i) \ge e_i}\\
    & \ge (1 - \delta - \varepsilon)\efrev^{(\str)}_{\dist}(\auc).
\end{align*}

By choosing the better entry fee between $0$ and $\{e_i\}_{i\in [n]}$, we conclude our proof.

\subsection{A Hard Instance for {the} Simultaneous Second Price Auction}
\label{subsec:hardinstance}

We first provide a counter-example to show that not every equilibrium of {the} simultaneous second price auction satisfies the third condition in \Cref{def:good}.
\begin{example}
\label{ex:S2A}
Consider the following deterministic instance. There are $n$ unit-demand bidders and $n$ items. For each bidder $i$, their favourite item is the $i$-th item, and their {value} towards that item is $1$. For any other item, their {value} is $\varepsilon$, where $\varepsilon$ is a {constant strictly less than $1$}. 
    
\begin{figure}[H]
\centering

\begin{tikzpicture}[thick,amat/.style={matrix of nodes,nodes in empty cells,
  fsnode/.style={draw,solid,circle,execute at begin node={}},
  ssnode/.style={draw,solid,circle,execute at begin node={}}}]

 \matrix (m1left) [amat,nodes=fsnode,label=above:Bidders,row sep=2.5em,dashed,draw = none,rounded corners]  {
  \node (m1-left1) [draw,fill=none] {$1$};\\
  \node (m1-left2) [draw,fill=none] {$2$};\\ 
  \node (m1-left3) [draw = none,fill=none] {$\cdots$};\\
  \node (m1-left4) [draw,fill=none] {${n}$};\\
 };

 \matrix (m1right) [amat,right=3cm of m1left,nodes=ssnode, draw = none, label=above:Items,row sep=2.5em,dashed,rounded corners]  {
 \node (m1-right1) [draw,fill=none] {$1$};\\
 \node (m1-right2) [draw,fill=none] {$2$};\\  
 \node (m1-right3) [draw = none,fill=none] {$\cdots$};\\
 \node (m1-right4) [draw,fill=none] {$n$};\\
 };

\draw (m1-left1) -- node[below] {} node[above] {$1$} ++(m1-right1);
\draw [dashed] (m1-left1) -- node[below] {} node[above] {$\varepsilon$} ++(m1-right2);
\draw [dashed](m1-left1) -- node[below] {} node[above] {$\varepsilon$} ++(m1-right4);

\draw [dashed](m1-left2) -- node[below] {} node[above] {$\varepsilon$} ++(m1-right1);
\draw  (m1-left2) -- node[below] {} node[above] {$1$} ++(m1-right2);
\draw [dashed](m1-left2) -- node[below] {} node[above] {$\varepsilon$} ++(m1-right4);

\draw [dashed] (m1-left4) -- node[below] {} node[above] {$\varepsilon$} ++(m1-right1);
\draw [dashed] (m1-left4) -- node[below] {} node[above] {$\varepsilon$} ++(m1-right2);
\draw (m1-left4) -- node[below] {} node[above] {$1$} ++(m1-right4);

\end{tikzpicture}

\caption{A Counter-Example for Simultaneous Second Price Auction}
\label{fig:hardinstance}
\end{figure}
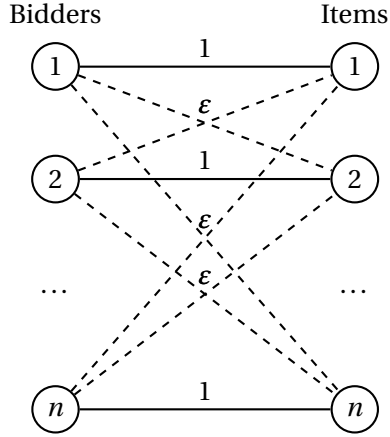
\end{example}

For this instance, suppose each {bidder} $i$ bids $1$ on their favorite item, {i.e.,} item $i$, and bids $0$ on any item else. It is clear  that this is a no over-bidding pure Nash equilibrium as {everyone} gets their favorite item and pays nothing. Therefore, in this equilibrium $\str$, $\rev_{\dist}^{(\str)}(\auc) = 0$. What's more, it is easy to see that this equilibrium is optimal in welfare. 

Let $S_i = [n]\backslash \{i\}$. However, we can see that $\mu_i^{(\str)}\InParentheses{\type_i,S_i} = 0$ as for every item $j \neq i$, the maximum bid at equilibrium $\str$ is $1$, and consequently, bidder $i$ has no motivation to engage in competition for that item. Also note that $\val_i(\type_i, S_i) = \varepsilon$. This implies that \[\mu_i^{(\str)}\InParentheses{\type_i,S_i} + \rev_{\dist}^{(\str)}(\auc)  = 0  < c\cdot \val_i(\type_i, S_i) \]
for any $c > 0$.

\subsection{A More Detailed Discussion of \SAprop~simultaneous auctions}
\label{subsec:gooddiscussion}

{We} introduce a property of $\mu^{(\str)}(\type_i,\cdot)$ that is essential in approximating the optimal revenue.

\begin{lemma}
\label{lem:subadditiveofmu}

For any $i$ and any constant $l_i \geq 0$, let $L_i(t_i)$ be the set {$\{j: V_i(\type_{ij}) < l_i\}$}, and define $\hat{\uti}_i^{(\str)}\InParentheses{t_i,S} = \uti_i^{(\str)}\InParentheses{t_i, S\cap L_i(t_i)}$. {Recall that $\uti_i^{(\str)}\InParentheses{t_i, S\cap L_i(t_i)}$ is defined in \Cref{def:good}.}
If the first condition of {\Cref{def:good}} is satisfied by \CEtuple,  $\hat{\uti}^{(\str)}_i\InParentheses{\cdot, \cdot}$ satisfies monotonicity, subadditivity, no externalities and {is} $l_i$-Lipschitz.
\end{lemma}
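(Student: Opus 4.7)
The plan is to first establish the three ``structural'' properties (monotonicity, subadditivity, no externalities) for the untruncated quantity $\uti_i^{(\str)}(t_i,\cdot)$, transfer them to $\hat{\uti}_i^{(\str)}$ via the identity $\hat{\uti}_i^{(\str)}(t_i,S)=\uti_i^{(\str)}(t_i,S\cap L_i(t_i))$, and then handle the $l_i$-Lipschitzness, which is the only step that genuinely uses the truncation by $l_i$.

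The three structural properties for $\uti_i^{(\str)}$ all hinge on a single feature of simultaneous auctions: because each single-item auction $\auc_j$ runs independently, and because the first condition of \Cref{def:good} guarantees that bidding $\perp$ on item $j$ produces no allocation and no payment on $j$, we can freely edit a bidding strategy $q_i$ on a subset of items without affecting what happens on the others. For \emph{monotonicity}, given an optimal $q_i^\ast$ for $\uti_i^{(\str)}(t_i,U)$ with $U\subseteq V$, use the same strategy for $V$ but with $\perp$ on $V\setminus U$; this preserves the allocation inside $U$ and adds no payment. For \emph{subadditivity}, take an optimal $q_i^\ast$ for $U\cup V$ and split it into $q_i^U,q_i^V$ (with $\perp$ outside of $U$ and $V$ respectively); subadditivity of $v_i$ gives the valuation bound, and nonnegativity of payments (double-counting on $U\cap V$ only hurts us) gives the payment bound. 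For \emph{no externalities}, observe that for any fixed $q_i$ the quantity inside the supremum depends on $t_i$ only through $v_i(t_i,\alloc_i(q_i,b_{-i})\cap S)$, which by no externalities of $v_i$ depends only on $(t_{ij})_{j\in S}$. Because $S\cap L_i(t_i)$ itself depends only on $(t_{ij})_{j\in S}$ (since $L_i$ is defined coordinate-wise via $V_i(t_{ij})<l_i$), all three properties pass to $\hat{\uti}_i^{(\str)}$.

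For \emph{$l_i$-Lipschitzness} I would go through the intermediate point $(t,X\cap Y)$ and then $(t',X\cap Y)$ via the triangle inequality,
\[
\bigl|\hat{\uti}_i^{(\str)}(t,X)-\hat{\uti}_i^{(\str)}(t',Y)\bigr|\le \underbrace{\bigl|\hat{\uti}_i^{(\str)}(t,X)-\hat{\uti}_i^{(\str)}(t,X\cap Y)\bigr|}_{\text{(a)}}+\underbrace{\bigl|\hat{\uti}_i^{(\str)}(t,X\cap Y)-\hat{\uti}_i^{(\str)}(t',X\cap Y)\bigr|}_{\text{(b)}}+\underbrace{\bigl|\hat{\uti}_i^{(\str)}(t',X\cap Y)-\hat{\uti}_i^{(\str)}(t',Y)\bigr|}_{\text{(c)}},
\]
and bound each of (a), (b), (c) separately. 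For (a) and (c), fix the type and vary the set: using monotonicity and subadditivity of $\hat{\uti}_i^{(\str)}(t,\cdot)$, plus the elementary bound $\uti_i^{(\str)}(t,\{j\})\le V_i(t_{ij})<l_i$ for every $j\in L_i(t)$ (which uses nonnegative payments from the first condition of \Cref{def:good}), I get a bound of $l_i|X\setminus Y|$ and $l_i|Y\setminus X|$, together summing to $l_i|X\Delta Y|$. For (b), fix the set and vary the type coordinate-by-coordinate over $\{j\in X\cap Y:t_j\ne t_j'\}$. Each single-coordinate change at some $j^\ast$ is handled by a short case analysis on whether $j^\ast\in L_i(t)$ and $j^\ast\in L_i(t')$: if $j^\ast$ lies in neither, no externalities kills the difference; if it lies in both, common-strategy coupling plus $V_i(t_{ij^\ast}),V_i(t'_{ij^\ast})<l_i$ bounds the change by $l_i$; and in the asymmetric case, subadditivity peels off the singleton $\{j^\ast\}$ and the residual is identical across $t,t'$ by no externalities.

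The main obstacle is (b): the truncation set $L_i$ depends on the very coordinate being perturbed, so one has to be careful that changing $t_{ij^\ast}$ can simultaneously modify both the valuation and whether $j^\ast$ is retained in the effective index set $S\cap L_i(t)$. The case analysis above is precisely designed to decouple these two effects, and it is only here that the specific form $L_i(t)=\{j:V_i(t_{ij})<l_i\}$ is used to supply the per-coordinate cap of $l_i$. Once (b) is in place, the other pieces are routine.
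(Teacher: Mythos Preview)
Your proposal is correct. The structural properties (monotonicity, subadditivity, no externalities) for $\uti_i^{(\str)}$ and their transfer to $\hat{\uti}_i^{(\str)}$ match the paper's argument essentially verbatim.

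For the $l_i$-Lipschitz bound, however, you take a noticeably different route. The paper does not telescope or use any case analysis: it introduces the agreement set $H=\{j\in X\cap Y:t_{ij}=t'_{ij}\}$, observes that $\hat{\uti}_i^{(\str)}(t_i,H)=\hat{\uti}_i^{(\str)}(t'_i,H)$ by no externalities in a single stroke, and then uses monotonicity and subadditivity to reduce the whole difference to $\max\{\hat{\uti}_i^{(\str)}(t_i,X\setminus H),\hat{\uti}_i^{(\str)}(t'_i,Y\setminus H)\}$, which is at most $l_i\cdot\max\{|X\setminus H|,|Y\setminus H|\}\le l_i\cdot(|X\Delta Y|+|X\cap Y|-|H|)$. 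Your three-term triangle inequality through $(t,X\cap Y)$ and $(t',X\cap Y)$ plus a coordinate-by-coordinate telescope for term (b) also works, and the case analysis you outline is sound (in particular the ``both in $L_i$'' case goes through by coupling on a common strategy and bounding the valuation change by $V_i(t_{ij^\ast})<l_i$ via subadditivity and monotonicity of $v_i$). The paper's argument is shorter and sidesteps the per-coordinate bookkeeping entirely; yours is more mechanical but makes the role of the truncation explicit at each step. Both yield the same bound.
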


\begin{proof}
We first prove $\uti_i$ satisfies monotonicity, subadditivity and no externalities.

For any types $t_i,t_i'$, such that $t_{ij} = t_{ij}'$ for all $j\in S$,
\begin{align*}
    \uti_i^{(\bid)}\InParentheses{\type_i, S} & = \sup_{q_i}\ \E_{\substack{\type_{-i} \sim \dist_{-i}\\ \bid_{-i}\sim \str_{-i}\InParentheses{\type_{-i}}}}\InBrackets{{\val_i\InParentheses{t_i,\alloc_i\InParentheses{q_i, \bid_{-i}} \cap S} - { \sum_{j\in S} \pay_i^{\left(j\right)}\InParentheses{q_i^{\left(j\right)}, \bid_{-i}^{\left(j\right)}}}}}\\
    & = \sup_{q_i}\ \E_{\substack{\type_{-i} \sim \dist_{-i}\\ \bid_{-i}\sim \str_{-i}\InParentheses{\type_{-i}}}}\InBrackets{{\val_i\InParentheses{t_i',\alloc_i\InParentheses{q_i, \bid_{-i}} \cap S} -  { \sum_{j\in S} \pay_i^{\left(j\right)}\InParentheses{q_i^{\left(j\right)}, \bid_{-i}^{\left(j\right)}} }}}\\
    & = \uti_i^{(\bid)}\InParentheses{\type_i', S}.
\end{align*} where second equality is by no externalities of $\val_i$. Thus, $\uti_i$ has no externalities.

For any set $U\subseteq V \subseteq [m]$,
\begin{align*}
    \uti_i^{(\bid)}\InParentheses{\type_i, U } & = \sup_{q_i}\ \E_{\substack{\type_{-i} \sim \dist_{-i}\\ \bid_{-i}\sim \str_{-i}\InParentheses{\type_{-i}}}}\InBrackets{{\val_i\InParentheses{t_i,\alloc_i\InParentheses{q_i, \bid_{-i}} \cap U} -  {\sum_{j\in S} \pay_i^{\left(j\right)}\InParentheses{q_i^{\left(j\right)}, \bid_{-i}^{\left(j\right)}}}}}\\
    & \le \sup_{q_i}\ \E_{\substack{\type_{-i} \sim \dist_{-i}\\ \bid_{-i}\sim \str_{-i}\InParentheses{\type_{-i}}}}\InBrackets{{\val_i\InParentheses{t_i,\alloc_i\InParentheses{q_i, \bid_{-i}} \cap V} - {\sum_{j\in S} \pay_i^{\left(j\right)}\InParentheses{q_i^{\left(j\right)}, \bid_{-i}^{\left(j\right)}}}}}\\
    & = \uti_i^{(\bid)}\InParentheses{\type_i, V}.
\end{align*}
{The inequality} is because $\val_i$ is monotone. So $\uti_i$ is monotone.

We use {$q_i|_S$} to denote the bid vector $q_i$ {restricted to bundle} {$S$, which means that $(q_i|_S)^j$ equals $q_i^j$ when $j\in S$}, and equals to {the null action $\perp$} otherwise. For all $\type_i$ and $U, V\subseteq [m]$, let $W = \InParentheses{U\cup V}\backslash U$. Then  {$U\cap W = \emptyset$}, $U\cup W = U\cup V$ and  $W\subseteq V$. To prove subadditivity of $\uti_i$, we first {prove the following claims}, one equality and one inequality, which are true for any bid profile $b_{-i}$.
\begin{align}
    \begin{split}
    \label{eq:cupofalloc}
    \alloc_i\InParentheses{q_i, \bid_{-i}} \cap \InParentheses{U\cup W} & = \bigcup_{j\in U\cup W}\alloc_i^{(j)}\InParentheses{q_i^j, \bid_{-i}^{(j)}}\\
    & = \InParentheses{\bigcup_{j\in U}\alloc_i^{(j)}\InParentheses{q_i^j, \bid_{-i}^{(j)}}} \bigcup \InParentheses{\bigcup_{j\in W}\alloc_i^{(j)}\InParentheses{q_i^j, \bid_{-i}^{(j)}}}\\
    & = \InParentheses{\alloc_i\InParentheses{q_i|_U, \bid_{-i}} \cap U} \cup \InParentheses{\alloc_i\InParentheses{q_i|_W, \bid_{-i}} \cap W}
    \end{split}
\end{align}

Therefore,
\begin{align*}
    \uti_i\InParentheses{\type_i, U\cup V} = & \uti_i\InParentheses{\type_i, U\cup W}\\
    = & \sup_{q_i}\ \E_{\substack{\type_{-i} \sim \dist_{-i}\\ \bid_{-i}\sim \str_{-i}\InParentheses{\type_{-i}}}}\InBrackets{{\val_i\InParentheses{t_i,\alloc_i\InParentheses{q_i, \bid_{-i}} \cap \InParentheses{U\cup W}} -  {\sum_{j\in U \cup W}\pay_i^{(j)}\InParentheses{q_i^{(j)}, \bid_{-i}^{(j)}}}}}\\
    \le & \sup_{q_i}\ \vast\{\E_{\substack{\type_{-i} \sim \dist_{-i}\\ \bid_{-i}\sim \str_{-i}\InParentheses{\type_{-i}}}}\InBrackets{{\val_i\InParentheses{t_i,\alloc_i\InParentheses{q_i|_U, \bid_{-i}} \cap U} -  {\sum_{j\in U}\pay_i^{(j)}\InParentheses{q_i^{(j)}, \bid_{-i}^{(j)}}}}}\\
    & + \E_{\substack{\type_{-i} \sim \dist_{-i}\\ \bid_{-i}\sim \str_{-i}\InParentheses{\type_{-i}}}}\InBrackets{{\val_i\InParentheses{t_i,\alloc_i\InParentheses{q_i|_W, \bid_{-i}} \cap W} -  {\sum_{j\in W}\pay_i^{(j)}\InParentheses{q_i^{(j)}, \bid_{-i}^{(j)}}}}}\vast\}\\
    \le & \sup_{q_i}\ \E_{\substack{\type_{-i} \sim \dist_{-i}\\ \bid_{-i}\sim \str_{-i}\InParentheses{\type_{-i}}}}\InBrackets{{\val_i\InParentheses{t_i,\alloc_i\InParentheses{q_i, \bid_{-i}} \cap U} - {\sum_{j\in U} \pay_i^{(j)}\InParentheses{q_i^{(j)}, \bid_{-i}^{(j)}}}}}\\
    & + \sup_{q_i} \E_{\substack{\type_{-i} \sim \dist_{-i}\\ \bid_{-i}\sim \str_{-i}\InParentheses{\type_{-i}}}}\InBrackets{{\val_i\InParentheses{t_i,\alloc_i\InParentheses{q_i, \bid_{-i}} \cap W} -  {\sum_{j\in W} \pay_i^{(j)}\InParentheses{q_i^{(j)}, \bid_{-i}^{(j)}}}}}\\
    = & \uti_i\InParentheses{\type_i, U} + \uti_i\InParentheses{\type_i, W}\\
    \le & \uti_i\InParentheses{\type_i, U} + \uti_i\InParentheses{\type_i, V}.
\end{align*}
{The} first inequality is by \eqref{eq:cupofalloc}, {the subadditivity of $\val_i$, and the fact that \(U\cap W = \varnothing\)}. The second inequality is from the property of {the} $\sup$ operator, and the third inequality is because $\uti_i$ is monotone. Thus, $\uti_i$ is subadditive.

Consider any constant $l_i$ {in the} definition of $\utih_i$. {By the monotonicity and subadditivity of $\uti_i$}, we can directly {conclude that $\utih_i$ is also monotone and subadditive}.

For any types $t_i,t_i'$, such that $t_{ij} = t_{ij}'$ for all $j\in S$, we know $S\cap L_i(\type_i) = S\cap L_i(\type_i')$, since for any $j\in S$, 
\[j\in L_i(\type_i) \Leftrightarrow V_i(\type_{ij}) < l_i \Leftrightarrow V_i(\type_{ij}') < l_i \Leftrightarrow j\in L_i(\type_i')\]

Hence {\[\utih_i(t_i, S) = \uti_i(t_i, S\cap L_i(t_i)) = \uti_i(t_i', S\cap L_i(t_i')) = \utih_i(t_i', S).\]}

Thus, {$\utih_i$} satisfies monotonicity, subadditivity and no externalities.

Finally, we prove {$\utih_i$} is $l_i$-Lipschitz.

For any $\type_i, \type_i'\in T_i$, and set $X,Y\subseteq[m]$, define set $H = \left\{j\in X\cap Y ~:~ \type_{ij} = \type_{ij}'\right\}$. Because of the {no externalities property} of $\utih_i$, we know $\utih_i(\type_i, H) = \utih_i(\type_i', H)$.

\begin{align*}
    \left|\utih_i\InParentheses{\type_i, X} - \utih_i\InParentheses{\type_i', Y} \right| & = {\max}\left\{\utih_i\InParentheses{\type_i, X} - \utih_i\InParentheses{\type_i', Y}, \utih_i\InParentheses{\type_i', Y} - \utih_i\InParentheses{\type_i, X} \right\}\\
    & \le {\max}\left\{\utih_i\InParentheses{\type_i, X} - \utih_i\InParentheses{\type_i', H}, \utih_i\InParentheses{\type_i', X} - \utih_i\InParentheses{\type_i, H}\right\}\\
    & \le {\max}\left\{\utih_i\InParentheses{\type_i, X\backslash H}, \utih_i\InParentheses{\type_i', Y\backslash H}\right\}\\
    & = {\max}\left\{\uti_i\InParentheses{\type_i, \InParentheses{X\backslash H}\cap L_i\InParentheses{\type_i}}, \uti_i\InParentheses{\type_i', \InParentheses{Y\backslash H}\cap L_i\InParentheses{\type_i'}}\right\}\\
    & \le l_i \cdot \max\left\{|X\backslash H|, |Y\backslash H|\right\}\\
    & \le l_i \cdot \InParentheses{\left|X\Delta Y\right| + \left|X\cap Y\right| - \left|H\right|}.
\end{align*}

\end{proof}

In the following, we show that for any \CEtuple~that satisfies the third condition, it also achieves a high welfare at the equilibrium $\str$. Let us define {$\mathrm{Wel}^{(\str)}_{\dist}(\auc)$ as the social welfare of auction $\auc$ at $\str$}, and $\OPT_i(\type)$ as the set of items allocated to bidder $i$ in the allocation that maximizes social welfare when the bidders' types are $\type$. We give a formal proof that the welfare at $\str$ is at least $c$ fraction of the optimal welfare: 
\begin{align*}
    \mathrm{Wel}^{(\str)}_{\dist}(\auc) &= \sum_{i\in [n]} \E_{\substack{\type\sim \dist\\ \bid\sim \str(\type)}}\InBrackets{u_i\InParentheses{\type_i, b}} + \rev_{\dist}^{(\str)}(\auc)\\
    & = \E_{\type\sim \dist} \InBrackets{\sum_{i\in [n]} \mu_i^{(\str)}(\type_i, [m]) + \rev_{\dist}^{(\str)}\left(\auc, [m]\right)}\\
    & \geq \E_{\type\sim \dist} \InBrackets{\sum_{i\in [n]} \left(\mu_i^{(\str)}(\type_i, \OPT_i(\type)) + \rev_{\dist}^{(\str)}(\auc, \OPT_i(\type)\right)} \\
    & \geq c\cdot \E_{\type\sim \dist} \InBrackets{\sum_{i\in [n]} \val_i(\type_i, \OPT_i(t))}
\end{align*}
{The second equation holds since $\str$ is a Bayes-Nash equilibrium.}
The first inequality comes from the monotonicity of $\mu_i^{(\str)}(\type_i,\cdot)$ which is proved in Lemma~\ref{lem:subadditiveofmu} and the second inequality directly follows from the third condition. 

\notshow{
Finally, we prove the that in the single-item setting, the notation proposed by Hartline et al.~\cite{jason} implies the thrid condition. For any single item mechanism $A$ and its equilibrium $s$ that is $(\eta, \mu)$-individual and competitive efficient, by definition, for any bidder $i$ with valuation $v_i$,

\[u_i^{(\str)}(\val_i) + \E\InBrackets{\hat{b}_i} \ge \eta\val_i,\]
where $\hat{b}_i$ is random threshold bid for bidder $i$, and 
\[\rev^{(\str)}(A) \ge \mu \cdot \E_{\substack{\val_{-i} \sim \dist_{-i}\\ \bid_{-i}\sim \str_{-i}\InParentheses{\val_{-i}}}}\InBrackets{\max_{i\in [n]}\left\{\hat{b}_i\right\}}.\]

Combining them directly we know $u_i^{(\str)}(\val_i) + \frac{1}{\mu}\rev^{(\str)}(A) \ge \eta\val_i$, and by $\mu\le 1$, we get $u_i^{(\str)}(\val_i) + \rev^{(\str)}(A) \ge \eta\mu\val_i$, which means that the third condition in \SAprop~ with $c=\eta\mu$ holds.
}

\subsection{Proof of Lemma~\ref{lem:reserveprice}}
\label{subsec:proofofreserveprice}
\begin{prevproof}{Lemma}{lem:reserveprice}
    Notice that by the first condition and {the} union bound, for any item $j$, the probability  that each bidder $i$'s {value} on item $j$ is smaller {than their} reserve price on item $j$, $r_{i,j}$, is at least $1-\sum_{i\in [n]}\Pr[V_{i}(t_{ij}) \ge r_{ij}]\ge 1-b$. Similarly, by the second condition, we know that for any bidder $i$, the probability that {their value} of any item $j$ is below the reserve price $r_{ij}$ is at least $\half$.

    We first prove that for any equilibrium $s$ of $\auc^{(r)}_{\mathrm{RP}}$, any bidder $i$ will always take {the null action $\perp$} when {their value} on this item is smaller than the reserved price. Suppose there exists a bid equilibrium that does not follow this. For any $j\in[m]$ let {$I_j=\{i:\Pr_{t_{i}, b_i\sim s_i(t_{i})}[V_i(t_{ij}) < r_{ij} \wedge b_i^j \ne \perp] > 0\}$} be the set of {bidders that have a non-zero probability to compete for item $j$  while their value is less than the reserve.} Assume that $I_k$ is non-empty {for some $k$}. {Consider the event that satisfies the following: (i) for any bidder $i\notin I_k$, $i$'s value on item $k$ is strictly less than $r_{ik}$; (ii) for any bidder $i\in I_k$, $V_i(t_{ik}) < r_{ik}$ and $b_i^k\neq \perp$. It is not hard to see that this event happens with non-zero probability. Conditioning on this event, the winner of item $k$ must be some bidder $i^*$ in $I_k$. We argue that $i^*$'s expected utility is strictly worse  compared to the scenario where their bids remain unchanged for other items, and $b_{i^*}^k$ is replaced with $\perp$. The reason is that $i^*$ has a subadditive valuation, so $i^*$'s  utility is strictly worse after acquiring item $k$ at a price larger than $V_{i^*}(t_{i^*k})$.}

    Now consider bidder $i$ with type $t_i$ satisfying two conditions (i) $V_i(t_{ij})\ge r_{ij}$, (ii) $\forall k \ne j, V_i(t_{ij}) < r_{ik}$. Then {, as we argued in the previous paragraph, $i$ will take the null action $\perp$} on items other than $j$. Now since bidding {$r_{ij}$} for item $j$ {will give $i$ a non-negative utility, $i$ will not bid $\perp$ for item $j$.} Further consider (iii) $\forall i'\ne i, V_{i'}(t_{i'j}) < r_{i'j}$ which {implies that any bidder other than $i$ bill bid $\perp$ for item $j$}. Then if all of (i), (ii), (iii) {holds}, bidder $i$ will {receive} item $j$. The probability of (ii) and (iii) {holds} is greater than $\half$ and $1-b$ by the first paragraph. Because conditions (i), (ii) and (iii) are independent, bidder $i$ wins item $j$ with probability at least $\frac{1-b}{2}\cdot \Pr[V_i(t_{ij})\ge r_{ij}]$. Thus the expected revenue of {the mechanism} is at least $\frac{1-b}{2}\cdot \sum_{i,j}r_{ij}\cdot \Pr[V_i(t_{ij})\ge r_{ij}]$.
    
\end{prevproof} 
\section{Missing Proofs in Section~\ref{sec:duality}}
\label{sec:dualityappendix}

\notshow{
\subsection{Proof of Lemma~\ref{lem:goodbeta}}
\label{subsec:goodbetaappendix}

We define
\[\beta_{ij} := \inf\left\{x\ge 0: \Pr_{t_{ij}}\InBrackets{V_i(t_{ij})\ge x}\le b\cdot \sum_{t_i\in T_i}f_i(t_i)\cdot \pi_{ij}(t_i)\right\},\] where $\pi_{i}(t_{ij}) = \sum_{S: j\in S}\sigma_{iS}(t_i)$. Since the distribution of $V_i(t_{ij})$ is continuous, $\Pr_{t_{ij}}\InBrackets{V_i(t_{ij})\ge \beta_{ij}}$ is exactly $b\cdot \sum_{t_i\in T_i}f_i(t_i)\cdot \pi_{ij}(t_i)$, and therefore for any $j\in [m]$,
\[\sum_i\Pr_{t_{ij}}\InBrackets{V_i(t_{ij})\ge \beta_{ij}} = b\cdot \sum_i\sum_{t_i\in T_i}f_i(t_i)\cdot \pi_{ij}(t_i) \le b.\]
So the first condition holds. The second condition is satisfied because by the first property in Lemma $2$ of \cite{cai_simple_2019}, $\sum_i\sum_{t_i\in T_i}f_i(t_i)\cdot \pi_{ij}^{(\beta)}(t_i) \le \sum_i\sum_{t_i\in T_i}f_i(t_i)\cdot \pi_{ij}(t_i)$.

}

\subsection{Proof of Lemma~\ref{lem:approxsingle}}
\label{subsec:singleappendix}

\begin{prevproof}{Lemma}{lem:approxsingle}
Our proof here is very similar to the proof of Lemma~13 in \cite{cai_simple_2017}. We introduce the single-dimensional copies setting defined in \cite{chawla_multi-parameter_2010}. In this setting, there are $nm$ agents, in which each agent $(i,j)$ has a value of $V_i(\type_{ij})$ of being served with $\type_{ij}$ sampled from $\dist_{ij}$ independently. The allocation must be a matching, meaning that for each $i\in [n]$, there is at most one $k\in [m]$ so that $(i,k)$ is served, and for each $j\in [m]$, there is at most one $k\in [n]$ so that $(k, j)$ is served. Fix the distribution $\dist$ and valuation function $V_i(\cdot)$, we denote the optimal BIC revenue in this setting as $\mathrm{OPT}^{\textsc{Copies-UD}}$. In \cite{cai_simple_2017}, they prove that for any $\sigma, \beta$, $\single\InParentheses{\sigma,\beta} \leq \mathrm{OPT}^{\textsc{Copies-UD}}$.  

For every $i\in[n], j\in [m]$, let $q_{ij}$ be the ex-ante probability that $(i,j)$ is served in the Myerson’s auction for the above copies settings. By definition, we have $\sum_j q_{ij}\le 1, \forall i\in[n]$ and $\sum_i q_{ij}\le 1, \forall j\in[m]$. 

The ironed virtual welfare contributed from $(i,j)$ is at most $\tilde{R}_{ij}(q_{ij})$, where $\tilde{R}_{ij}$ is the ironed revenue curve of $R_{ij}(q) = q\cdot F_{ij}^{-1}(1-q)${, where $F_{ij}$ is the CDF for the random variable $V_i(t_{ij})$, and $F_{ij}^{-1}$ is the corresponding quantile function}. Thus, there exist two quantiles $q_{ij}'$ and $q_{ij}''$, and a pair of corresponding convex representation coefficients $x_{ij}+y_{ij}=1$, such that $\tilde{R}_{ij}(q_{ij}) = x_{ij}\cdot R_{ij}(q_{ij}') + y_{ij}\cdot R_{ij}(q_{ij}'')$ and $q_{ij} = x_{ij}\cdot q_{ij}' + y_{ij}\cdot q_{ij}''$. Hence,
\begin{align}
    \begin{split}
    \label{eq:optud}
    \mathrm{OPT}^{\textsc{Copies-UD}} & \le \sum_{i,j}\tilde{R}_{ij}(q_{ij})\\
    & = \sum_{i,j}x_{ij}\cdot R_{ij}(q_{ij}') + y_{ij}\cdot R_{ij}(q_{ij}'')\\
    & \le 2\cdot \sum_{i, j} \InParentheses{x_{ij}\cdot \frac{q_{ij}'}{2}\cdot F_{ij}^{-1}\left(1-q_{ij}'/2\right) + y_{ij}\cdot \frac{q_{ij}''}{2}\cdot F_{ij}^{-1}\left(1-q_{ij}''/2\right)}\\
    & = 2\cdot \sum_{i, j} \E_{p_{ij}}\InBrackets{p_{ij}\cdot \Pr\InBrackets{V_{i}(t_{ij})\ge p_{ij}}}.
    \end{split}
\end{align}
$p_{ij}$ is a random price which equals to $F_{ij}^{-1}\left(1-q_{ij}'/2\right)$ with probability $x_{ij}$ and equals to $F_{ij}^{-1}\left(1-q_{ij}''/2\right)$ with probability $y_{ij}$. The second inequality here is because $F^{-1}(1-q) \le F^{-1}(1-q/2)$ for any CDF function $F$. To upper bound $\sum_{ij} \E_{p_{ij}}\InBrackets{p_{ij}\cdot \Pr\InBrackets{V_{i}(t_{ij})\ge p_{ij}}}$, we introduce an extension of lemma~ \ref{lem:reserveprice}.
\begin{lemma}
\label{lem:randomreserveprice}
    For a type distribution $\dist$, suppose simultaneous auction $\auc$ satisfies the first and second condition of \Cref{def:good}, and $\{r_{ij}\}_{i\in[n],j\in[m]}$ is a set of \textbf{independent random} prices that satisfy the following for some constant $b\in(0,1)$,
    \begin{itemize}
        \item[(1)] \(\sum_{i\in [n]}\Pr_{r_{ij}, t_{ij}}[V_{i}(t_{ij})\ge r_{ij}]\le b, \ \forall j\in [m] \);
        \item[(2)] \(\sum_{j\in[m]}\Pr_{r_{ij}, t_{ij}}[V_{i}(t_{ij})\ge r_{ij}]\le \half, \ \forall i\in [n] \).
    \end{itemize}

    Then for any Bayes-Nash equilibrium strategy profile $\str$ of simultaneous auction $\auc$ with independently randomized reserve price $r_{ij}$,
    \[\sum_{i,j}\E_{r_{ij}}\InBrackets{r_{ij}\cdot \Pr\left[V_i(t_{ij})\ge r_{ij}\right]} \le \frac{2}{1-b}\cdot \rev^{(s)}_{D}\left(\auc^{(r)}_{\mathrm{RP}}\right) \]
\end{lemma}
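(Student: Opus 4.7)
\medskip

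\textbf{Proof Proposal.} My plan is to mirror the proof of Lemma~\ref{lem:reserveprice} while carefully exploiting the mutual independence of the random prices $\{r_{ij}\}$, the types $\{t_{ij}\}$, and the fact that all conditions are stated in expectation over $r_{ij}$. The key observation is that conditions (1) and (2) in the statement are exactly the natural ``expected'' analogues of those in Lemma~\ref{lem:reserveprice}, so the three events that drove the original proof continue to behave well once we condition on the realizations of the reserve prices.

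First I would establish the analogue of Step~1 in the proof of Lemma~\ref{lem:reserveprice}: at any Bayes-Nash equilibrium $\str$ of $\auc^{(r)}_{\mathrm{RP}}$, if the realized reserve price for bidder $i$ on item $j$ is $\rho_{ij}$ and $V_i(t_{ij})<\rho_{ij}$, then bidder $i$ bids $\perp$ on item $j$. The argument is identical to that in Lemma~\ref{lem:reserveprice}: if some bidder $i^\ast$ were to compete on item $j$ with positive probability when $V_{i^\ast}(t_{i^\ast j})<\rho_{i^\ast j}$, then on the event that $i^\ast$ wins item $j$ at a payment strictly above $V_{i^\ast}(t_{i^\ast j})$, subadditivity of $v_{i^\ast}(t_{i^\ast},\cdot)$ implies that $i^\ast$ strictly improves by switching their bid on $j$ to $\perp$, contradicting equilibrium.

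Next I would follow the three-event decomposition. Fix bidder $i$ and item $j$, and consider the events
\begin{itemize}
\item[(i)] $V_i(t_{ij}) \geq r_{ij}$;
\item[(ii)] for all $k\neq j$, $V_i(t_{ik}) < r_{ik}$;
\item[(iii)] for all $i'\neq i$, $V_{i'}(t_{i'j}) < r_{i'j}$.
\end{itemize}
Because the $r_{ij}$'s are independent across bidders and items and the $t_{ij}$'s are as well, (i), (ii), (iii) are mutually independent. By condition~(2) and the union bound, $\Pr[\text{(ii)}] \geq 1/2$; by condition~(1) and the union bound, $\Pr[\text{(iii)}] \geq 1-b$. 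When all three occur, the first step forces every $i'\neq i$ to bid $\perp$ on item $j$, forces $i$ to bid $\perp$ on every $k\neq j$, and leaves $i$ willing to bid at least $r_{ij}$ on item $j$ (since this yields non-negative utility), so $i$ wins item $j$ and pays at least $r_{ij}$.

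Finally, I would condition on $r_{ij}=\rho$ and use independence to write
\begin{align*}
\E[r_{ij}\cdot \ind(\text{(i)}\wedge\text{(ii)}\wedge\text{(iii)})\mid r_{ij}=\rho]
\;\geq\; \rho \cdot \Pr[V_i(t_{ij})\geq \rho]\cdot \tfrac{1-b}{2},
\end{align*}
then take expectation over $r_{ij}$ and sum over $i,j$ to obtain
\[
\rev^{(s)}_{D}\bigl(\auc^{(r)}_{\mathrm{RP}}\bigr)\;\geq\; \tfrac{1-b}{2}\sum_{i,j}\E_{r_{ij}}\bigl[r_{ij}\cdot \Pr[V_i(t_{ij})\geq r_{ij}]\bigr],
\]
which rearranges to the claim. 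I expect the main delicate point to be the bookkeeping in the last step: one must be careful that the independence invoked across (i)--(iii) indeed holds after conditioning on $r_{ij}$, which is where the assumption that the $r_{ij}$'s are \emph{independent} (rather than merely marginally random) does the real work. Everything else is a direct translation of the deterministic-price argument.
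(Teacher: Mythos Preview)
Your proposal is correct and follows essentially the same approach as the paper's proof: both adapt the three-event argument from Lemma~\ref{lem:reserveprice} to the randomized setting, using the mutual independence of the $(t_{ij},r_{ij})$ pairs across distinct $(i,j)$ to bound $\Pr[\text{(ii)}]\geq 1/2$ and $\Pr[\text{(iii)}]\geq 1-b$, and then conclude that whenever (i)--(iii) hold bidder $i$ wins item $j$ and pays at least $r_{ij}$. The only cosmetic difference is that the paper conditions on (ii) and (iii) first and then takes expectation over $(r_{ij},t_{ij})$, whereas you condition on $r_{ij}$ first; both routes give the same $(1-b)/2$ factor.
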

    To be more clear, the simultaneous mechanism with randomized reserve price, $\auc^{(r)}_{\mathrm{RP}}$, is defined to be the mechanism that first publicly independently draws $r_{ij}$ for each $i\in [n]$ and $j\in [m]$, and then implements the simultaneous auction $\auc$ with realized reserve prices $r_{ij}$'s. This is a distribution of simultaneous auctions with deterministic reserved prices, and thus its expected revenue is the expectation of these deterministic reserve prices auctions and is not larger than $\rprev$.

\begin{proof}
    For the similar reason in Lemma~ \ref{lem:reserveprice}, by the first condition, for any item $j$, the probability 
    that each bidder's value on item $j$ is smaller {than their} reserve price on item $j$, $r_{ij}$, is at least $1-b$. By the second condition, we know that, for any bidder $i$, the probability of {their} value for every item $j$ is below the reserve price $r_{ij}$ is at least $\half$. Moreover, using a similar argument as in Lemma~ \ref{lem:reserveprice}, we can show that at any equilibrium, any bidder whose value on an item is smaller than its reserve price will take the null action $\perp$ on that item.
    
    Consider bidder $i$ with type $t_i$ satisfying two conditions (i) $V_i(t_{ij}) \ge r_{ij}$, (ii) $\forall k \ne j, V_i(t_{ij}) < r_{ik}$. Then $i$ must bid $\perp$ on items other than $j$. Thus bidding $r_{ij}$ on item $j$ will lead to a non-negative utility $V_i(t_{ij}) - r_{ij}$ which is better than bidding $\perp$ on item $j$.

    We introduce the third condition (iii) $\forall i'\ne i, V_{i'}(t_{i'j}) < r_{i'j}$. Given that both conditions (ii) and (iii) are satisfied, as discussed in the preceding paragraph, bidder \( i \) will bid at least \( r_{ij} \) on item \( j \) whenever their value of item \( j \) is not less than the reserve price \( r_{ij} \), and will subsequently secure item \( j \).  Hence, the expected revenue from bidder $i$'s payment on item $j$ is at least $\E_{r_{ij}}\left[r_{ij}\cdot \Pr[V_i(t_{ij})\ge r_{ij}]\right]$. Since (ii) and (iii) are independent events, the joint probability of both conditions being satisfied is at least 
$\frac{1-b}{2}$. Consequently, the expected revenue generated by the randomized mechanism $\auc^{(r)}_{\mathrm{RP}}$ is at least $\frac{1-b}{2}\cdot \sum_{i,j}\E_{r_{ij}}\left[r_{ij}\cdot \Pr[V_i(t_{ij})\ge r_{ij}]\right]$.
    
\end{proof}

Since for any $j\in [m]$, $\sum_i\Pr\InBrackets{V_i(t_{ij})\ge p_{ij}} = \sum_i x_{ij}\cdot (q_i'/2) + y_{ij}\cdot (q_i''/2) = \sum_i q_{ij}/2 \le 1/2$ and for any $i\in [n]$, $\sum_j\Pr\InBrackets{V_i(t_{ij})\ge p_{ij}} = \sum_i x_{ij}\cdot (q_i'/2) + y_{ij}\cdot (q_i''/2) = \sum_j q_{ij}/2 \le 1/2$, we can apply lemma~ \ref{lem:randomreserveprice} to show $\sum_{i,j} \E_{p_{ij}}\InBrackets{p_{ij}\cdot \Pr\InBrackets{V_{i}(t_{ij})\ge p_{ij}}} \le 4\cdot \rev\InParentheses{\auc^{(p)}_{\mathrm{RP}}}$. Combining this with \eqref{eq:optud} and $\rev\InParentheses{\auc^{(p)}_{\mathrm{RP}}} \le \rprev$, we have proved the statement of our lemma.
\end{prevproof}

\subsection{Proof of Lemma~\ref{lem:approxtail}}
\label{subsec:tailappendix}
\begin{prevproof}{Lemma}{lem:approxtail}
    Let 
\[P_{ij} \in \text{argmax}_{x\ge c_i} \InParentheses {x + \beta_{ij}}\cdot \Pr_{\type_{ij}}\InBrackets{V_i\InParentheses{\type_{ij}} - \beta_{ij}\ge x},\]
and define 
\[r_{ij} := \InParentheses{P_{ij}+\beta_{ij}}\cdot \Pr_{\type_i}\InBrackets{V_i\InParentheses{\type_{ij}} - \beta_{ij}\ge P_{ij}} = \max_{x\ge c_i}\ \InParentheses{x+\beta_{ij}}\cdot \Pr_{\type_i}\InBrackets{V_i\InParentheses{\type_{ij}} -\beta_{ij}\ge x}, \]
$r_i = \sum_{j}r_{ij}$, and $r = \sum_i r_i$. We below show that $\tail(\beta)$ is upper bounded by $r$.

\begin{align}
    \begin{split}  
    \label{eq:tail}
    \tail(\beta) \le & \sum_i\sum_j\sum_{t_{ij}:V_i(t_{ij})\ge \beta+c_i}  f_i(\type_{ij}) \cdot \InParentheses{\beta_{ij} + c_i} \cdot \sum_{k\ne j}\Pr_{\type_{ik}}\InBrackets{V_i(\type_{ik}) - \beta_{ik} \ge V_i(\type_{ij}) - \beta_{ij}}\\
    & + \sum_i\sum_j\sum_{t_{ij}:V_i(t_{ij})\ge \beta+c_i}  f_i(\type_{ij}) \InParentheses{V_i(\type_{ij}) - \beta_{ij}} \cdot \sum_{k\ne j}\Pr_{\type_{ik}}\InBrackets{V_i(\type_{ik}) - \beta_{ik} \ge V_i(\type_{ij}) - \beta_{ij}}\\
    \le & \half \cdot \sum_i\sum_j\sum_{t_{ij}:V_i(t_{ij})\ge \beta+c_i}  f_i(\type_{ij}) \cdot \InParentheses{\beta_{ij} + c_i}
     + \sum_i\sum_j\sum_{t_{ij}:V_i(t_{ij})\ge \beta+c_i}  f_i(\type_{ij}) \cdot \sum_{k\ne j}r_{ik}\\
    \le & \half\sum_i\sum_j \Pr_{\type_{ij}}\InBrackets{V_i(\type_{ij})\ge \beta_{ij} + c_i} \cdot \InParentheses{\beta_{ij} + c_i} + \sum_i r_i\cdot \sum_j \Pr_{\type_{ij}}\InBrackets{V_i(\type_{ij})\ge \beta_{ij} + c_i}\\
    \le & \half\cdot \sum_i\sum_j r_{ij} + \half\cdot \sum_i r_i\\
    = & r.
    \end{split} 
\end{align}
In the second inequality, the first term is by $V_i(\type_{ij}) - \beta_{ij}\ge c_i$, so 
\[\sum_{k\ne j}\Pr_{\type_{ik}}\InBrackets{V_i(\type_{ik}) - \beta_{ik} \ge V_i(\type_{ij}) - \beta_{ij}} \le \sum_{k\ne j}\Pr_{\type_{ik}}\InBrackets{V_i(\type_{ik}) - \beta_{ik} \ge c_i}\le \half.\]
The second term is because $V_i(\type_{ij}) - \beta_{ij}\ge c_i$ and definiton of $r_{ik}$,
\[\InParentheses{V_i(\type_{ij}) - \beta_{ij}} \cdot \sum_{k\ne j}\Pr_{\type_{ik}}\InBrackets{V_i(\type_{ik}) - \beta_{ik} \ge V_i(\type_{ij}) - \beta_{ij}} \le \InParentheses{\beta_{ik} + V_i(\type_{ij}) - \beta_{ij}} \cdot \sum_{k\ne j}\Pr_{\type_{ik}}\InBrackets{V_i(\type_{ik}) - \beta_{ik} \ge V_i(\type_{ij}) - \beta_{ij}} \le r_{ik}.\]
As $P_{ij}\ge c_i$ and definition of $c_i$, $\left\{\beta_{ij}+P_{ij}\right\}_{i\in[n], j\in[m]}$ satisfies the conditions in lemma~\ref{lem:reserveprice}. Thus,
\[r = \sum_{i}\sum_{j}\InParentheses{\beta_{ij} + P_{ij}} \cdot \Pr_{\type_i}\InBrackets{V_i\InParentheses{\type_{ij}} \ge \beta_{ij} + P_{ij}} \le \frac{2}{1-b} \cdot \rprev.\]
Then our statement follows from this inequality and \eqref{eq:tail}.
\end{prevproof}

\section{Approximate Revenue Monotonicity}
\label{appendix:revmono}

\begin{theorem}
\label{thm:revmono}
Let \(\{v_i\}_{i \in [n]}\) be a set of valuation functions satisfying the properties of monotonicity, subadditivity, and no externalities. Consider two distributions, denoted by \(\dist= \bigtimes_{i\in[n]}\dist_i= \bigtimes_{i\in[n],j\in[m]}\dist_{ij}\) and \(\dist'= \bigtimes_{i\in[n]}\dist'_i= \bigtimes_{i\in[n],j\in[m]}\dist'_{ij}\), such that for each \(i\), distribution \(D_i'\) stochastically dominates distribution \(D_i\) with respect to valuation function \(v_i\). Specifically, there exists a coupling \((t_i, t_i')\) such that: (i) \(v_i(t_i,S) \leq v_i(t_i',S)\) for all \(S\subseteq [m]\), and (ii) the marginal distributions over \(t_i\) and \(t'_i\) correspond to \(D_i\) and \(D_i'\), respectively. Then, the following inequality holds:
\[\OPT(D') \geq \frac{1}{{229}
} \cdot \OPT(D).\]
\end{theorem}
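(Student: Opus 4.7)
The plan is to adapt Yao's approach \cite{yao_revenue_2017} for XOS bidders to the subadditive setting by combining the constant-factor approximation from Theorem \ref{thm:main} with FOSD-monotonicity of the simple-mechanism revenues that appear in the upper bound on $\OPT(D)$.

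I would start from the duality-based decomposition in Lemma \ref{lem:RevUpper}: for parameters $\beta$ satisfying \eqref{eq:beta1} and \eqref{eq:beta2} with respect to the optimal BIC mechanism on $D$, and the associated interim allocation $\sigma^{(\beta)}$,
\[
\OPT(D) \le 2\single(\sigma^{(\beta)}, \beta) + 4\tail(\beta) + 4\core(\sigma^{(\beta)}, \beta).
\]
I would then show that each of the three terms is bounded by a constant times $\OPT(D')$ by leveraging the coupling $(t_i, t'_i)$ that provides $v_i(t_i, S) \le v_i(t'_i, S)$ for all $S$ (and in particular $V_i(t_{ij}) \le V_i(t'_{ij})$ for each item $j$), so that the marginal distribution $F'_{ij}$ of $V_i(t'_{ij})$ first-order stochastically dominates $F_{ij}$.

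For $\single$: the proof of Lemma \ref{lem:approxsingle} bounds it by Myerson's revenue in the copies setting, which depends only on the marginals $F_{ij}$; since the ironed revenue curves $\tilde{R}_{ij}(\cdot)$ are pointwise monotone under FOSD, $\single(\sigma^{(\beta)}, \beta) \le \mathrm{MyersonCopies}(D) \le \mathrm{MyersonCopies}(D') \le 8\,\rprev(D') \le 8\,\OPT(D')$, where the last two inequalities come from rerunning Lemma \ref{lem:approxsingle} on $D'$. For $\tail$: the proof of Lemma \ref{lem:approxtail} shows $\tail(\beta) \le \sum_{i,j} r_{ij}\Pr_D[V_i(t_{ij}) \ge r_{ij}]$ for specific prices $r_{ij}$; I would quantile-match these to $D'$ by taking $r'_{ij}$ with $\Pr_{D'}[V_i(t'_{ij}) \ge r'_{ij}] = \Pr_D[V_i(t_{ij}) \ge r_{ij}]$, which satisfies $r'_{ij} \ge r_{ij}$ by FOSD and still meets the conditions of Lemma \ref{lem:reserveprice} on $D'$, giving $\tail(\beta) \le O(\rprev(D')) \le O(\OPT(D'))$. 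For $\core$: apply the split $\core = \coreh + (\core - \coreh)$; the gap is bounded by reserve-price revenue via Lemma \ref{lem:coregap} and monotonized as in the $\tail$ case. The truncated part $\coreh$ is controlled by $\sum_i \E[\utih_i(t_i, [m])]$ through Lemma \ref{lem:utih}; the truncated utilities $\utih_i$ are monotone under the coupling because $v_i(t_i, S) \le v_i(t'_i, S)$, so their medians under $D'$ dominate those under $D$, and Lemma \ref{lem:entryfee} then bounds $\coreh$ in terms of entry-fee revenue on $D'$, which is at most $\OPT(D')$.

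The main technical hurdle is that some of the intermediate bounds naturally involve Bayes-Nash equilibrium revenues of simultaneous auctions, which do not vary monotonically with the type distribution in general. This is sidestepped by working at the level of the explicit structural quantities that appear in the proofs of Lemmas \ref{lem:approxsingle}--\ref{lem:sumtau}---ex-ante reserve-crossing probabilities, Myerson copies-setting revenue, and medians of truncated utilities---all of which admit a direct FOSD monotonicity argument via the coupling, bypassing any need to relate equilibria of SFA on $D$ and on $D'$. Tracking the constants through Theorem \ref{thm:main} with $c = \tfrac12$, $b = \tfrac15$, and $\delta, \varepsilon$ set as in the proof of Theorem \ref{thm:main}, the resulting approximation factor comes out to at most $229$.
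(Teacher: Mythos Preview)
Your overall strategy matches the paper's: decompose $\OPT(D)$ via Lemma~\ref{lem:RevUpper}, use FOSD-monotonicity of the reserve-price/Myerson quantities to push the $\single$, $\tail$, and $\core-\coreh$ terms to $D'$, and handle $\coreh$ through entry-fee revenue on $D'$ via the coupling. The paper packages the reserve-price monotonicity slightly differently (defining $\prev(\cdot)$ and showing $\prev(D)\le\prev(D')$ directly rather than quantile-matching per lemma), but your route is equivalent.

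There is one genuine gap in the $\coreh$ step. You claim that ``the truncated utilities $\utih_i$ are monotone under the coupling because $v_i(t_i,S)\le v_i(t'_i,S)$.'' This is false as stated. Recall $\utih_i(t_i,[m])=\uti_i(t_i,Y_i(t_i))$ with $Y_i(t_i)=\{j:V_i(t_{ij})<\tau_i\}$. Under the coupling $V_i(t'_{ij})\ge V_i(t_{ij})$, so $Y_i(t'_i)\subseteq Y_i(t_i)$; the truncation set \emph{shrinks} for the higher type, and $\utih_i(t'_i,[m])=\uti_i(t'_i,Y_i(t'_i))$ can be strictly smaller than $\utih_i(t_i,[m])$ (e.g., if every item crosses $\tau_i$ under $t'_i$ then $\utih_i(t'_i,[m])=0$). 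Consequently, the inference ``medians under $D'$ dominate those under $D$'' does not follow, and the entry-fee argument breaks.

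The fix, which is what the paper does, is to compare the \emph{untruncated} utility at $t'_i$ against the truncated utility at $t_i$: one has $\uti_i^{(s')}(t'_i,[m])\ge \uti_i^{(s')}(t_i,[m])\ge \utih_i^{(s')}(t_i,[m])$, where $s'$ is a Bayes-Nash equilibrium of S1A on $D'$. Setting $e_i$ to the median of $\utih_i^{(s')}(t_i,[m])$ over $t_i\sim D_i$, the coupling gives $\Pr_{t'_i\sim D'_i}[\uti_i^{(s')}(t'_i,[m])\ge e_i]\ge\tfrac12$, so $\efrev^{(s')}_{D'}(\SFA)\ge\tfrac12\sum_i e_i$. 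Note this also forces you to anchor the $c$-efficiency inequality at the $D'$-equilibrium $s'$ (so the revenue term in the analogue of Lemma~\ref{lem:utih} becomes $\rev^{(s')}_{D'}(\SFA)$), which is why you \emph{cannot} fully ``bypass'' equilibria on $D'$; you just never need to compare them to equilibria on $D$.
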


\begin{proof}
We define $\prev$ as follows for distribution $F=\bigtimes_{i\in[n]} F_i=\bigtimes_{i\in[n], j\in[m]} F_{ij}$,
\[\prev(F) := \max_{b}\max_{r\in R(b)}\frac{1-b}{2}\sum_{i,j}\E_{r_{ij}}\InBrackets{r_{ij}\cdot \Pr_{t_{ij}\sim F_{ij}} \InBrackets{V_i(t_{ij})\ge r_{ij}}},\]
where $r=\{r_{ij}\}_{i\in[n],j\in[m]}$ and we use $R(b)$ to denote the set of reserve prices (possibly random) $r_{ij}$'s that satisfies the two conditions in Lemma~\ref{lem:randomreserveprice}, i.e., (1) $\sum_{i\in [n]}\Pr_{r_{ij}, t_{ij}\sim F_{ij}}[V_{i}(t_{ij})\ge r_{ij}]\le b$, $\forall j\in [m]$; (2) $\sum_{j\in[m]}\Pr_{r_{ij}, t_{ij}\sim F_{ij}}[V_{i}(t_{ij})\ge r_{ij}]\le \half$, $\forall i\in [n]$.

An easy fact is that $\prev(D')\ge \prev(D)$, because for any $i\in [n]$, $j\in [m]$ and $r_{ij}\ge 0$, there exists $r_{ij}'\ge 0$ such that $\Pr_{t_{ij}\sim D'_{ij}} \InBrackets{V_i(t_{ij})\ge r_{ij}'} = \Pr_{t_{ij}\sim D_{ij}} \InBrackets{V_i(t_{ij})\ge r_{ij}}$, and $r_{ij}'$ is greater than $r_{ij}$ as $D_{ij}$ is stochastically dominated by $D_{ij}'$.

By Lemma~\ref{lem:RevUpper}, and the proof of Lemma~\ref{lem:approxsingle}, Lemma~\ref{lem:approxtail} and Lemma~\ref{lem:coregap}, we know for any $b$,
\begin{equation}
\label{eq:optbound}
    \OPT(D) \le 4\cdot \coreh + \left(\frac{16b+8}{b(1-b)} + 16\right) \cdot \prev(D),
\end{equation}

where 
    \[\coreh = \sum_i\sum_{t_i\in T_i}\sum_{S\subseteq [m]}f_i(t_i)\sigma_{iS}(t_i)\val_i\InParentheses{\type_i,S\cap Y_i(\type_i)}. \] Here $f_i$ is the density function of $D_i$, and $Y_i(t_i) = \{j: V_i(t_{ij}) < \tau_i\}$, where $\{\tau_i\}_{i\in [n]}$ satisfies that $\sum_i \tau_i\le \frac{4}{1-b}\cdot \prev(D)$.

Let $s'$ be a Bayes-Nash equilibrium of simultaneous first price auction $\SFA$ w.r.t. type distribution $D'$ and valuation functions $\{v_i\}_{i\in[n]}$. Following Definition~\ref{def:good}, we define $\uti_i^{(s')}(t_i, S)$ to be the optimal interim utility of bidder $i$ with type $t_i$, when (a) all other bidders with type distributions $D_{-i}'$ bid according to $s_{-i}'$ and (b) they can only participate in the competition for items in $S$. Formally, 

\[\uti_i^{(s')}\InParentheses{t_i, S} = \sup_{q_i}\ \E_{\substack{\type_{-i}' \sim \dist_{-i}'\\ \bid_{-i}\sim \str_{-i}'\InParentheses{\type_{-i}'}}}\InBrackets{{\val_i\InParentheses{t_i,\alloc_i\InParentheses{q_i, \bid_{-i}} \cap S} -  {\sum_{j\in S}\pay_i^{(j)}\InParentheses{q^{(j)}_i, \bid^{(j)}_{-i}}}}}.\]

Now, by Lemma~\ref{lem:firstprice}, $(\SFA, {s'}, \dist, \{v_i\}_{i\in[n]})$ is $\frac12$-efficient, which means
\[\uti_i^{(s')}(t_i,S) + \rev^{(s')}_{D'}\InParentheses{\SFA, S}\ge \frac12 \cdot v_i(t_i, S).\]
By Lemma~\ref{lem:subadditiveofmu}, we know that $\utih_i^{(s')}(t_i,S) = \uti_i^{(s')}(t_i, S\cap Y_i(t_i))$ satisfies monotonicity, subadditivity, and no externalities. Similar to the proof of Lemma~\ref{lem:utih}, we can lower bound $\sum_i \E_{\type_i \sim \dist_i}\InBrackets{\utih_i^{(s')}\InParentheses{\type_i,[m]}}$, 
\begin{align*}
    \sum_i \E_{\type_i \sim \dist_i}\InBrackets{\utih_i^{(s')}\InParentheses{\type_i,[m]}} & \ge \sum_i \E_{\type_i \sim \dist_i} \InBrackets{\sum_{S\subseteq [m]} \sigma_{iS} \InParentheses{\type_i}\uti_i^{(s')}\InParentheses{\type_i,S\cap Y_i \InParentheses{\type_i}}}\\
    & \ge \sum_i \E_{\type_i \sim \dist_i}\InParentheses{\sum_{S\subseteq [m]} \sigma_{iS}\InParentheses{\type_i} \InParentheses{\half\cdot \val_i\InParentheses{\type_i, S\cap Y_i\InParentheses{\type_i}} - {\rev^{(s')}_{D'}\InParentheses{\SFA,S\cap Y_i\InParentheses{\type_i}}}}}\\
    & \ge \half\cdot \sum_i\sum_{\type_i \in T_i}\sum_{S\subseteq [m]} f_i\InParentheses{\type_i}\sigma_{iS}\InParentheses{\type_i}\val_i\InParentheses{\type_i, S\cap Y_i\InParentheses{\type_i}} - \sum_i\sum_{\type_i \in T_i}\sum_{S\subseteq [m]}f_i\InParentheses{\type_i}\sigma_{iS}\InParentheses{\type_i}\rev_{D'}^{(s')}\InParentheses{{\SFA,}S}\\
    & = \half\cdot \coreh - \sum_j \rev_{D'}^{(s')}\InParentheses{\SFA,\left\{j\right\}}\sum_i\sum_{t_i}f_i\InParentheses{\type_i}\sum_{S:j\in S}\sigma_{iS}\InParentheses{S}\\
    & = \half\cdot \coreh - \sum_j \rev_{D'}^{(s')}\InParentheses{\SFA,\left\{j\right\}}\sum_i\sum_{t_i}f_i\InParentheses{\type_i}\pi_{ij}\InParentheses{\type_i}\\
    & \ge \half\cdot \coreh - \rev_{D'}^{(s')}\InParentheses{\SFA, [m]}.
\end{align*}

And similar to the proof of Lemma~\ref{lem:entryfee}, let $e_i$ be the median of $\utih_i^{(s')}(t_i, [m])$ when $t_i$ is sampled from $D_i$. Since $\utih_i\InParentheses{\cdot, \cdot}$ is subadditive over independent items and  $\tau_i$-Lipschitz, we could apply Lemma~\ref{lem:concentrate} to get
\[\E_{\type_i \sim \dist_i}\InBrackets{\utih_i^{{(s')}}\InParentheses{\type_i, [m]}}\le 2e_i + \frac52 \cdot \tau_i.\]

Now consider drawing a sample $(t_i, t_i')$ from the joint distribution as described in the statement. Since $\val_i(t_i', S) \ge \val_i(t_i, S)$ for all $S\subseteq [m]$, the interim utility of bidder $i$ with type $t_i'$ is greater than $\uti_i^{(s')}(t_i, [m])$. And monotonicity of $\uti_i^{(s')}$ implies that $\uti_i^{(s')}\InParentheses{t_i, [m]} \ge \utih_i^{(s')}\InParentheses{t_i, [m]}$. Therefore, the interim utility of bidder $i$ with type $t_i'$ where $t_i'$ is sampled from $D_i'$ stochastically dominates the $\utih_i^{(s')}\InParentheses{t_i, [m]}$ where $t_i$ is from $D_i$. Thus, if we set the entry fee as $e_i$, i.e.,  the median of $\utih_i^{(s')}(t_i, [m])$, the probability that bidder $i$ from distribution $D_i'$ pays the entry fee is at least $1/2$. Thus
\[\efrev^{(s')}_{D'}\InParentheses{\SFA} \geq \sum_i e_i\Pr_{\type_i' \sim \dist_i'}\left[\uti_i^{(s')}\InParentheses{\type_i', [m]} \ge e_i\right] \ge \half\cdot \sum_i e_i.\]

Combining the two inequalities above, we know
\begin{align*}
    \sum_i \E_{\type_i \sim \dist_i}\InBrackets{\utih_i^{(s')}\InParentheses{\type_i,[m]}} & \le 2\sum_i e_i + \frac52\sum_i\tau_i\\
    & \le 4\cdot\efrev^{(s')}_{D'}\InParentheses{\SFA} + \frac52\sum_i\tau_i.
\end{align*}

By the obtained lower and upper bound of $\sum_i \E_{\type_i \sim \dist_i}\InBrackets{\utih_i^{(s')}\InParentheses{\type_i,[m]}}$, we have
\begin{align*}
    \coreh & \le 8\cdot\efrev^{(s')}_{D'}\InParentheses{\SFA} + 2\cdot \rev_{D'}^{(s')}\InParentheses{\SFA} + 5\cdot \sum_i\tau_i\\
    & \le 8\cdot\efrev^{(s')}_{D'}\InParentheses{\SFA} + 2\cdot \rev_{D'}^{(s')}\InParentheses{\SFA} + \frac{20}{1-b}\cdot \prev(D).
\end{align*}

Plugging this into \eqref{eq:optbound}, and taking $b = \frac15$,
\begin{align*}
    \OPT(D) & \le 32\cdot\efrev^{(s')}_{D'}\InParentheses{\SFA} + 8\cdot \rev_{D'}^{(s')}\InParentheses{\SFA} + 186\cdot \prev(D)\\
    & \le 32\cdot\efrev^{(s')}_{D'}\InParentheses{\SFA} + 8\cdot \rev_{D'}^{(s')}\InParentheses{\SFA} + 186\cdot \rprev(D')\\
    & \le 42\cdot \rev^{(s')}_{\dist'}\InParentheses{\mathrm{S1A}^{(e)}_{\mathrm{EF}}} + {187\cdot \OPT(D')}\\
    & \le {229}\cdot \OPT(D').
\end{align*}
The second inequality is due to $\prev(D)\le \prev(D')$ and $\prev(D')\le \rprev(D')$ by Lemma~\ref{lem:randomreserveprice}. The third inequality is by lemma~\ref{lem:entryrev}.

\end{proof}
 
\end{document}